\def\eps{\varepsilon}
\def\L{{\cal L}}
\def\theoremknight{3.1 }
\def\exampleknight{3.5}
\newcommand{\qmbox}[1]{\quad\mbox{#1}\quad}
\def\phipin{\phi_{\rm pin}}
\def\Lpin{{\mathcal L}_{\rm pin}}
\def\L0{{\mathcal L}_0}
\def\bJ{{\bf J}}
\def\sech{{\rm sech}}
\def\eps{\varepsilon}
\def\iphi{\phi_{\rm in}}
\def\ophi{\phi_{\rm out}}
\def\ip{p_{\rm in}}
\def\op{p_{\rm out}}
\def\calL{{\mathcal L}}
\def\phimax{\phi_{\rm max}}
\def\mijnlim#1{\mathop{\rm lim}\limits_{#1}}
\newenvironment{arrayl}%
        {\renewcommand{\arraystretch}{1.5}
        \begin{array}{@{}r@{\hskip\arraycolsep}c@{\hskip\arraycolsep}l}}%
        {\end{array}\renewcommand{\arraystretch}{1}}
\newcommand{\beq}{\begin{equation}}
\newcommand{\eeq}{\end{equation}}
\newcommand{\eqref}[1]{(\ref{#1})}
\newtheorem{theorem}{Theorem}
\newtheorem{lemma}[theorem]{Lemma}
\newtheorem{corollary}[theorem]{Corollary}
\newenvironment{keywords}{\textbf{\small Keywords}\small:}{.}
\newenvironment{AMS}{\textbf{\small AMS subject %
    classifications}\small:}{.}
\newenvironment{proof}{%
        \begin{trivlist}\item[]{\bf Proof}\,\,}{%
        \hspace*{\fill} $\Box$\end{trivlist}}
\newenvironment{proofof}[1]{%
        \begin{trivlist}\item[]{\bf Proof of #1}\,\,}{%
        \hspace*{\fill} $\Box$\end{trivlist}}
\newtheorem{remark}[theorem]{Remark}
\title{\textbf{Pinned fluxons in a Josephson junction with a finite-length inhomogeneity}\thanks{G. Derks acknowledges a
    visitor grant of the Dutch funding agency NWO and the NWO-mathematics
    cluster NDNS${}^+$ and the hospitality of the CWI. }}
\author{
Gianne Derks\thanks{Department of Mathematics,
University of Surrey, Guildford, Surrey, GU2 7XH
({\tt g.derks@surrey.ac.uk})},
\and
Arjen Doelman\thanks{Mathematisch Instituut,
Leiden University,
P.O. Box 9512,
2300 RA Leiden,
the Netherlands
({\tt doelman@math.leidenuniv.nl})}
\and
Christopher J.K.\ Knight\thanks{Department of Mathematics,
University of Surrey, Guildford, Surrey, GU2 7XH
({\tt christopher.knight@surrey.ac.uk})},
\and
\and
Hadi Susanto\thanks{School of Mathematical Sciences, University of Nottingham, University Park, Nottingham, NG7 2RD ({\tt hadi.susanto@math.nottingham.ac.uk})}}
\begin{document}

\maketitle

\begin{abstract}
We consider a Josephson junction system installed with a finite length 
inhomogeneity, either of microresistor or of microresonator type. The 
system can be modelled by a sine-Gordon equation with a 
piecewise-constant function to represent the varying Josephson tunneling 
critical current. The existence of pinned fluxons depends on the length 
of the inhomogeneity, the variation in the Josephson tunneling critical 
current and the applied bias current. We establish that a system may 
either not be able to sustain a pinned fluxon, or -- for instance by 
varying the length of the inhomogeneity -- may exhibit various different 
types of pinned fluxons. Our stability analysis shows that changes of 
stability can only occur at critical points of the length of the 
inhomogeneity as a function of the (Hamiltonian) energy density inside 
the inhomogeneity -- a relation we determine explicitly. In combination 
with continuation arguments and Sturm-Liouville theory, we determine the 
stability of all constructed pinned fluxons. It follows that if a given 
system is able to sustain at least one pinned fluxon, there is exactly 
one stable pinned fluxon, i.e. the system selects one unique stable 
pinned configuration. Moreover, it is shown that both for microresistors 
and microresonators this stable pinned configuration may be 
non-monotonic -- something which is not possible in the homogeneous 
case. Finally, it is shown that results in the literature on localised 
inhomogeneities can be recovered as limits of our results on 
microresonators.
\end{abstract}

\begin{keywords}
Josephson junction, inhomogeneous sine-Gordon equation, pinned fluxon,
stability
\end{keywords}

\begin{AMS}
34D35, 
35Q53, 
37K50 
\end{AMS}


\section{Introduction}\label{sec.intro}

In this paper we consider a sine-Gordon-type equation
describing the gauge invariant phase difference of a long Josephson
junction
\begin{equation}
\phi_{tt} = \phi_{xx} - D\sin(\phi)+ \gamma - \alpha \phi_t,
\label{eq.junction}
\end{equation}
where $x$ and $t$ are the spatial and temporal variable respectively;
$\phi(x,t)$ is the Josephson phase difference of the junction;
$\alpha>0$ is the damping coefficient due to normal electron flow
across the junction; and $\gamma$ is the applied bias current. The
parameter $D$ represents the Josephson tunneling critical current,
which can  vary as a function of the spatial variable.

When $D$ is constant (without loss of generality, we can take $D=1$)
and there is no imposed current and dissipation, i.e.,
$\gamma=\alpha=0$, the system~(\ref{eq.junction}) is completely
integrable~\cite{ablo73} and has a family of travelling kink solutions
of the form
\begin{equation}
\phi(x,t) = \phi_0\left(\frac{x+vt+x_0}{\sqrt{1-v^2}}\right) ,
\qmbox{with} \phi_0(\xi) = 4\arctan(e^\xi) \qmbox{for any} |v|<1.
\label{fluxon}
\end{equation}
In the study of Josephson junctions, this kink represents a fluxon,
i.e.\ a magnetic field with one flux quantum
$\Phi_0\approx2.07\times10^{-15}$ Wb. If there is a small induced
current and dissipation but no inhomogeneity, then there is a unique
travelling fluxon whose wave speed in lowest order is given by $v =
\frac{\pi}{\sqrt{16(\alpha/\gamma)^2+\pi^2}}$ and no stationary
fluxons exist, see, e.g.,~\cite{ddvgv03}.

It was first suggested and shown in \cite{mcla78} that if the critical
current $D$ is locally perturbed, stationary fluxons can exist even if
an imposed current is present ($\gamma\neq 0$) and that a traveling
fluxon~(\ref{fluxon}) can be pinned by the inhomogeneity.  This
phenomenon is of interests from physical and fundamental point of view
because such an inhomogeneity could be present in experiments due to
the nonuniformity in the width of the transmission Josephson junction
line (see, e.g.,~\cite{akoh85,saka85}) or in the thickness of the
oxide barrier between the superconductors forming the junction (see,
e.g.,~\cite{serp87,vyst88}). About a decade after the first analysis
of this phenomenon, it is shown in~\cite{kivs91} that the interaction
between a soliton and an inhomogeneity can be non-trivial, i.e.\ an
attractive impurity, which is supposed to pin an incoming fluxon,
could totally reflect the soliton provided that there is no damping in
the system. Recently it is proven that the final state at which a
soliton exits a collision depends in a complicated fractal way on the
incoming velocity~\cite{good07}.

So far almost all of the analytical and theoretical work describes the
local inhomogeneity by a delta-like
function~\cite{good07,kivs89,kivs91,mcla78}. Yet, the length of an
inhomogeneity in real experiments is varying from 0.5$\lambda_J$
\cite{serp87} to 5$\lambda_J$ \cite{akoh85,saka85}, with $\lambda_J$
being the Josephson penetration depth. Therefore, such inhomogeneities
are not well described by delta-functions. Kivshar et
al.~\cite{kivshar88} have considered the \emph{time-dependent}
dynamics of a Josephson fluxon in the presence of this more realistic
setup, i.e.\ fluxon scatterings that take into account the finite size
of the defect, within the framework of a perturbation theory, i.e.,
when $\alpha,\,\gamma$ are small and $D\approx 1$. Piette and
Zakrzewski~\cite{piet07_2} recently studied the scattering of the
fluxon on a finite inhomogeneity, extending \cite{kivs91,good07} for
finite length defects in the case when neither applied bias current
nor dissipation is present.

In this paper, we consider the problem of a long Josephson junction
with a finite-length inhomogeneity and give a systematic analysis of
the existence and stability of stationary fluxons, i.e., the pinned
fluxons, as they lie at the heart of the interaction of the travelling
fluxons with the inhomogeneity. The Josephson tunneling critical
current (denoted by~$D$ in~(\ref{eq.junction})) is a function of space
and is modelled by the step-function
\begin{equation}
D(x;L,d) = \left\{
\begin{array}{lll}
d,&& |x|<L,\\
1,&& |x|>L.
\end{array}
\right.
\label{def}
\end{equation}
with $d\geq 0$. 
The inhomogeneity, as modelled by~\eqref{def}, can be fabricated
experimentally with a high controllability and precision, such that the
strength and the length of the defect $d$ and $2L$ can be made as one
wishes (see \cite{weid06,weid07} and references therein for reviews of
the experimental setups). 
When the parameter $d$ is greater or less than one, the inhomogeneity
is called a microresonator respectively microresistor. They can be
thought of as a locally thinned respectively thickened junction, which
provide less respectively more resistance for the Josephson
supercurrent to go across the junction barrier. Note that as
(\ref{eq.junction}) without inhomogeneity is translationally
invariant, it does not matter where the inhomogeneity is placed.
The existence and stability problem of pinned fluxons in \emph{finite}
Josephson junctions with inhomogeneity (\ref{def}) has been considered
numerically 
by Boyadjiev et al.\ \cite{andr06,boya06,boya07}. Here we consider an
infinitely long Josephson junction with inhomogeneity~(\ref{def}) and
provide a full analytical study of the existence and stability of
pinned fluxons, using dynamical systems techniques, Hamiltonian
systems ideas, and Sturm-Liouville theory.

For the existence of the pinned fluxons we observe that, as $D\equiv1$
for $|x|$ large, it follows immediately that the asymptotic fixed
points of~\eqref{eq.junction} are given by $\sin\phi=\gamma$, and the
temporally stable stationary uniform solutions are $\phi=\arcsin\gamma
+2k\pi$.  Bu definition, a pinned fluxon is a stationary solution
of~\eqref{eq.junction}, which connects~$\arcsin\gamma$
and~$\arcsin\gamma + 2\pi$. Hence a pinned fluxon is a solution of the
boundary value problem
\begin{equation}\label{eq.ham_ode_system}
\begin{array}{l}
  \phi_{xx} - D(x;L,d)\sin\phi +\gamma =0; \\[2mm]
  \displaystyle\lim_{x\to\infty}\phi(x)=\arcsin\gamma + 2\pi 
  \qmbox{and} \lim_{x\to-\infty} \phi(x)=\arcsin\gamma.
\end{array}
\end{equation}
First we observe that pinned fluxons can only exist for bounded values
of the applied bias current, $|\gamma|\leq 1$ (where this upperbound is
directly related to our choice to set $D\equiv 1$ outside the defect). 
Moreover, there are symmetries in this system. If
$\phi(x)$ is a pinned fluxon connecting $\arcsin\gamma$ (at
$x\to-\infty$) and $\arcsin\gamma+2\pi$ (at $x\to+\infty$), then
$\phi(-x)$ is a solution as well, connecting $\arcsin\gamma+2\pi$
($x\to-\infty$) and $\arcsin\gamma$ ($x\to+\infty$). So the second
solution is a pinned anti-fluxon. The symmetry implies that we can
focus on pinned fluxons and all results for pinned anti-fluxons follow
by using the symmetry $x\to -x$.
Another important symmetry is 
\[
\phi(x) \to 2\pi-\phi(-x) \qmbox{and} \gamma \to -\gamma.
\]
Thus if $\phi(x)$ is a pinned fluxon with bias current $\gamma$, then
$2\pi-\phi(-x)$ is a pinned fluxon with bias current $-\gamma$. This
means that we can restrict to a bias current $0\leq\gamma\leq 1$ and
the case $-1\leq\gamma<0$ follows from the symmetry above.

Furthermore, the differential equation in~\eqref{eq.ham_ode_system} is
a (non-autonomous) Hamiltonian ODE with Hamiltonian
\begin{equation}
  \label{eq.ham_ode}
H = \frac 12\,p^2 - D(x;L,d)(1-\cos\phi) +\gamma\phi, \qmbox{where}
p=\phi_x. 
\end{equation}
The non-autonomous term has the form of a step function, which implies
that on each individual interval $(-\infty, -L)$, $(-L,L)$, and
$(L,\infty)$ the Hamiltonian is fixed, though the value of the
Hamiltonian will vary from interval to interval. Therefore the
solutions of~\eqref{eq.ham_ode} can be found via a phase plane
analysis, consisting of combinations of the phase portraits for the
system with $D=1$ and $D=d$, see also~\cite{susa03} for a similar
approach to get existence of $\pi$-kinks.  In the phase plane
analysis, the length of the inhomogeneity~($2L$) is treated as a
parameter. For $x<-L$, the pinned fluxon follows one of the two
unstable manifolds of fixed point~$(\arcsin\gamma,0)$ of the reduced
ODE~\eqref{eq.ham_ode_system}.  Similarly, for $x>L$ the pinned fluxon
follows one of the stable manifolds of the fixed point
$(\arcsin\gamma+2\pi,0)$.  Finally, for $|x|<L$ the pinned fluxon
corresponds to a part of one of the orbits of the phase portrait for
the system with $D=d$.  The freedom in the choice of the orbit in this
system implies the existence of pinned fluxons for various lengths of
the inhomogeneity.  See Figure~\ref{fig.example_phase} for an example
of the construction of a pinned fluxons when $\gamma=0.15$ and
$d=0.2$.
\begin{figure}[htb]
  \centering
  \includegraphics[width=0.6\textwidth]{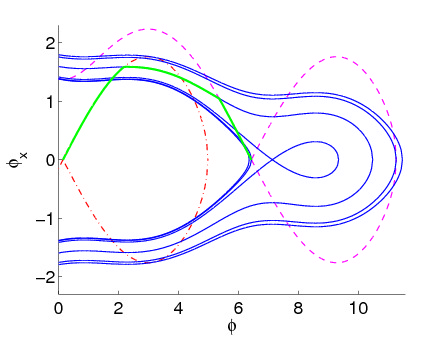}
  \caption{Phase portraits when $\gamma=0.15$ and $d=0.2$. The
    dash-dotted red curves are the unstable manifolds of
    $(\arcsin\gamma,0)$, the dashed magenta curves are the stable manifolds
    of $(2\pi+\arcsin\gamma,0)$, and the solid blue curves are examples of
    orbits for the dynamcis inside the inhomogeneity.  The bold green curve
    is an example of a pinned fluxon.}
  \label{fig.example_phase}
\end{figure}
 Orbits of a Hamiltonian
system can be characterised by the value of the Hamiltonian, hence
there is a relation between the value of the Hamiltonian inside the
inhomogeneity and the length of the inhomogeneity. The resulting
pinned fluxon is in $H^2(\mathbb{R})\cap C^1(\mathbb{R})$. As the
ODE~\eqref{eq.ham_ode_system} usually implies that the second
derivative of the pinned fluxon will be discontinuous, this is also
the best possible function space for the pinned fluxon solutions.

After analysing the existence of the pinned fluxons and having found a
plethora of possible pinned fluxons when a bias current is applied to
the Josephson junction (i.e., $\gamma\neq 0$), we will consider their
stability. First we will consider linear stability.  To derive the
linearised operator about a pinned fluxon~$\phipin(x;L,\gamma,d)$,
write $\phi(x,t)=\phipin(x;L,\gamma,d) + e^{\lambda
  t}v(x,t;L,\gamma,d)$ and linearise about~$v=0$ to get the eigenvalue
problem
\begin{equation}\label{eq.eigenval}
\Lpin v = \Lambda v, \qmbox{where} \Lambda=\lambda^2+\alpha\lambda,
\end{equation} 
and the linearisation operator $\Lpin(x;L,\gamma,d)$ is
\begin{equation}\label{eq.Lpin}
\Lpin(x;L,\gamma,d) = D_{xx} -D\cos \phipin (x;L,\gamma,d)=
  \left\{
  \begin{arrayl}
 D_{xx} - \cos\phipin(x;L,\gamma,d),&& |x|>L;\\    
 D_{xx}- d\,\cos\phipin(x;L,\gamma,d), && |x|<L.
  \end{arrayl}\right.
\end{equation}
The natural domain for $\Lpin$ is $H^2(\mathbb{R})$.  We call
$\Lambda$ an eigenvalue of $\Lpin$ if there is a function $v\in
H^2(\mathbb{R})$, which satisfies $\Lpin(x;L,\gamma,d) \, v =\Lambda
v$.  This operator is self-adjoint, hence all eigenvalues will be real.
Furthermore, it is a Sturm-Liouville operator, thus the Sobolev
Embedding Theorem gives that the eigenfunctions are continuously
differentiable functions in $H^2(\mathbb{R})$. Sturm's
Theorem~\cite{Titch} can be applied, leading to the fact that the
eigenvalues are simple and bounded from above.  Furthermore, if $v_1$
is an eigenfunction of $\Lpin$ with eigenvalue $\Lambda_1$ and $v_2$
is an eigenfunction of $\Lpin$ with eigenvalue $\Lambda_2$ with
$\Lambda_1>\Lambda_2$, then there is at least one zero of $v_2$
between any pair of zeros of $v_1$ (including the zeros at
$\pm\infty$). Hence, the eigenfunction $v_1$ has a fixed sign (no
zeros) if and only if $\Lambda_1$ is the largest eigenvalue of
$\Lpin$.
The continuous spectrum of $\Lpin$ is determined by the system
at $\pm\infty$. A short calculation shows that the continuous spectrum
is the interval $(-\infty,-\sqrt{1-\gamma^2})$.  

If the largest eigenvalue $\Lambda$ of $\Lpin$ is not positive or if
$\Lpin$ does not have any eigenvalues, then the pinned fluxon is
linearly stable, otherwise it is linearly unstable. This follows
immediately from analysing the quadratic
$\Lambda=\lambda^2+\alpha\lambda$. If $\Lambda\leq0$, then both
solutions $\lambda$ have non-positive real part. However, if
$\Lambda>0$ is then there is a solution $\lambda$ with positive real
part. Furthermore, the $\lambda$-values of the continuous spectrum
also have non-positive real part as the continuous spectrum of $\Lpin$
is on the negative real axis. 

The linear stability can be used to show nonlinear stability.  The
Josephson junction system without dissipation is Hamiltonian.  Define
$P=\phi_t$, $u=(\phi,P)$, then the equations~(\ref{eq.junction}) can
be written as a Hamiltonian dynamical system with dissipation on an
infinite dimensional vector space of $x$-dependent functions, which is
equivalent to $H^1(\mathbb{R})\times L^2(\mathbb{R})$:
\[
\frac{d}{dt} u =\bJ \,\delta {\mathcal H} (u) -\alpha \mathbf{D} u,
\qmbox{with}
\bJ = \left(
\begin{array}{cc}
  0&1\\-1&0
\end{array}\right), \quad
\mathbf{D} = \left(
\begin{array}{cc}
  0&0\\0&1
\end{array}\right),
\]
and
\begin{equation}\label{eq.pde_ham}
\begin{array}{lll} {\mathcal H}(u) &=& \frac
  12\displaystyle\int_{-\infty}^\infty \left[P^2 + \phi_x^2 +
    2\,D(x;L,d)\,(\sqrt{1-\gamma^2}-\cos\phi) \right] \, dx\\
&&{} -\gamma\displaystyle  \int_0^\infty[\phi-\arcsin\gamma-2\pi]\,dx
    + \gamma\int_{-\infty}^0[\phi-\arcsin\gamma]\, dx.
\end{array}
\end{equation}
Here we have chosen the constants terms in the $\gamma$-integrals such
that they are convergent for the fluxons. Furthermore, for any
solution $u(t)$ of~\eqref{eq.junction}, we have
\begin{equation}\label{eq.Ham_dyn}
\frac{d}{dt} \mathcal{H} (u) = -\alpha \int_{-\infty}^\infty P^2 dx
\leq 0.
\end{equation}
As a pinned fluxon is a stationary solution, we have
$D\mathcal{H}(\phipin,0)=0$ and the Hessian of $\mathcal{H}$ about a fluxon is
\[
D^2\mathcal{H}(\phipin,0) = 
\left(\begin{array}{cc}
  -\Lpin & 0\\0&I
\end{array}\right).
\]
If $\Lpin$ has only strictly negative eigenvalues, then it follows
immediately that $(\phipin,0)$ is a minimum of the Hamiltonian
and~\eqref{eq.Ham_dyn} gives that all solutions nearby the pinned
fluxon will stay nearby the pinned fluxon, see also~\cite{ddvgs07}.

After this introduction, we will start the paper with an overview of
simulations for the interaction of travelling fluxons and the
inhomogeneity in~\eqref{eq.junction} for various values of $d$, $L$,
$\gamma$ and $\alpha$. This will motivate the analysis of the
existence and stability of the pinned fluxons in the following
sections. We start the analysis of the existence and stability of
pinned fluxons by looking at a microresistor with $d=0$. The advantage
of the case $d=0$ is that several explicit expressions can be derived
and technical difficulties can be kept to a minimum, while it is also
representative of the general case $d<1$. It will be shown that for
$\gamma=0$ there is exactly one pinned fluxon for each length of the
inhomogeneity. For $\gamma>0$, a plethora of solutions starts
emerging. There is a minimum and maximum length outside which the
inhomogeneity cannot sustain pinned fluxons. Between the minimal and
the maximal length there are at least two pinned fluxons, often more.
At each length between the minimum and maximum, there is exactly one
stable pinned fluxon. If the length of the interval is (relatively)
large, the stable pinned fluxons are non-monotonic. Note that stable
non-monotonic fluxons are not possible in homogeneous systems, since
for a homogeneous system the derivative of the fluxon is an
eigenfunction for the eigenvalue zero of the operator associated with
the linearisation about the fluxon. If the fluxon is non-monotonous,
then this eigenfunction has zeros. As the linearisation operator is a
Sturm-Liouville operator, this implies that the operator must have a
positive eigenvalue as well, hence the non-monotonous fluxon is
unstable. However, for inhomogeneous systems, the derivative of the
fluxon is usually not differentiable, hence cannot give rise to an
eigenvalue zero (since the eigenfunctions have to be $C^1$) and stable
non-monotonic fluxons are in principle possible. This shows that the
inhomogeneity can give rise to qualitatively different fluxons.

For the existence analysis of the pinned fluxons, the length of the
inhomogeneity will be treated as a parameter. The pinned fluxons
satisfy an inhomogeneous Hamiltonian ODE whose Hamiltonian is constant
inside the inhomogeneity. It will be shown that the existence and type
of pinned fluxons can be parametrised by the value of this
Hamiltonian. The length of the inhomogeneity is determined by the
value of the Hamiltonian and the type of pinned fluxon, leading to curves
relating the length~$2L$ and the value of the Hamiltonian inside the
inhomogeneity. In~\cite{knight09} it is shown, in the general setting
of an inhomogeneous wave equation, that changes in stability of the
pinned fluxons can be associated with critical points of the length
function relating $L$ and the value of the Hamiltonian.  The results
of this paper together with Sturm-Liouville theory give the stability
properties of the pinned fluxons in the general setting.

After giving full details for the case $d=0$, for which the stability
issue can be settled in dependent of~\cite{knight09}, an overview of
the results for $d>0$ is given. The general microresistor case
($0<d<1$) is very similar to the case $d=0$. The microresonator case
($d>1$) has some different features, but the same techniques as before
can be used to analyse the existence and stability. We finish the
analysis of the microresonator case by looking at the special case
where microresonators approximate a localised inhomogeneity. We
explicitly look at microresonators with $d = \frac\mu{2L}$ and $L$
very small. For $\gamma$, $\alpha$ and $\mu$ small, the asymptotic
results from~\cite{mcla78} are recovered. Even in the limit of
localised inhomogeneities, our work generalises~\cite{mcla78}, since our
methods allows us to consider $\gamma$, $\alpha$ and $\mu$ larger
as well.

The paper concludes with some further observations, conclusions and
ideas for future research.

\section{Simulations}\label{sec.sim}

To put the analysis of the existence and stability of the pinned
fluxons in the next sections in a wider context, we look first at
simulations of the interaction of a travelling fluxon with an
inhomogeneity. Recall that in absence of dissipation and induced
currents ($\alpha=0=\gamma$), the system~\eqref{eq.junction} without
an inhomogeneity ($D\equiv 1)$, has a family of travelling fluxon
solutions~\eqref{fluxon}, for each wave speed $|v|<1$, while if there
is a small induced current and dissipation, but no inhomogeneity, then
there is a unique travelling fluxon.

First we look at the case $\alpha=0=\gamma$ (no induced current, no
dissipation) and the inhomogeneity of microresistor type with $d=0$.
If the length is too short, the fluxon will not be captured, but its
speed will be reduced by the passage through the inhomogeneity.  If
the length of the inhomogeneity is sufficiently large, the travelling
fluxon will be captured. Some radiation is released in this process
and the fluxon ``bounces'' backwards and forwards around the defect,
especially if the length is ``just long enough''. This is consistent
with the results in~\cite{piet07_2} where a detailed analysis of the
interaction of a fluxon with an inhomogeneity is studied in the case
that no induced current and dissipation are present.  An illustration
is given in Figure~\ref{fig.sim_d_0_g_0}.
\begin{figure}[htb]
\centering
\quad\hfill\includegraphics[width=.35\textwidth]{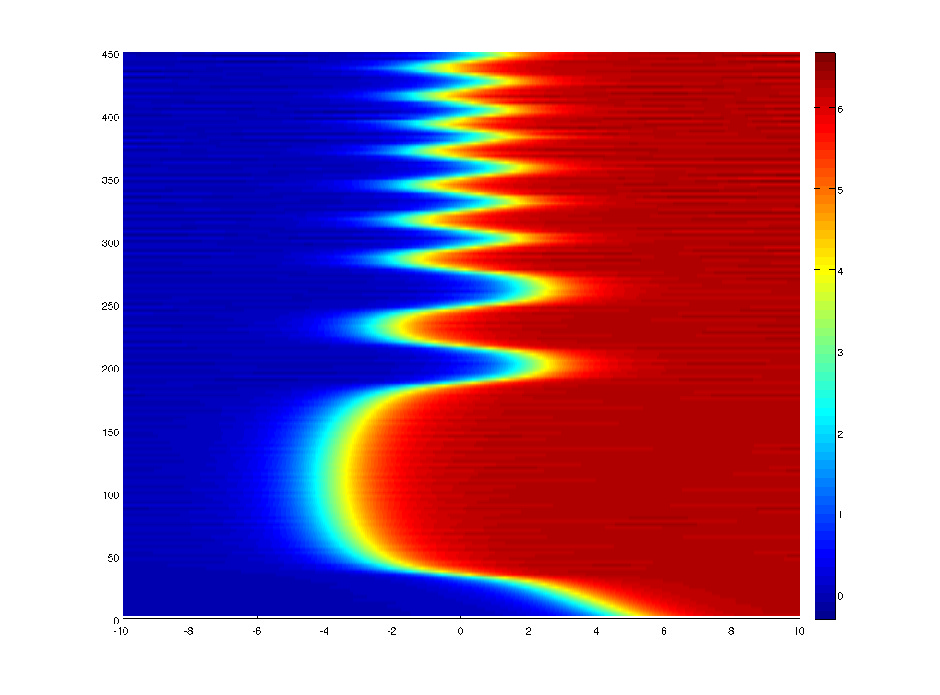}\hfill%
\includegraphics[width=.35\textwidth]{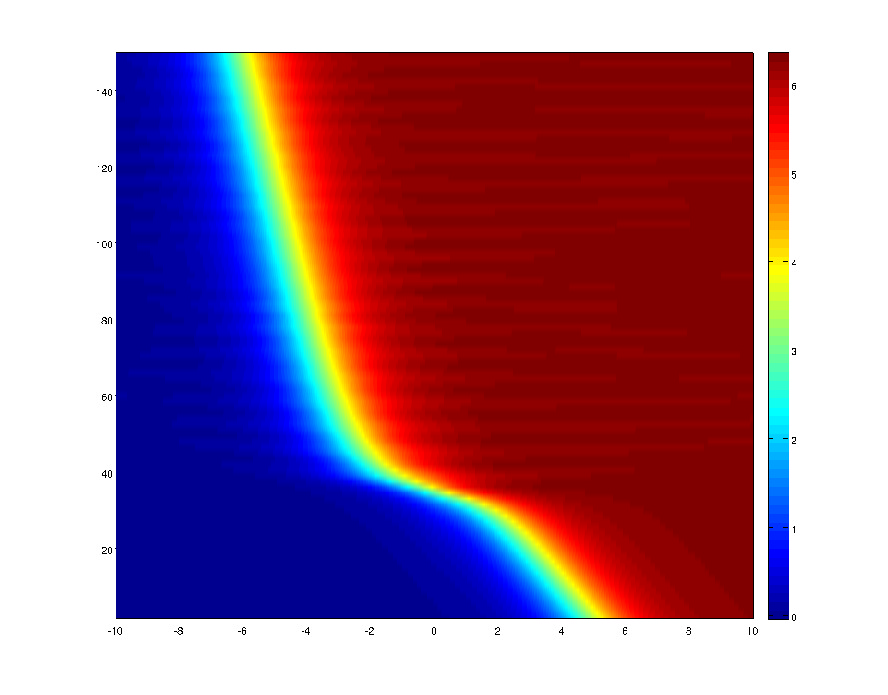}\hfill\quad
\caption{Simulation of a travelling wave with speed $v=0.1$
  approaching an inhomogeneity with $d=0$ when there is no induced
  current ($\gamma=0$) or dissipation ($\alpha=0$). The inhomogeneity
  is positioned in the middle (around the zero position). The length of
  the inhomogeneity on the left is $0.38$ and the travelling fluxon is
  captured by the inhomogeneity; note that the ``bounce'' of the
  fluxon is a lot larger than the length of the inhomogeneity. The
  length of the inhomogeneity on the right is $0.36$ and the pinned
  fluxon can just escape, but its speed is significantly reduced. }
  \label{fig.sim_d_0_g_0}\end{figure}
Note that the length of the defect which captures the fluxon is a lot
smaller than the initial amplitude of the ``bounce'' of the fluxon.
Observations suggest that the minimal length for the inhomogeneity to
capture the travelling fluxon increases if the wave speed increases.

Next we look at the system with a microresistor with $d=0$, now with
an induced current $\gamma=0.1$ and varying lengths and values of
$\alpha$.  We start again with an inhomogeneity of length $0.38$
($L=0.19$). When $\gamma=0$, this microresistor captures a fluxon with
speed $v=0.1$. With an induced current, it cannot capture a fluxon,
however large we make $\alpha$, i.e., however slow the fluxon
becomes. This is illustrated in Figure~\ref{fig.sim_d_0_g_0.1}.
\begin{figure}[htb]
\centering
\includegraphics[width=.33\textwidth]{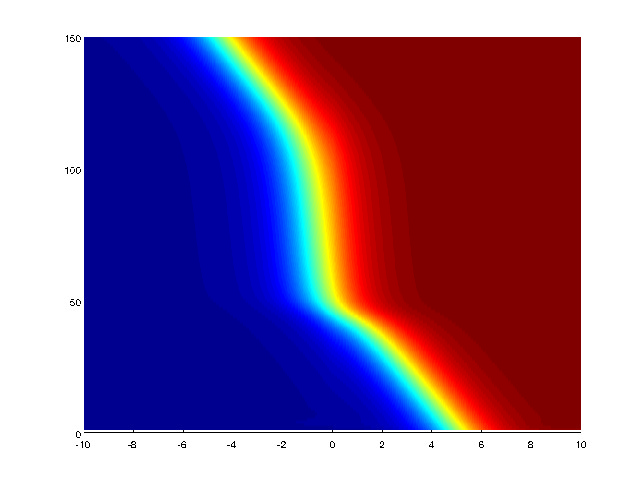}%
\includegraphics[width=.33\textwidth]{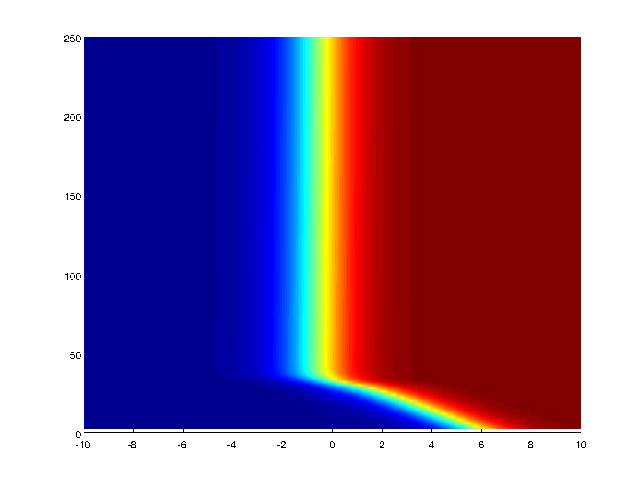}%
\includegraphics[width=.33\textwidth]{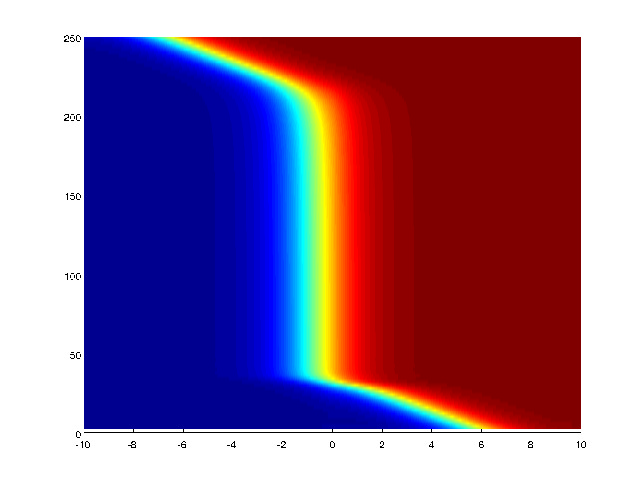}
\caption{Simulation of a travelling fluxon approaching an
  inhomogeneity with $d=0$ when the induced current is
  $\gamma=0.1$. On the left, the length is $0.38$. Here the
  dissipation is $\alpha=0.9$, but however large $\alpha$ is taken,
  the fluxon is never captured. In the middle and right plots, the
  length is $0.44$. In the middle the dissipation is $\alpha=0.48$ and
the fluxon is captured, on the right the dissipation is $\alpha=0.47$
and the fluxon can escape.}
\label{fig.sim_d_0_g_0.1}
\end{figure}
The microresistor slows the fluxon down for a while, but eventually
the fluxon escapes with the same speed as it had earlier (as this
speed is unique in a system with $\alpha,\,\gamma\neq 0$). The
simulations suggest that the smallest length which can capture a
fluxon is $0.44$ ($L=0.22$). In the next section, it will be shown
that for $\alpha,\,\gamma\neq 0$, there is a minimal length under
which no pinned fluxon can exist. This explains why the inhomogeneity
with the shortest length cannot capture even a very slow travelling
fluxon. In Figure~\ref{fig.sim_d_0_g_0.1} it is illustrated that, if
the length can sustain pinned fluxons, the capture depends on the
dissipation (hence on the speed of the incoming fluxon). If the
dissipation is sufficiently large, hence the speed sufficiently slow,
the pinned fluxon will be captured.

A longish defect in a microresistor will also capture the travelling
wave and the resulting pinned fluxon is not monotonic, see
Figure~\ref{fig.sim_d_0_g_0.1_long}! 
\begin{figure}[htb]
\centering
\includegraphics[width=.5\textwidth]{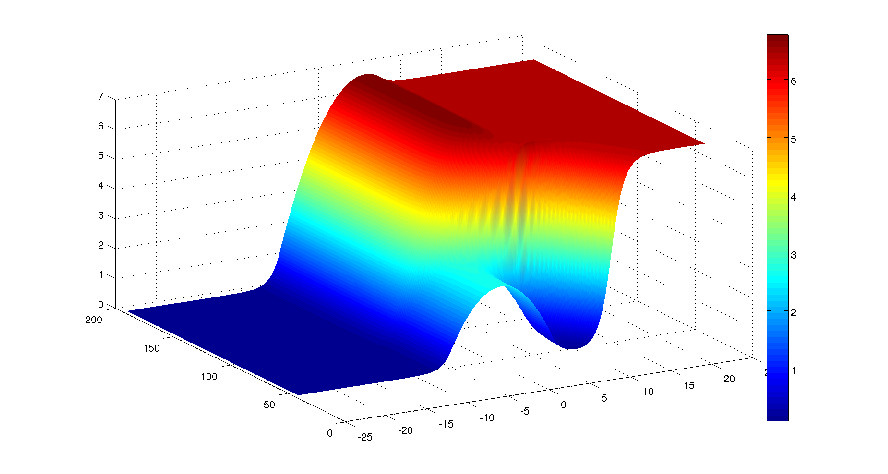}\hfill%
\includegraphics[width=.5\textwidth]{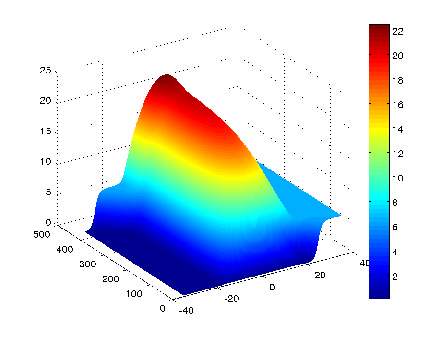}%
\caption{Simulation of a travelling fluxon approaching a longish
  inhomogeneity with $d=0$ when the induced current is $\gamma=0.1$
  and dissipation is $\alpha=0.5$. On the left, the length is $12.5$,
  the travelling wave is captured and a non-monotonic pinned fluxon is
  formed. On the right, the length is $35$ and the travelling wave
  escapes after a while, leaving in its wake a ``bump'' connecting
  $2\pi+\arcsin\gamma$ at both ends. Note that the vertical scale and
  coloring is different in both figures; as a reference point, the
  travelling wave on the right is the same in both cases.}
\label{fig.sim_d_0_g_0.1_long}
\end{figure}
The length of the inhomogeneity is substantial, so the stationary
shape connecting the far field rest states at $\arcsin\gamma$ is a
``bump''. This ``bump'' is present at all the rest states
$\arcsin\gamma+2k\pi$ for $\gamma\neq 0$ as $\arcsin\gamma+2k\pi$ is
not an equilibrium for the dynamics with $d\neq 1$. From a phase plane
analysis it can be seen that the amplitude of the homoclinic
connection to $\arcsin\gamma+2k\pi$ grows with the length~$L$ of the
defect. As shown in Figure~\ref{fig.sim_d_0_g_0.1_long}, for $L=6.25$, the travelling fluxon
travels into this ``bump'' and gets captured. The resulting pinned
fluxon is not monotonic. In the next section, the family of all
possible pinned fluxons is analysed and it is shown that for long
lengths the stable pinned fluxon is non-monotonic. Moreover, it
follows that there is an upper limit on the length of inhomogeneities
that can sustain pinned fluxons. This is illustrated on the right in
Figure~\ref{fig.sim_d_0_g_0.1_long}. The travelling fluxon seems to be
captured initially by the inhomogeneity, but after a while it escapes
again. However large the dissipation is taken, this will always
happen, illustrating that no pinned fluxons can exist.

Next we consider a microresonator with $d=2$. As before, we consider
the case without an induced current ($\gamma=0$) first. In this case, the
fluxon is never captured. For the smaller lengths the fluxon reflects,
for larger lengths the fluxon seems to get trapped, but it escapes
after a while. This is illustrated in Figure~\ref{fig.sim_d_2_g_0} for
a microresonator with length $0.1$. In the next section, it will be
will shown that a system with a microresonator and no induced current
has indeed no stable pinned fluxons. 
\begin{figure}[htb]
\centering
\quad\hfill
\includegraphics[width=.35\textwidth]{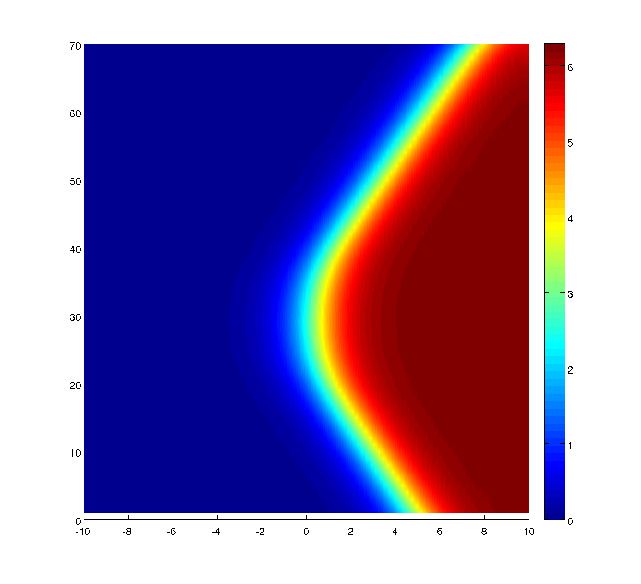}\hfill%
\includegraphics[width=.35\textwidth]{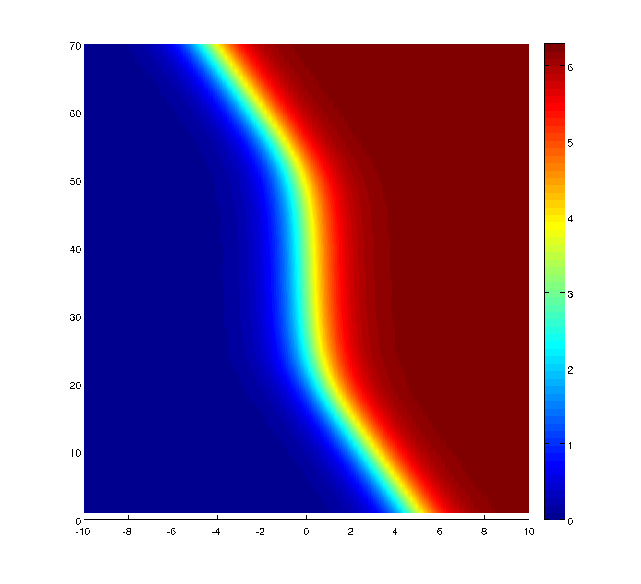}\hfill\quad
\caption{Simulation of a travelling wave approaching an inhomogeneity
  with $d=2$ and length $0.1$, when there is no induced current and no
  dissipation ($\gamma=0=\alpha$). The speed on the left is $0.21$ and
  the travelling wave is bounced by the inhomogeneity. The length on
  the right is $0.22$ and at first the pinned fluxon seems to be
  captured by the inhomogeneity, but after while it travels through
  the inhomogeneity and seems to resume its original speed. }
  \label{fig.sim_d_2_g_0}\end{figure}

After the induction-less system, we consider a system with a
microresonator with $d=2$ and an induced current $\gamma=0.1$. As with
the microresistor there is a minimum length, under which the
microresonator cannot capture a fluxon. The simulations suggest that
the minimum length is $0.42$ ($L=0.21$). In
Figure~\ref{fig.sim_d_2_g_0.1}, it is illustrated that a
microresonator with length $0.40$ cannot capture a fluxon with
$\alpha=0.9$, whilst a microresonator with length $0.42$ can capture a
fluxon with $\alpha=0.3$, but it cannot for $\alpha=0.29$.
\begin{figure}[htb]
\centering
\includegraphics[width=.33\textwidth]{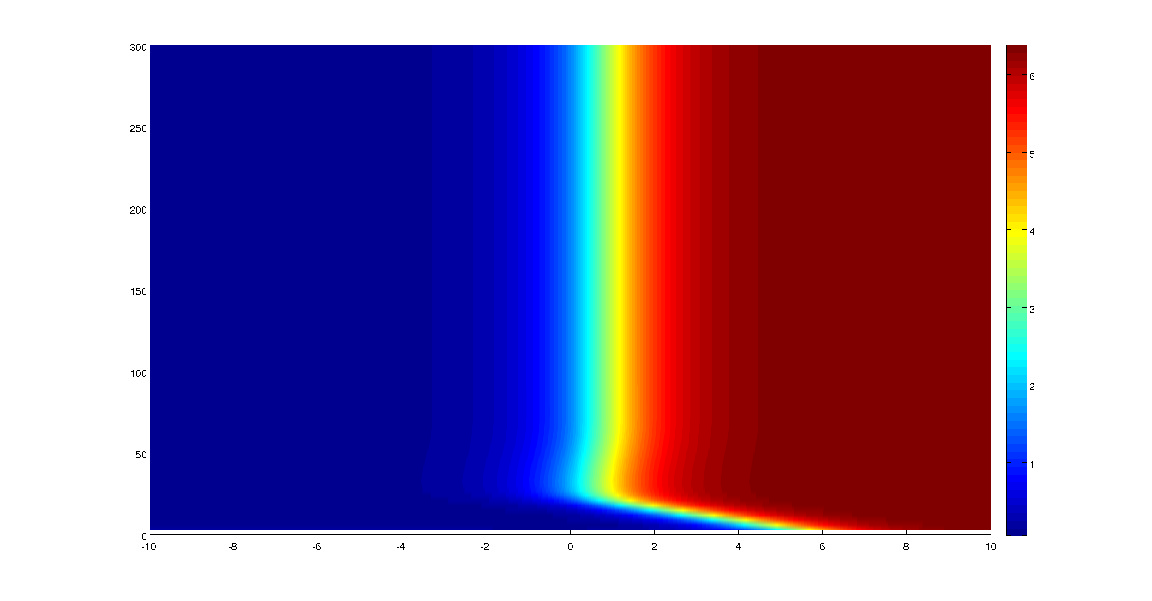}\hfill%
\includegraphics[width=.33\textwidth]{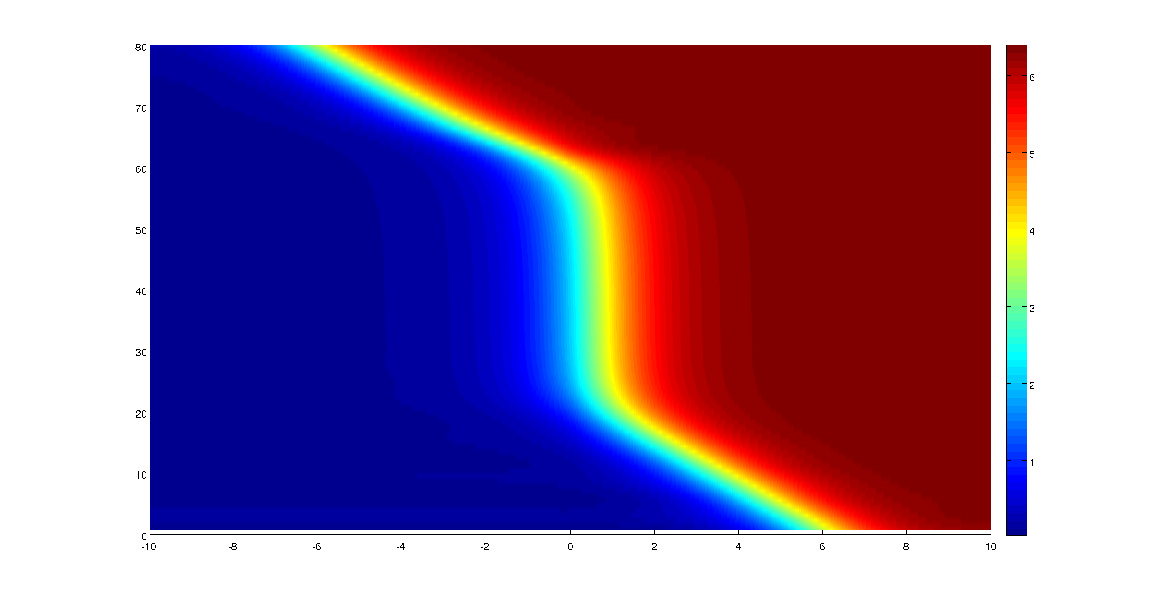}\hfill%
\includegraphics[width=.33\textwidth]{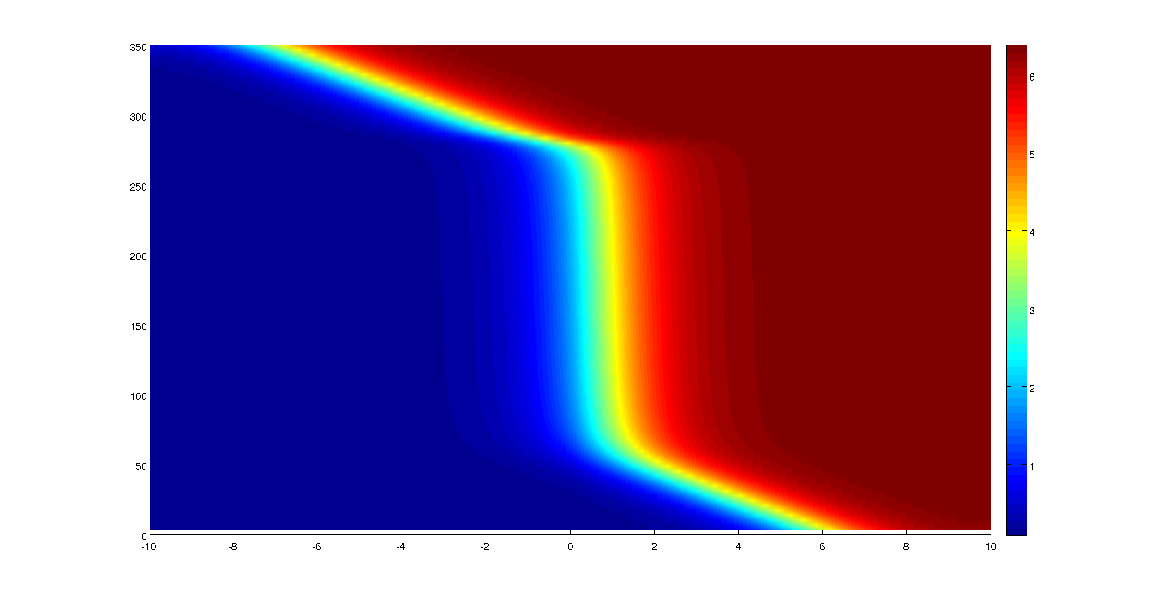}
\caption{Simulation of a travelling wave approaching an inhomogeneity
  with $d=2$ and length $0.1$, when there an induced current
  ($\gamma=0.1$). On the left and middle is a microresonator with
  length $0.42$. On the left the dissipation is $\alpha=0.3$ and the
  fluxon is captured, whilst in the middle the dissipation is
  $\alpha=0.29$ and the fluxon escapes. On the right, the length is
  $0.4$ and the dissipation is $\alpha=0.9$ and the fluxon still
  escapes as the length is too short for a pinned fluxon to exist. }
  \label{fig.sim_d_2_g_0.1}\end{figure}
This is consistent with the results in the next sections where it is
shown that for $\alpha,\,\gamma\neq 0$ there exists a minimal length
under which no pinned fluxons can be sustained by the inhomogeneity.
If the length can just sustain pinned fluxons, then there are both a
stable and an unstable pinned fluxon close to each other. In the left
panels of Figures~\ref{fig.sim_d_2_g_0.1} and~\ref{fig.sim_d_10_g_0.1}
it can be observed that initially the travelling fluxon approaches the
unstable pinned fluxon, but then reflects to the stable one and
settles down.

Finally we consider a microresonator with a longer length for which
the travelling fluxon gets captured and becomes a non-monotonic pinned
fluxon. In Figure~\ref{fig.sim_d_10_g_0.1}, it is illustrated that for
a microresonator with $d=10$, length~2 ($L=1$), the travelling fluxon
at $\gamma=0.1$ and $\alpha=0.2$ gets attracted to a non-monotonic
pinned fluxon. Note that for microresonators (i.e., $d>1$), the stable
non-monotonic pinned fluxons have a ``dip'' as opposed to the ones for the
microresistors which have a ``bump''.
\begin{figure}[htb]
\centering
\quad\hfill\includegraphics[width=.4\textwidth]{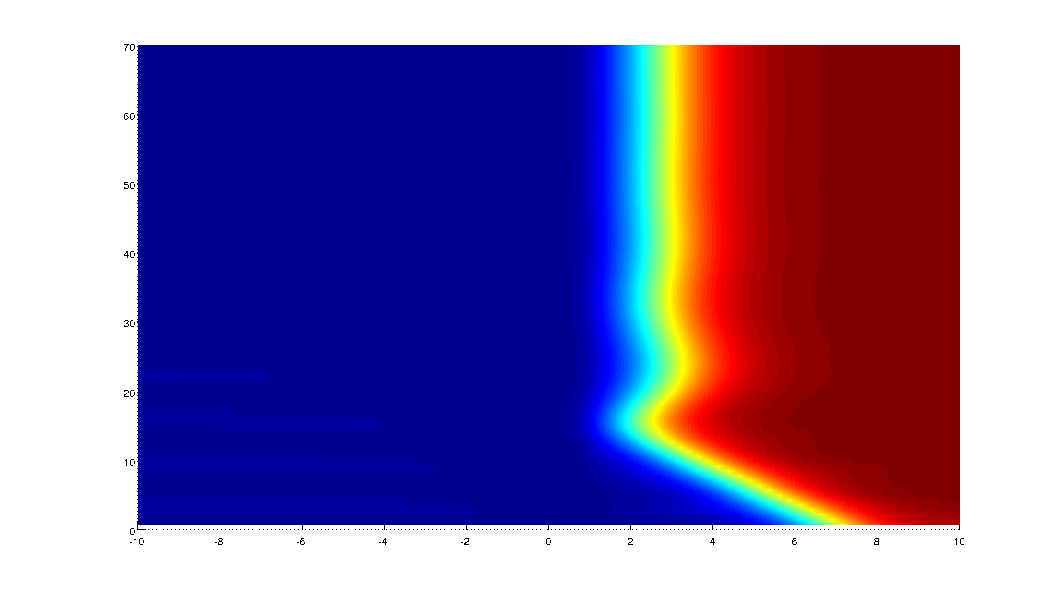}\hfill%
\includegraphics[width=.4\textwidth]{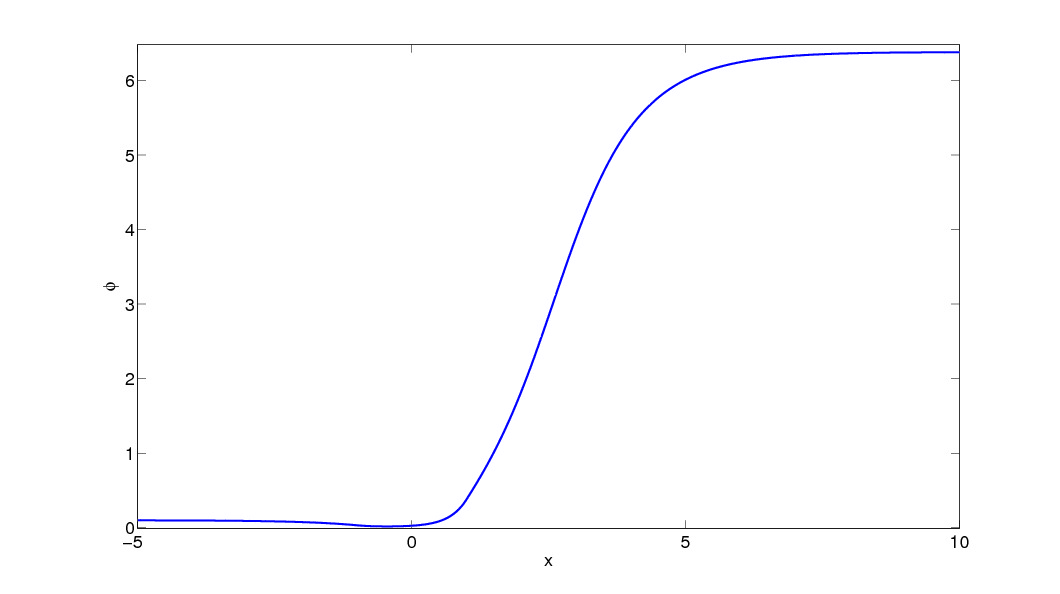}\hfill\quad%
\caption{Simulation of a travelling wave approaching an inhomogeneity
  with $d=10$ and length $2$, when the induced current is $\gamma=0.1$
  and the dissipation is $\alpha=0.2$. The resulting wave is
  non-monotonic as can be seen on the right. Due to the weaker
  dissipation, it takes some time for the wave to converge to its
  stable shape. Initially, the travelling wave approaches the
  monotonic unstable pinned fluxon, then deflects from it and
  converges to the non-monotic stable one.}
  \label{fig.sim_d_10_g_0.1}\end{figure}

\section{No resistance (d=0)}
\label{sec.d=0}

We now analyse the existence and stability of the pinned fluxons in a
microresistor and a microresonator.  First we consider the case when
there is no resistance in the inhomogeneity, hence a microresistor
with $d=0$. This case provides a good illustration of the richness of
the family of pinned fluxons, shows the essence of the analytic
techniques for the existence and stability analysis, and has less
technical complications than the more general values of $d$. The
existence analysis for the case with no bias current ($\gamma=0$) is
quite different from the case when a bias current is applied
($\gamma>0$). So we will consider them separately.

\subsection{Existence of pinned fluxons without applied bias current}
\label{sec.d=0,gamma=0}
For $\gamma=0$, the pinned fluxon has to connect the stationary states
at $\phi=0$ and $\phi=2\pi$. In the background dynamics of the
ODE~\eqref{eq.ham_ode_system} with $D\equiv1$, the unstable manifold
of $(0,0)$ coincides with the stable manifold of $(2\pi,0)$, as
follows immediately by analysing the Hamiltonian~\eqref{eq.ham_ode}
with $D\equiv 1$. These coinciding manifolds are denoted by a red
curve in the phase portrait sketched in
Figure~\ref{fig.phase_d_0_g_0}.
\begin{figure}[htb]
\centering
\includegraphics[width=.5\textwidth]{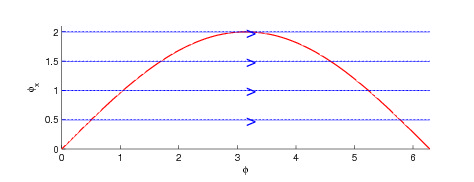}\hfill%
\includegraphics[width=.5\textwidth]{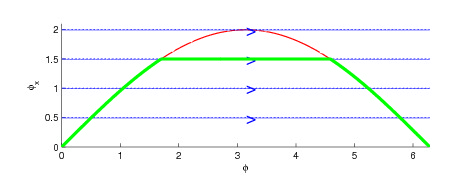}
\caption{Phase portraits  of the ODE~\eqref{eq.ham_ode_system} for
  $\gamma=0$ and $d=0$. The red curve represents the coinciding stable
  and unstable manifolds of the asymptotic fixed points. The blue
  curves are orbits for the system inside the inhomogeneity. In the
  sketch on the right, the green curve represents a pinned fluxon. }
  \label{fig.phase_d_0_g_0}
\end{figure}
This curve represents the unperturbed sine-Gordon
fluxon~\eqref{fluxon}.  The orbits generated by the Hamiltonian system
with $D\equiv0$ are straight lines. In Figure~\ref{fig.phase_d_0_g_0},
samples of these orbits are given by the blue lines. Any blue line
that crosses the red line can be used to form a pinned fluxon. An
example is given in the panel on the right in
Figure~\ref{fig.phase_d_0_g_0}, where the green curve represents a
pinned fluxon in $H^2(\mathbb{R})\cap C^1(\mathbb{R})$.

As can be seen from Figure~\ref{fig.phase_d_0_g_0}, the value of the
Hamiltonian inside the inhomogeneity is a convenient parameter to
characterise the pinned fluxons. The points of intersection for the
blue and red curves are denoted by $(\iphi,\ip)$ respectively
$(\ophi,\op)$ for the first respectively second intersection. It
follows immediately that $\ip=\op$ and $\ophi=2\pi-\iphi$.
Furthermore, the expression for the Hamiltonian,~\eqref{eq.ham_ode},
gives the following relations for $\iphi$ and $\ip$: $0 = \frac
12\,\ip^2 - (1-\cos\iphi)$ ($D\equiv1$) and $h = \frac 12\,\ip^2$
($D\equiv0$), with $0<h\leq 2$ where $h$ is the value of the
Hamiltonian inside the inhomogeneity.  Thus
\begin{equation}\label{eq.in_g=0}
\ip(h)=\sqrt{2h} \qmbox{and} \iphi(h) = \arccos(1-h), \qmbox{with}  0<h\leq 2.
\end{equation}
Inside the inhomogeneity ($|x|<L$), the pinned fluxon related to the
value~$h$ satisfies  
$h=\frac 12 \phi_x^2$, {thus} $\phi_x=\sqrt{2h}$.
Hence the half-length~$L$ and the parameter~$h$ are related by
\begin{equation}
\label{eq.length_d=0}
L=\int_{-L}^0 dx = \int_{\iphi(h)}^\pi \frac{d\phi}{\phi_x} = 
\int_{\iphi(h)}^\pi \frac{d\phi}{\sqrt{2h}} =
\frac{\pi-\arccos(1-h)}{\sqrt{2h}}.
\end{equation}
As the numerator is a monotonic decreasing function of~$h$ and the
denominator is monotonic increasing, it follows immediately that $L$
is a monotonic decreasing function of $h$. The function $L$ takes
values in $[0,\infty)$ as $\mijnlim{h\to0}L(h)=\infty$ and
$\mijnlim{h\to2}L(h)=0$. The $h$-$L$ plot is given in
Figure~\ref{fig.length_d=0,g=0}.  We summarise the existence
results for pinned fluxons without a bias current in the following
lemma.
\begin{figure}[htb]
  \centering
  \includegraphics[height=.25\textheight]{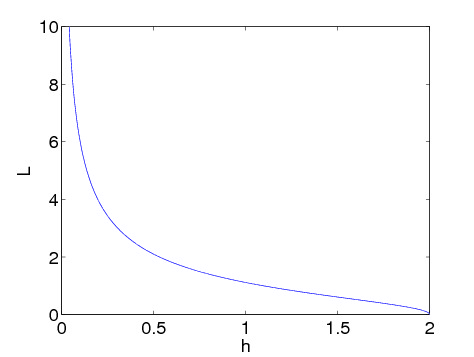}
  \caption{Plot of the length~$L$ as a function of~$h$, the value of
  the Hamiltonian in the inhomogeneity, for $\gamma=0$ and $d=0$.}
  \label{fig.length_d=0,g=0}
\end{figure}
\begin{lemma}
  \label{lem.exist_d=0,g=0}
  Let $\gamma=0$ and $d=0$. For any length~$2L$ of the inhomogeneity,
  there is a unique pinned fluxon, for which the Hamiltonian inside
  the inhomogeneity has the value~$h(L)$ as implicitly given
  by~\eqref{eq.length_d=0}. Define $x^*$ to be the shift such that
  $\phi_0(-L+x^*)=\iphi$ (see~\eqref{fluxon} for the definition of
  $\phi_0$), then the pinned fluxon is given explicitly by
\begin{equation}
\phipin (x;L,0,0)= \left\{ 
  \begin{array}{ll}\renewcommand{\arraystretch}{1.1}
  \phi_0(x+x^*),& x<-L,\\
  \pi+\frac{\pi-\arccos(1-h)}{L}\,x, &|x|<L,\\
  \phi_0(x-x^*),& x>L.
  \end{array}
\right.
\label{pf}
\end{equation}
\end{lemma}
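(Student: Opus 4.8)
The plan is to establish existence by an explicit phase-plane construction of the three pieces of the fluxon, and uniqueness from the strict monotonicity of $L(h)$ already recorded before the lemma.

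First I would use the boundary conditions to fix the outer arcs. On $x<-L$ the ODE in~\eqref{eq.ham_ode_system} is the autonomous pendulum $\phi_{xx}=\sin\phi$, for which $(0,0)$ is a saddle whose unstable manifold is precisely the heteroclinic orbit in $\{H=0\}$ traced by the homogeneous fluxon~\eqref{fluxon}; hence $\lim_{x\to-\infty}\phi=0$ forces this arc to be a translate $\phi_0(x+x^*)$. Symmetrically, $\lim_{x\to+\infty}\phi=2\pi$ forces the arc on $x>L$ onto the stable manifold of $(2\pi,0)$, i.e.\ $\phi_0(x-x^*)$. On $|x|<L$ the equation degenerates to $\phi_{xx}=0$, so this arc is necessarily affine with constant slope $\phi_x=\sqrt{2h}$, where $h=\frac12\phi_x^2$ is the inner Hamiltonian value.

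Next I would perform the $C^1$ matching at $x=\pm L$. Continuity of $(\phi,\phi_x)$, together with $H=0$ on the outer heteroclinic and $H=h$ inside, reproduces exactly the identities~\eqref{eq.in_g=0} at the left junction $(\iphi,\ip)$, while $x\mapsto-x$ symmetry gives the right junction $(\ophi,\op)=(2\pi-\iphi,\ip)$; integrating $dx=d\phi/\sqrt{2h}$ across the inner interval then yields~\eqref{eq.length_d=0}. Choosing $x^*$ so that $\phi_0(-L+x^*)=\iphi$ makes~\eqref{pf} continuous at $x=\pm L$, and since the outer slope there equals $\ip=\sqrt{2h}$, the common inner slope makes it $C^1$; the jump in $D$ across $x=\pm L$ produces the expected discontinuity in $\phi_{xx}$, so the solution sits in $H^2(\mathbb{R})\cap C^1(\mathbb{R})$ but not $C^2$.

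Finally, for uniqueness I would invoke that $L(h)$ is a strictly decreasing bijection from $(0,2]$ onto $[0,\infty)$ (the numerator of~\eqref{eq.length_d=0} decreases and the denominator increases), so each prescribed $L$ determines a unique $h(L)$; the shift $x^*$ is then fixed by $\phi_0(-L+x^*)=\iphi$ because $\phi_0:\mathbb{R}\to(0,2\pi)$ is a bijection, leaving no further freedom. The main obstacle is not computational but structural: one must argue that the boundary data genuinely leave a single free parameter $h$, i.e.\ that both outer arcs are trapped on the same level set $\{H=0\}$ and that the affine inner segment can only rejoin it with positive slope at the two intersection points dictated by $p=\sqrt{2h}$, thereby excluding asymmetric or non-monotone competitors before the monotonicity of $L(h)$ is used to conclude.
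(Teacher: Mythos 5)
Your proposal is correct and follows essentially the same route as the paper: the paper's argument for Lemma~\ref{lem.exist_d=0,g=0} is precisely the phase-plane construction in Section~\ref{sec.d=0,gamma=0} — outer arcs on the coinciding stable/unstable manifolds in $\{H=0\}$, an affine inner segment with slope $\sqrt{2h}$, the matching relations \eqref{eq.in_g=0}, the length formula \eqref{eq.length_d=0}, and uniqueness from the strict monotonicity of $L(h)$. The only point you flag as needing care (that the horizontal line $p=\sqrt{2h}$ meets the heteroclinic in exactly two points, forcing the entry/exit data) is likewise handled implicitly in the paper via the phase portrait, so there is no substantive difference.
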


\subsection{Existence of pinned fluxons with bias current}
\label{sec.d=0,gamma>0}

For $\gamma>0$, the pinned fluxon has to connect the stationary states
at $\phi=\arcsin \gamma$ and $\phi=2\pi+ \arcsin\gamma$. In the
background dynamics with $D\equiv1$ the unstable manifold of
$\phi=\arcsin\gamma$ coincides no longer with the stable manifold of
$2\pi+\arcsin\gamma$. Furthermore, the orbits of the dynamics
inside the inhomogeneity are parabolic curves instead of straight
lines.  These two changes add substantial richness to the family of
pinned fluxons.

Let us first consider the phase portraits. In
Figure~\ref{fig.phase_d=0_g=.15} we consider $\gamma=0.15$ as a
typical example to illustrate the ideas.  In the dynamics with
$D\equiv1$, the unstable manifolds to $\arcsin \gamma$ are denoted by
red curves, while the stable manifolds to $2\pi+\arcsin\gamma$ are
denoted by magenta curves. The larger $\gamma$ gets, the wider the gap
between the unstable and stable manifold becomes.
\begin{figure}[htb]
\centering
\includegraphics[width=0.55\textwidth]{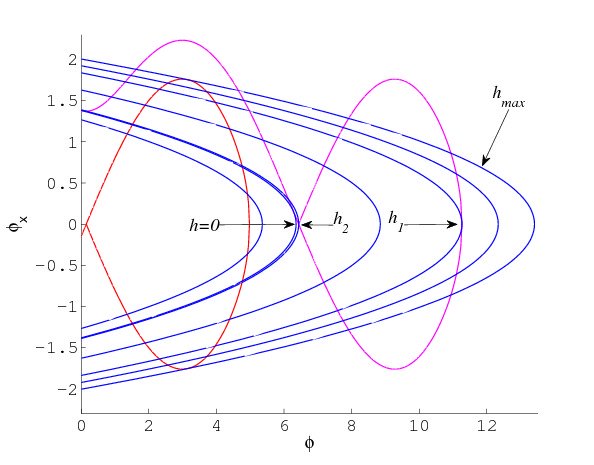}
  \includegraphics[width=0.44\textwidth]{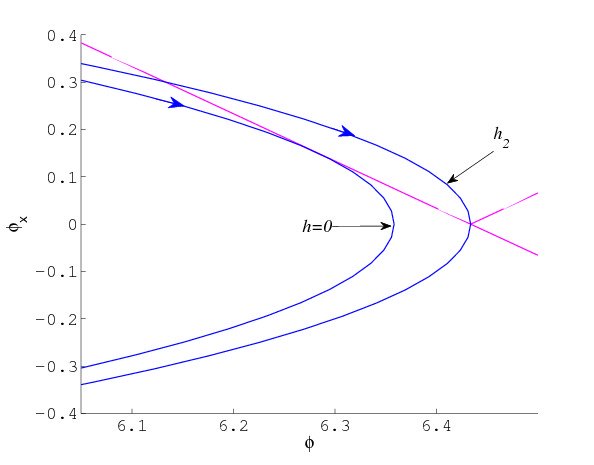}
  \caption{Phase portrait at $\gamma=0.15$ and $d=0$. On the
    right is a zoom into the area around $(\phi,\phi_x)=(2\pi,0)$.}
  \label{fig.phase_d=0_g=.15}
\end{figure}
The dynamics within the inhomogeneity with $D\equiv0$ are denoted by
blue curves. These blue curves are nested and can be parametrised with
a parameter~$h$, using the Hamiltonian~\eqref{eq.ham_ode} with
$D\equiv0$:
\[
\frac12(\phi_x)^2 +\gamma\phi = H_0(\gamma) +h,
\]
where $H_0(\gamma)$ is given by the value of the
Hamiltonian~\eqref{eq.ham_ode} on the magenta stable manifold ($D\equiv1$):
\begin{equation}
  \label{eq.H0}
  H_0(\gamma) = \gamma\arcsin\gamma -
(1-\sqrt{1-\gamma^2})+2\pi\gamma  .
\end{equation}
Thus the value of $h$ increases as the extremum of the blue curves is
more to the right.

For the existence of pinned fluxons, a blue curve has to connect the
red unstable manifold with the magenta stable manifold. In
Figure~\ref{fig.phase_d=0_g=.15}, the furthest left possible blue
curve for which pinned fluxons may exist, is the one indicated with
$h=0$. In the zoom on the right, it can be seen that this curve just
touches the magenta stable manifold.  The blue curve intersects the
red unstable manifold twice, both points give rise to a pinned fluxon,
as sketched in Figure~\ref{fig.phase_d=0_g=.15_8}. Obviously, the
pinned fluxon in the second plot in Figure~\ref{fig.phase_d=0_g=.15_8}
will occur in a defect with a shorter length than the one in the first
plot.
\begin{figure}[htbp]
    \centering
    \includegraphics[width=0.35\textwidth]{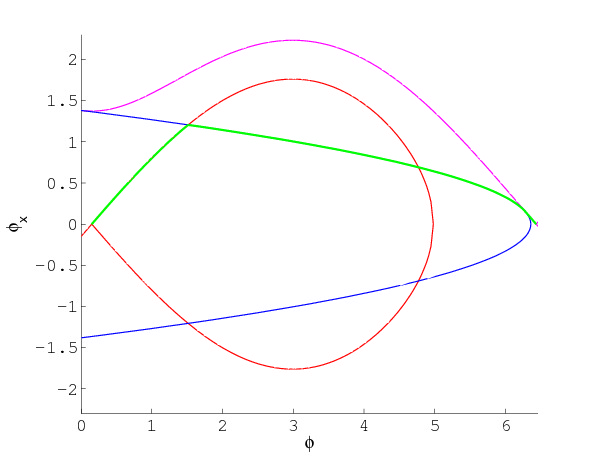}
    \hfill
\includegraphics[width=0.35\textwidth]{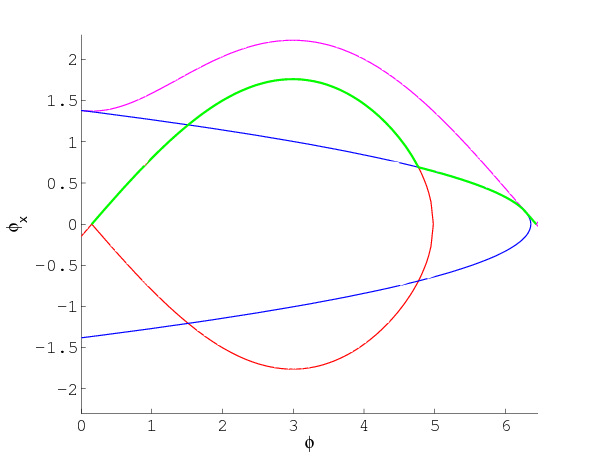}
    \hfill
\includegraphics[width=0.28\textwidth]{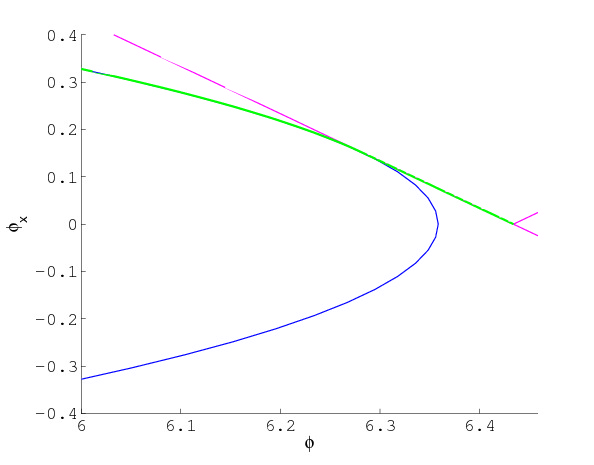}
  \caption{Phase portrait at $\gamma=0.15$ and $d=0$ with the furthest
    left blue curve for which pinned fluxons exist. There are two
    pinned fluxons possible, represented by green line. On the right
    is a zoom into the area around $(\phi,\phi_x)=(2\pi,0)$.}
  \label{fig.phase_d=0_g=.15_8}
\end{figure}
The furthest right possible curve that gives rise to pinned fluxons is
marked with~$h_{\rm max}$ in Figure~\ref{fig.phase_d=0_g=.15}. This
blue orbit touches the red unstable manifold and crosses the magenta
unstable manifolds in 5 points. All these points represent different
pinned fluxons, hence 5 pinned fluxons can be associated with this
curve. Moreover, for~$h$ just below~$h_{\rm max}$, the blue curve
intersects the red curve twice (while it still intersects the magenta
curve~5 times: there are~10 different pinned fluxons associate to such
value of~$h$.

In general, the pinned fluxons are determined by two points in the
phase plane: the point where pinned fluxon enters the inhomogeneity
(i.e. the crossing from the red unstable manifold to the blue orbit),
this point will be denoted by $(\iphi,\ip)$ and the point where the
pinned fluxon leaves the inhomogeneity (i.e. the crossing from the
blue orbit to the magenta stable manifold), this point will be denoted
by $(\ophi,\op)$. Thus the points $(\iphi,\ip)$ and $(\ophi,\op)$ are
determined by the set of equations
\begin{equation}\label{eq.qp}
\begin{arrayl}
H_0(\gamma)-2\pi\gamma &=&
\frac12 \,\ip^2 - (1-\cos\iphi) + \gamma\iphi, \\
H_0(\gamma)+h &=& \frac12 \,\ip^2 + \gamma\iphi,  \\
H_0(\gamma)+h &=& \frac12 \,\op^2 + \gamma\ophi,  \\
H_0(\gamma) &=&
\frac12 \,\op^2 - (1-\cos\ophi) + \gamma\ophi. 
\end{arrayl}
\end{equation}

Combining the equations in~\eqref{eq.qp}, we get expressions
for $\iphi$ and $\ophi$:
\begin{equation}\label{eq.phi}
\cos\iphi = 1-(h+2\pi\gamma)    \qmbox{and} 
\cos\ophi  = 1-h . 
\end{equation}
This is well-defined only if $0\leq h\leq
2(1-\pi\gamma)$. Hence there are maximal values for
$\gamma$ and $h$, given by
\[
\gamma_{\rm max} =\frac1\pi \qmbox{and} h_{\rm max} =
2(1-\pi\gamma).
\]  
If $\gamma>\gamma_{\rm max}$, then there is no blue curve that
intersects both the red unstable orbit and the magenta stable orbit,
hence no pinned fluxons exist if the applied bias current is larger
than $\gamma_{\rm max}$. If $h>h_{\rm max}$, then the blue curve does
not intersect the red manifold anymore.

Furthermore, $\iphi$ must lie on the red unstable manifold, hence
$\arcsin\gamma \leq \iphi\leq \phimax(\gamma)$, where
$\phimax(\gamma)$ is the maximal $\phi$-value of the orbit homoclinic
to $\arcsin\gamma$. As $h\in[0,2(1-\pi\gamma)]$, this implies that
there are two possible values for $\iphi$ and that $\ip>0$:
\[
\iphi = \pi \pm\arccos(2\pi\gamma-(1-h)) \qmbox{and}
\ip=\sqrt{2\left(H_0(\gamma)+h-\gamma\iphi\right)} .
\]
Note that the unstable manifold left of $\arcsin\gamma$ only
intersects with blue curves that have $\phi_x<0$, hence those orbits
can never connect to one of the stable manifolds of $2\pi+\arcsin\gamma$.

The point $(\ophi,\op)$ has to lie on the magenta stable manifolds, so
there can be up to five possible branches of solutions:
\begin{enumerate}
\item $\ophi = 2\pi - \arccos(1-h)$ {with} $\op >0$, {for all}
  $0\leq h\leq h_{\rm max}$;
\item $\ophi = 2\pi + \arccos(1-h)$ {with} $\op\geq0$, {for} $0\leq
  h\leq h_2$ {and} $\op<0$, {for} $h_2< h\leq h_{\rm max}$;
\item $\ophi = 2\pi + \arccos(1-h)$ {with~} $\op\geq0$, {for} $h_2<
  h\leq h_{\rm max}$;
\item $\ophi = 4\pi - \arccos(1-h)$ {with} $\op\geq0$, {for} $h_1<
  h\leq h_{\rm max}$;
\item $\ophi = 4\pi - \arccos(1-h)$ {with} $\op<0$, {for} $h_1< h\leq
  h_{\rm max}$.
\end{enumerate}
Here $h_2$ is the $h$-value such that the blue orbit intersects the
magenta manifolds at the equilibrium $(2\pi+\arcsin\gamma,0)$, i.e.,
$h_2(\gamma)=1-\sqrt{1-\gamma^2}$, and $h_1$ is such that the blue
orbit touches the magenta manifold at $(2\pi+\phi_{\rm
  max}(\gamma),0)$, the most-right point, thus $h_1(\gamma) =
1-\cos(\phi_{\rm max}(\gamma))$.  In all cases,
$|\op|=\sqrt{2\left(H_0(\gamma)+h-\gamma\ophi\right)}$.

To satisfy $h_2(\gamma)\leq h_{\rm max}(\gamma)$, we need that
$\gamma\leq\gamma_2= \frac{4\pi}{4\pi^2+1}\approx 0.3104$. If
$\gamma>\gamma_2$, then only pinned fluxons with $\ophi =
2\pi\pm\arcsin\gamma$ and $\op>0$ exist.  In order to have
$h_1(\gamma)\leq h_{\rm max}(\gamma)$, we need that
$\gamma\leq\gamma_1$, where $\gamma_1$ is the implicit solution of
$\cos\phimax(\gamma_1)+1=2\pi\gamma_1$, i.e., $\gamma_1\approx0.1811$.
If $\gamma>\gamma_1$, then no pinned fluxons with $\ophi =
4\pi-\arcsin\gamma$ exist.  On the intervals of common existence, we
have $0\leq h_2(\gamma)\leq h_1(\gamma)\leq h_{\rm max}(\gamma)$,
$h_1(\gamma_1)=h_{\rm max}(\gamma_1)$, $h_2(\gamma_2)=h_{\rm
  max}(\gamma_2)$, see Figure~\ref{fig.ham_d=0}.

\begin{figure}[htb]
  \centering
  \includegraphics[width=0.5\textwidth]{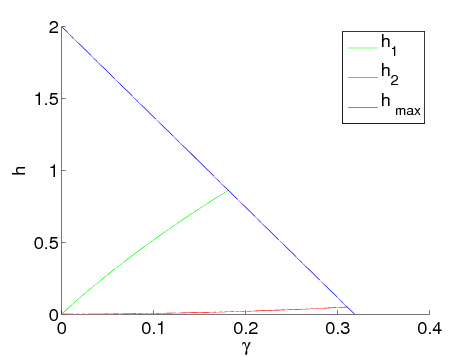}
  \caption{The extremal $h$-values $h_1(\gamma)$, $h_2(\gamma)$ and
  $h_{\rm max}(\gamma)$.} 
  \label{fig.ham_d=0}
\end{figure}

In Figure~\ref{fig.phase_d=0_g=.15_pinned1}, we have taken
$\gamma=0.15$ and $h=(h_1+h_{\rm max})/2$ and have plotted all five
possible pinned fluxons (i.e. all possibilities for $(\ophi,\op)$)
with $\iphi=\pi-\arccos(2\pi\gamma-(1-h))$. Obviously, five more
pinned fluxons with the same $(\ophi,\op)$ are possible with
$\iphi=\pi+\arccos(2\pi\gamma-(1-h))$.
\begin{figure}[htb]
\centering
\includegraphics[width=0.3\textwidth]{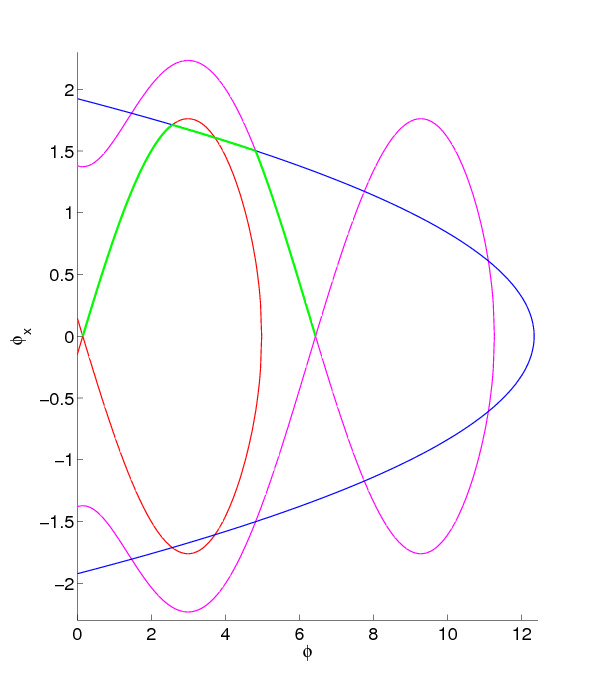}
\hfill
\includegraphics[width=0.3\textwidth]{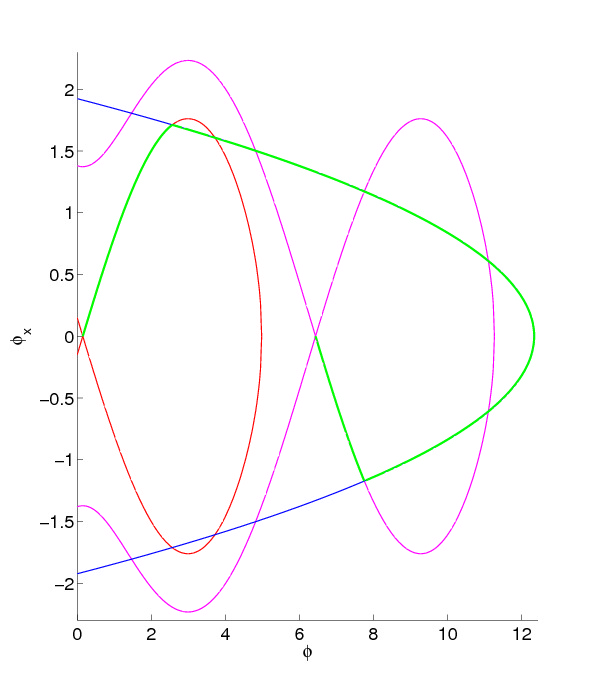}
\hfill
\includegraphics[width=0.3\textwidth]{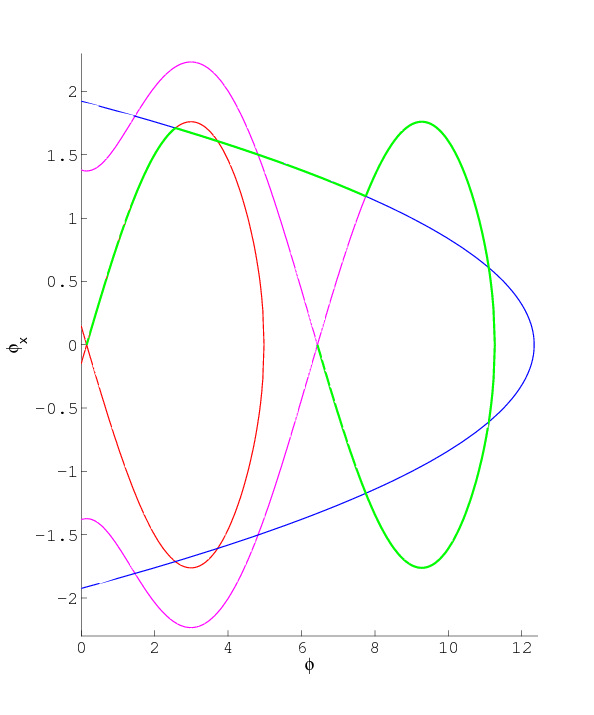}
\\
\includegraphics[width=0.3\textwidth]{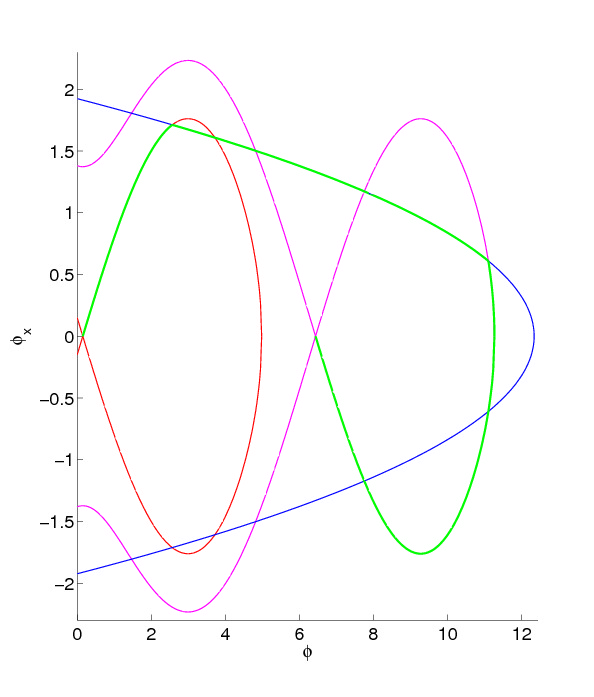}
\quad
\includegraphics[width=0.3\textwidth]{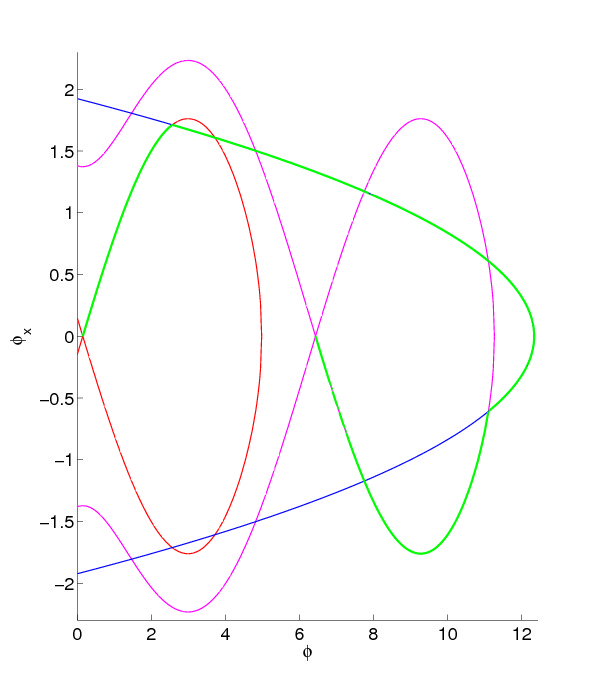}
\caption{The five pinned fluxons with
  $\iphi=\pi-\arccos(2\pi\gamma-(1-h))$ for $\gamma=0.15$, $d=0$ and
  $h=(h_1+h_{\rm max})/2$. Note that only the pinned fluxon in the
  first panel is monotonic. In the $L$-$h$ of
  Figure~\ref{fig.length_d_0}, the pinned fluxons in the first two
  panels are on the blue curve, the third one is on the red curve
  and the last two are on the green curve.} 
\label{fig.phase_d=0_g=.15_pinned1}
\end{figure}

To determine the length of the inhomogeneity for the pinned fluxons, we
use that on the orbits in the inhomogeneity (blue curves in the
phase portrait) $\phi$ and $\phi_x$ are related by
$|\phi_x|=\sqrt{2\left(H_0(\gamma)+h-\gamma\phi\right)}$. Integrating
this ODE, taking into account the sign of $\op$, we get that the
length of the pinned fluxons with $\op>0$ is given by
\begin{equation}\label{eq.length_d=0a}
2L = \frac{\sqrt2}{\gamma}\,\left[\sqrt{H_0+h-\gamma\iphi} -
  \sqrt{H_0+h-\gamma\ophi}\right] = \frac{\ip-\op}{\gamma}
\end{equation}
and for $\op<0$, we have
\begin{equation}\label{eq.length_d=0b}
2L = \frac{\sqrt2}{\gamma}\,\left[\sqrt{H_0+h-\gamma\iphi} +
  \sqrt{H_0+h-\gamma\ophi}\right] = \frac{\ip-\op}{\gamma}. 
\end{equation}
These lengths are plotted in Figure~\ref{fig.length_d_0} for
$\gamma=0.15$. 
\begin{figure}[htb] 
\centering
\parbox[b]{.6\textwidth}{\includegraphics[width=0.5\textwidth]{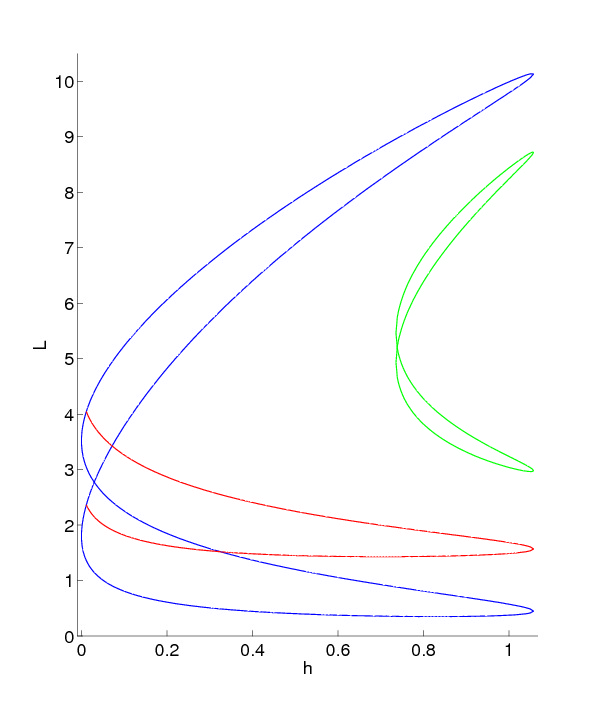}\vspace*{-6mm}}\hfill%
\parbox[b]{.4\textwidth}{\includegraphics[width=0.35\textwidth]{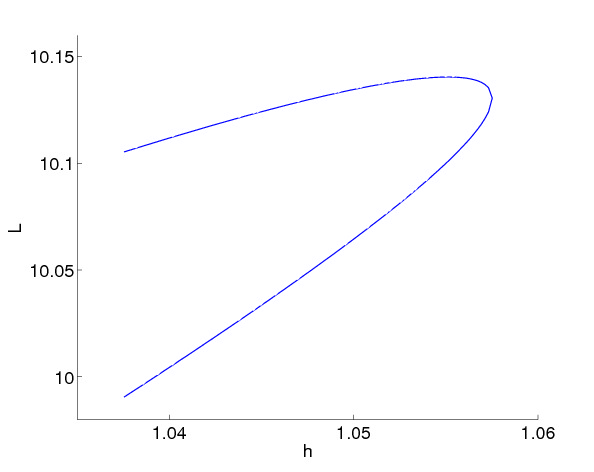}\\
\includegraphics[width=0.35\textwidth]{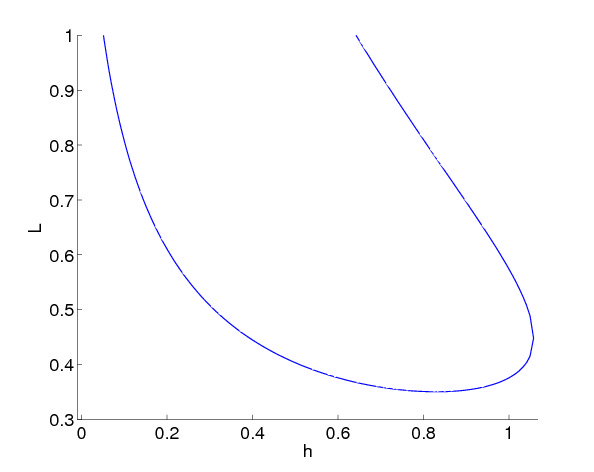}}
\caption{\label{fig.length_d_0} 
  The lengths of the pinned fluxons for $\gamma=0.15$ and $d=0$. The
  lengths of the pinned fluxons with $\ophi=4\pi-\arccos(1-h)$ are
  plotted in green (branches 4 and 5), the lengths of pinned fluxons
  with $\ophi=2\pi+\arccos(1-h)$ and $\op>0$ are in red (branch 3).
  The lengths of the remaining pinned fluxons (branches 1 and 2) are
  indicated by the blue curves. The panels on the 
  right zoom into the top and bottom and show that the minimal and
  maximal length are not obtained for $h_{\rm max}$, but a smaller
  value.}
\end{figure}
The blue curve is formed by the branches 1 and 2, the red curve  is
branch 3 and the green curve is formed by the branches 4 and 5.
This plot shows that there is a positive minimal and maximal length
for the inhomogeneity to sustain pinned fluxons. Inhomogeneities with
shorter or longer lengths will not be able to sustain pinned fluxons. %
Figure~\ref{fig.length_d_0} illustrates also that the maxima and
minima of the possible length of the inhomogeneity are attained inside
the interval $(0,h_{\rm max})$, not at the endpoints. These extremal
points will play an important role in the stability analysis as we
will see in the next section.

\begin{remark}\label{rem.branch}
  At $\ophi=2\pi+\arcsin\gamma$, i.e\ $h=h_2$, there is no
  bifurcation, although a solution appears/disappears. To see that
  this is not a bifurcation point, we look at the disappearing
  solution in the limit $h\downarrow h_2$. For $h\downarrow h_2$,
  there is one solution with $\ophi\approx2\pi+\arcsin\gamma$ and
  $\op<0$ and one solution with $\ophi\approx2\pi+\arcsin\gamma$ and
  $\op>0$. The solution with $\op<0$ remains very close to
  $2\pi+\arcsin\gamma$ for $x>L$. However, the solution with $\op>0$
  is tracking almost all of the homoclinic connection to
  $2\pi+\arcsin\gamma$. And in the limit $h\downarrow h_2$ this
  solution ``splits'' into the pinned fluxon with $\op=0$ and a full
  homoclinic connection (fluxon-antifluxon pair).
\end{remark}

In general, the derivation of the existence of the pinned fluxons
shows that for fixed $\gamma>0$ and $d=0$, there will always be a
strictly positive minimal and maximal length for the existence of
pinned fluxons.  From Figure~\ref{fig.ham_d=0}, it follows that the
green curve of pinned fluxons with $\ophi=4\pi-\arccos(1-h)$ is not
present if $\gamma>\gamma_1$. Similarly if $\gamma>\gamma_2$, the red
curve of pinned fluxons with $\ophi=2\pi+\arccos(1-h)$ and $\op>0$ are
not present.  Below we summarise the results for the existence
of the pinned fluxons with an induced current:
\begin{theorem}\label{th.existence_d=0}
  For $d=0$ and every $0<\gamma\leq\frac1\pi$, there are $L_{\rm
    min}(\gamma)$ and $L_{\rm max}(\gamma)$, such that for every
  $L\in(L_{\rm min},L_{\rm max})$, there are at least two pinned
  fluxons (at least one for $L=_{\rm min}$ or $L_{\rm max}$).
  Furthermore
\[
\lim_{\gamma\downarrow0}L_{\rm min}(\gamma)
  =0, \quad \lim_{\gamma\downarrow0}L_{\rm max}(\gamma)=\infty, 
\]
and
\[ \lim_{\gamma\uparrow1/\pi}L_{\rm min}(\gamma) =
  \lim_{\gamma\uparrow1/\pi}L_{\rm max}(\gamma) =\textstyle 
  \sqrt{\frac\pi2\left(\arcsin\frac1\pi+\sqrt{\pi^2-1}\right)} -
  \sqrt{\frac\pi2\left(\arcsin\frac1\pi+\sqrt{\pi^2-1}-\pi\right)}
  \approx 1.8.
\]
For given $L\in[L_{\rm min},L_{\rm max}]$, the maximum possible number
of simultaneously existing pinned fluxons is~6.  For
$\gamma>\frac1\pi$, there exist no pinned fluxons.
\end{theorem}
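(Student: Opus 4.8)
The plan is to extract all the quantitative claims of the theorem directly from the explicit length formulas~\eqref{eq.length_d=0a} and~\eqref{eq.length_d=0b}, together with the parametrisation of the intersection points already established in~\eqref{eq.phi}. The existence of at least two pinned fluxons for $L$ strictly inside an interval follows from the fact that $\iphi$ admits two values, $\pi\pm\arccos(2\pi\gamma-(1-h))$, so every admissible $h$ already yields two distinct entry points (hence at least two fluxons) whenever $h$ is in the interior of its range. Thus I would first record that the map $h\mapsto L$ on each branch is continuous on its domain $[0,h_{\rm max}]$, and that the global minimal and maximal lengths $L_{\rm min}(\gamma)$ and $L_{\rm max}(\gamma)$ are attained as the min and max of this continuous, piecewise-smooth length function over all branches. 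Because the two $\iphi$-choices persist across the whole admissible $h$-range, any $L$ strictly between $L_{\rm min}$ and $L_{\rm max}$ is hit by at least two fluxons, while at the endpoints $L_{\rm min},L_{\rm max}$ the extremum may be attained by a single configuration.

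Next I would treat the two limits in $\gamma$. For $\gamma\downarrow0$, I would expand~\eqref{eq.length_d=0a} using $H_0(\gamma)\to0$ and $\iphi,\ophi\to$ their $\gamma=0$ values; the branch 1/2 behaviour should reduce to the $\gamma=0$ formula~\eqref{eq.length_d=0}, which takes all values in $[0,\infty)$ by the monotonicity argument already given for Lemma~\ref{lem.exist_d=0,g=0}. Concretely, as $h\to h_{\rm max}=2(1-\pi\gamma)\to2$ the entry and exit points collapse and $L\to0$, giving $\lim_{\gamma\downarrow0}L_{\rm min}=0$; while the long monotone fluxon (small $h$) recovers $L\to\infty$, giving $\lim_{\gamma\downarrow0}L_{\rm max}=\infty$. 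For $\gamma\uparrow1/\pi$, the admissible $h$-interval shrinks to the single point $h_{\rm max}=0$ (since $h_{\rm max}=2(1-\pi\gamma)\to0$), so both $L_{\rm min}$ and $L_{\rm max}$ are forced to the common value $L=\tfrac{\ip-\op}{2\gamma}$ evaluated at $\gamma=1/\pi$, $h=0$. I would substitute $\gamma=1/\pi$ into $\ip=\sqrt{2(H_0+h-\gamma\iphi)}$ and $\op=\sqrt{2(H_0+h-\gamma\ophi)}$ with $\cos\iphi=\cos\ophi=1$ at $h=0$, and simplify $H_0(1/\pi)$ using~\eqref{eq.H0} to land on the stated closed form $\approx1.8$. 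This is essentially a direct, if slightly tedious, substitution.

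For the counting claim, that at most $6$ pinned fluxons coexist at a given $L$, I would argue combinatorially from the branch structure. There are two possible entry points $(\iphi,\ip)$ and up to five exit branches listed after~\eqref{eq.phi}, but the length depends on $h$ only through $\ip-\op$ over $2\gamma$, and $\ip$ is fixed once the entry branch is fixed. The key observation is that a \emph{single} value of $L$ corresponds, on each exit branch, to solving $\ip(h)-\op(h)=2\gamma L$ for $h$; I would show each branch contributes at most a bounded number of solutions and that the total, after accounting for the shared $h$-dependence and the geometry in Figure~\ref{fig.length_d_0}, never exceeds six. Finally, the nonexistence for $\gamma>1/\pi$ is immediate from~\eqref{eq.phi}: well-definedness of $\cos\iphi=1-(h+2\pi\gamma)$ forces $h\le2(1-\pi\gamma)$, which has no nonnegative solution once $\gamma>1/\pi$, so no admissible blue orbit connects the red and magenta manifolds.

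The main obstacle I anticipate is the counting bound of $6$: the limits and the qualitative existence statement follow cleanly from the explicit formulas, but establishing the \emph{sharp} maximum number requires carefully tracking, for each of the five exit branches and each of the two entry choices, how many times the length function $L(h)$ attains a fixed value on its domain---that is, controlling the monotonicity/fold structure of $L(h)$ on each branch (the folds visible in the zoomed panels of Figure~\ref{fig.length_d_0}). The remaining parts reduce to monotonicity arguments and direct substitution that I would not expect to be delicate.
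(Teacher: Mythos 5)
Your overall route is the same as the paper's: Theorem~\ref{th.existence_d=0} is presented there as a summary of the preceding phase-plane construction, and everything is read off from the parametrisation \eqref{eq.phi}, the length formulas \eqref{eq.length_d=0a}--\eqref{eq.length_d=0b}, and the resulting $L$--$h$ curves. Your treatment of the two $\gamma$-limits (reduction to \eqref{eq.length_d=0} as $\gamma\downarrow0$; collapse of the admissible $h$-interval to $\{0\}$ and direct substitution of $H_0(1/\pi)$ as $\gamma\uparrow1/\pi$) and of the nonexistence for $\gamma>1/\pi$ is correct and matches the paper.

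There is, however, one genuine gap: your argument for ``at least two pinned fluxons for every $L\in(L_{\rm min},L_{\rm max})$''. The two entry choices $\iphi=\pi\pm\arccos(2\pi\gamma-(1-h))$ give two fluxons for the \emph{same value of $h$}, but these generically sit in defects of \emph{different lengths} (the paper notes explicitly that the $\iphi=\pi+\arccos(\cdot)$ fluxon occupies a shorter defect). So ``two fluxons per $h$'' does not yield ``two fluxons per $L$'', and continuity of $L(h)$ on each sub-branch alone does not rule out two disjoint arcs with disjoint $L$-ranges. What closes the argument is the global topology of the blue curve: the four sub-branches (two entry choices combined with exit branches 1 and 2) join pairwise at the endpoints of the $h$-range --- at $h=h_{\rm max}$ the two entry choices coincide since $\iphi=\pi$, and at $h=0$ the two exit branches coincide since $\ophi=2\pi$ --- so together they form a single closed curve in the $(h,L)$-plane on which $L$ attains both $L_{\rm min}$ and $L_{\rm max}$. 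Traversing this loop and applying the intermediate value theorem then gives every $L$ strictly between the extremes at least twice. You should make this joining-up explicit rather than appeal to the persistence of the two $\iphi$-choices. Separately, note that your anticipated obstacle --- the sharp bound of $6$ coexisting fluxons --- is not actually resolved by the paper either: it is read off from the computed fold structure of the curves in Figure~\ref{fig.length_d_0}, so your plan to ``carefully track'' the number of preimages on each branch would, if carried out, go beyond what the paper itself establishes.
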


To relate the rich family of pinned fluxons which exists for
$\gamma>0$ with the unique pinned fluxons for $\gamma=0$, we have
sketched the $L$-$h$ curves for $\gamma=0.001$ in
Figure~\ref{fig.length_d_0_g_small}. 
\begin{figure}[htb]
  \centering
  \includegraphics[width=0.5\textwidth]{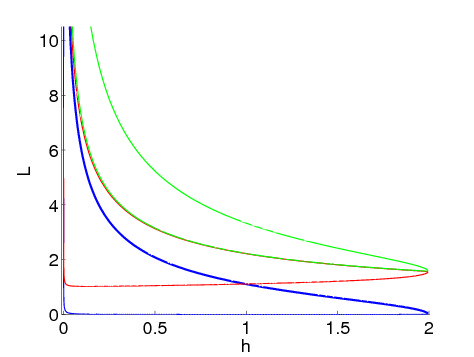}
  \caption{\label{fig.length_d_0_g_small}$L$-$h$ curves of the pinned
  fluxons for $\gamma=0.001$.}
\end{figure}
The bold blue curve converges clearly to the curve in
Figure~\ref{fig.length_d=0,g=0}. These are the lengths associated with
the pinned fluxons with $\iphi=\pi-\arccos(2\pi\gamma-(1-h))$ and
$\ophi=2\pi-\arccos(1-h)$.  Clearly there are some other convergent
$L$-$h$ curves as well. The length of the blue curve associated with
the pinned fluxons with $\iphi=\pi+\arccos(2\pi\gamma-(1-h))$ and
$\ophi=2\pi-\arccos(1-h)$ goes to zero as expected. The convergent red
and green curves can be associated with $4\pi$-fluxons in the case
$\gamma=0$. A $4\pi$-fluxon is a connection between $0$ and
$4\pi$. This is not possible without an inhomogeneity, but with an
inhomogeneity such connections are possible and some are stable. There
are four possible $4\pi$-fluxons if $\gamma=0$ and the green and red
curves converge to those solutions.  For more details,
see~\cite{mmath_chris}.

\subsection{Stability of the pinned fluxons with $d=0$}

As seen in the introduction, the stability of the pinned fluxons is
determined by the eigenvalues of the linearisation operator $\Lpin$ as
defined in~\eqref{eq.Lpin}. For $d=0$, the linearisation operator takes
the form
\[
  \Lpin(x;L,\gamma,0) =   \left\{
  \begin{arrayl}
 D_{xx} - \cos\phipin(x;L,\gamma,0),&& |x|>L;\\    
 D_{xx}, && |x|<L.
  \end{arrayl}\right.
\]
where $\phipin$ is one of the pinned fluxons found in the previous
section. 

When there is no induced current ($\gamma=0$), expressions for the
eigenvalues of $\Lpin$ can be found explicitly.  Recall that for $d=0$
and $\gamma=0$, there is a unique pinned fluxon for each length $L\geq
0$, see Lemma~\ref{lem.exist_d=0,g=0}.
\begin{lemma}\label{lem.stability_d=0,g=0}
  For $\gamma=0$ and $d=0$, the linear operator $\Lpin$ associated to
  the unique pinned fluxon in the defect with length~$L$ has a
  largest eigenvalue~$\Lambda_{\rm max}\in(-1,0)$ given implicitly by
  the largest solution of
\begin{equation}\label{eq.Lambda_max}
\textstyle
\begin{arrayl}
  \lefteqn{\textstyle-\mu\,\left[\mu+\frac12\,\sqrt{2(1+\cos\iphi)}\right]
  +\frac12\,(1-\cos\iphi) = } \\
&&\qquad\qquad 
-\sqrt{1-\mu^2} \,\left[\mu+\frac12\,\sqrt{2(1+\cos\iphi)}\right]\,
\tan \left(\sqrt{1-\mu^2}\frac{\pi-\iphi}{\sqrt{2(1-\cos\iphi)}}\right),
\end{arrayl}
\end{equation}
where $\mu=\sqrt{1+\Lambda_{\rm max}}\in(0,1)$ and the relation
between $\iphi$ and~$L$ is given in~\eqref{eq.in_g=0}
and~\eqref{eq.length_d=0}.
\end{lemma}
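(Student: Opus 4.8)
The plan is to exploit the reflection symmetry of the pinned fluxon and to solve the eigenvalue equation $\Lpin v=\Lambda v$ explicitly in each of the three regions. First I would record that the fluxon of Lemma~\ref{lem.exist_d=0,g=0} satisfies $\phipin(-x;L,0,0)=2\pi-\phipin(x;L,0,0)$, so that $\cos\phipin$ is even in $x$ and $\Lpin$ commutes with the parity $v(x)\mapsto v(-x)$. Hence its eigenfunctions may be taken even or odd; since the eigenfunction belonging to the largest eigenvalue is the unique nodeless one (Sturm's theorem, as recalled in the introduction) and an odd function must vanish at the origin, $\Lambda_{\rm max}$ has an \emph{even} eigenfunction. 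It therefore suffices to construct the even nodeless solution and impose matching at $x=L$.

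On the inhomogeneity $|x|<L$ the equation is $v''=\Lambda v$; writing $\mu=\sqrt{1+\Lambda}$, the even solution is $v(x)=A\cos(\sqrt{1-\mu^2}\,x)$ when $\Lambda\in(-1,0)$ (so $\mu\in(0,1)$). On $x>L$ the coefficient is $\cos\phipin=1-2\sech^2(x-x^*)$, so the equation becomes $v''+2\sech^2(x-x^*)\,v=\mu^2 v$, a P\"oschl--Teller problem whose solution decaying at $+\infty$ is, as one checks directly, $v_+(x)=(\tanh(x-x^*)+\mu)\,e^{-\mu(x-x^*)}$ (the companion solution $(\tanh\xi-\mu)e^{\mu\xi}$ grows). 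The central computation is the $C^1$-matching at $x=L$ (legitimate since $v\in H^2(\mathbb{R})\subset C^1$, while the jump in the zeroth-order coefficient only makes $v''$ discontinuous): equating logarithmic derivatives gives
\[
-\sqrt{1-\mu^2}\,\tan\!\big(\sqrt{1-\mu^2}\,L\big)=\frac{v_+'(L)}{v_+(L)}.
\]
To turn this into~\eqref{eq.Lambda_max} I would evaluate at $\xi_L:=L-x^*$. Since $\phi_0(\xi_L)=\phipin(L)=\ophi=2\pi-\iphi>\pi$ we have $\xi_L>0$, and $\cos\phi_0(\xi_L)=\cos\iphi$ yields $\sech^2\xi_L=\tfrac12(1-\cos\iphi)$ and $\tanh\xi_L=\tfrac12\sqrt{2(1+\cos\iphi)}$; substituting these into $v_+'(L)/v_+(L)=(\sech^2\xi_L-\mu\tanh\xi_L-\mu^2)/(\tanh\xi_L+\mu)$ and using $L=(\pi-\iphi)/\sqrt{2(1-\cos\iphi)}$ from \eqref{eq.in_g=0}--\eqref{eq.length_d=0} reproduces \eqref{eq.Lambda_max} after clearing the denominator $\mu+\tanh\xi_L$.

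It remains to locate the root, i.e.\ to show the largest solution $\mu$ lies in $(0,1)$, equivalently $\Lambda_{\rm max}\in(-1,0)$. The lower bound is automatic: the continuous spectrum is $(-\infty,-1)$, so any $H^2$ eigenfunction needs $\mu>0$, i.e.\ $\Lambda>-1$. For the upper bound I would exclude $\Lambda\ge0$ directly: then the even inner solution is $A\cosh(\sqrt{\Lambda}\,x)$ (or a constant), whose logarithmic derivative at $L$ is $\ge0$, whereas $v_+'(L)/v_+(L)=(s-\mu t-\mu^2)/(t+\mu)$ with $s=\sech^2\xi_L<1$, $t=\tanh\xi_L\ge0$ and $\mu\ge1$ is strictly negative, so the two cannot match for $L>0$. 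Finally, to produce a genuine eigenvalue in $(-1,0)$ I would run an intermediate-value argument on the nodeless branch $\sqrt{1-\mu^2}\,L\in(0,\pi/2)$: as $\mu\uparrow1$ the inner logarithmic derivative tends to $0$ while the outer one tends to $(s-t-1)/(t+1)<0$, and at the lower end of the branch the difference has the opposite sign (it tends to $-\infty$ as $\tan\to+\infty$ when $L>\pi/2$, and equals $-\tan L-s/t<0$ when $L<\pi/2$). The matching function thus changes sign and has a root whose eigenfunction is nodeless — the inner $\cos$ has no zero since its argument stays in $(-\pi/2,\pi/2)$, and $v_+>0$ because $\tanh\xi+\mu>0$ for $\xi\ge\xi_L>0$ — so by Sturm's theorem this nodeless mode is exactly $\Lambda_{\rm max}$.

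I expect the main obstacle to be this last step: pinning down that the even nodeless solution is genuinely the \emph{largest} root and that no eigenvalue escapes to $0$ or below $-1$. The explicit P\"oschl--Teller solution and the sign bookkeeping for $s$, $t$ and $\mu$ make this tractable, but care is needed at the borderline limits $L\to0$ (where $\Lambda_{\rm max}\uparrow0$, recovering the homogeneous translational mode) and $L\to\infty$ (where $\Lambda_{\rm max}\downarrow-1$, the edge of the continuous spectrum).
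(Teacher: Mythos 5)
Your proposal is correct and follows essentially the same route as the paper's appendix proof: parity forces the top eigenfunction to be even, the P\"oschl--Teller solution $(\tanh\xi+\mu)e^{-\mu\xi}$ of the outer problem is matched to $A\cos(\sqrt{1-\mu^2}\,x)$ at the edge of the defect, $\Lambda\ge 0$ is excluded by a sign comparison of the two logarithmic derivatives, and an intermediate-value argument produces the root with $\mu\in(0,1)$. The only substantive addition is your explicit check that the constructed mode is nodeless on the branch $\sqrt{1-\mu^2}\,L\in(0,\pi/2)$, which pins down via Sturm theory that this root is indeed $\Lambda_{\rm max}$ --- a point the paper's proof leaves largely implicit.
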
 
In Figure~\ref{fig.d=0_stab}, $\Lambda_{\rm max}$ is sketched as
function of the half-length~$L$ of the pinned fluxon.  The proof of
Lemma~\ref{lem.stability_d=0,g=0} is quite technical; it is given in
appendix~\ref{app.Lambda1}.

\begin{remark}
  For $L$ large (hence $\iphi$ small), equation~\eqref{eq.Lambda_max}
  has more solutions. Hence for those pinned fluxons $\Lpin$ has some
  smaller eigenvalues in $(-1,0)$ too.
\end{remark}

\begin{corollary}\label{cor.stability_d=0,g=0}
If there is no induced bias current ($\gamma=0$) and the
microresistor has $d=0$, then the unique pinned fluxon in the defect
with length~$L$ is linearly and nonlinearly stable. The pinned
fluxon is asymptotically stable if $\alpha>0$.
\end{corollary}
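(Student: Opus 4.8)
The plan is to combine the explicit eigenvalue information from Lemma~\ref{lem.stability_d=0,g=0} with the abstract stability criteria laid out in the introduction. The corollary asserts three things (linear stability, nonlinear stability, and asymptotic stability for $\alpha>0$), and each follows from a distinct piece of machinery already established, so I would organise the proof as three short deductions rather than one long computation.

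First I would establish linear stability. Lemma~\ref{lem.stability_d=0,g=0} tells us that the largest eigenvalue $\Lambda_{\rm max}$ of $\Lpin$ lies in $(-1,0)$, hence is strictly negative. Since the continuous spectrum of $\Lpin$ is the interval $(-\infty,-\sqrt{1-\gamma^2})=(-\infty,-1)$ for $\gamma=0$, and every eigenvalue is bounded above by $\Lambda_{\rm max}<0$, the entire spectrum of $\Lpin$ is strictly negative. By the argument given in the introduction following~\eqref{eq.eigenval}---analysing the quadratic $\Lambda=\lambda^2+\alpha\lambda$---every $\Lambda\le 0$ yields values of $\lambda$ with non-positive real part, so no $\lambda$ in the spectrum has positive real part. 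This gives linear stability directly.

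Next I would deduce nonlinear stability using the Hamiltonian structure. Because all eigenvalues of $\Lpin$ are strictly negative and the continuous spectrum lies in $(-\infty,-1)$, the operator $\Lpin$ has only strictly negative spectrum, so $-\Lpin$ is positive. The Hessian $D^2\mathcal{H}(\phipin,0)=\mathrm{diag}(-\Lpin,I)$ is therefore positive definite, making $(\phipin,0)$ a (nondegenerate) minimum of the conserved-in-the-$\alpha=0$-sense energy $\mathcal{H}$. Invoking the dissipation identity~\eqref{eq.Ham_dyn}, which shows $\frac{d}{dt}\mathcal{H}(u)\le 0$ along solutions, a standard Lyapunov argument (as referenced via~\cite{ddvgs07}) gives that solutions starting near $(\phipin,0)$ remain near it for all time, i.e.\ nonlinear stability.

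Finally, asymptotic stability for $\alpha>0$: here $\mathcal{H}$ serves as a strict Lyapunov function, since~\eqref{eq.Ham_dyn} gives $\frac{d}{dt}\mathcal{H}(u)=-\alpha\int P^2\,dx<0$ away from the equilibrium set $P\equiv 0$, and a LaSalle-type invariance argument combined with the nondegeneracy of the minimum forces convergence to $(\phipin,0)$. The main obstacle I anticipate is not the linear step---which is immediate from the lemma---but making the nonlinear and asymptotic arguments fully rigorous in the infinite-dimensional setting: one must verify that the coercivity of the Hessian on $H^1\times L^2$ genuinely controls the relevant norm (modulo the one-dimensional degeneracy that translation symmetry would normally introduce, but which is \emph{broken} here by the inhomogeneity, so the minimum is genuinely isolated), and that the LaSalle invariance principle applies despite the lack of compactness on the unbounded domain. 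Since the paper explicitly defers the general nonlinear framework to~\cite{ddvgs07} and~\cite{knight09}, I would cite those for the technical coercivity and invariance estimates rather than reproving them, keeping the corollary's proof to the three-line spectral-sign deduction plus an appeal to the established Hamiltonian-dissipative stability theory.
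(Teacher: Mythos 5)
Your proposal is correct and follows essentially the same route the paper intends: the corollary is an immediate consequence of Lemma~\ref{lem.stability_d=0,g=0} (largest eigenvalue in $(-1,0)$, continuous spectrum in $(-\infty,-1)$) combined with the stability criteria set out in the introduction — the quadratic $\Lambda=\lambda^2+\alpha\lambda$ for linear stability and the Lyapunov/Hamiltonian argument via~\eqref{eq.Ham_dyn} and~\cite{ddvgs07} for nonlinear and asymptotic stability. Your additional remarks on coercivity and the absence of a translational zero mode are consistent with the paper's framework and do not change the argument.
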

 \begin{figure}[htb]
   \centering
 \includegraphics[width=0.6\textwidth]{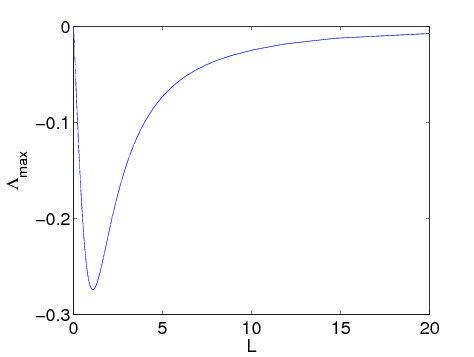}
   \caption{The largest eigenvalue of the linearised operator
     $\Lpin$ at $d=0$ and $\gamma=0$ as function of the half-length $L$
     of the inhomogeneity.}\label{fig.d=0_stab}
 \end{figure}
 
 Next we consider the case that there is an induced bias current
 ($\gamma>0$). In the previous section we have seen that in this case
 the pinned fluxons come in families, characterised by the blue, red
 and green curves in Figure~\ref{fig.length_d_0}. Locally along those
 curves, we can write either $L$ as a function of $h$, or, $h$ as a
 function of $L$. Along those curves, we will look for changes of
 stability, i.e., find whether the operator $\Lpin$ has an
 eigenvalue~0 (recall that eigenvalues of $\Lpin$ must be real. We
 will show that $\Lpin$ has an eigenvalue~0 if and only if along the
 $L$-$L$ curve we have $\frac{dL}{dh}=0$ or the pinned fluxon is
 isolated. Isolated pinned fluxons occur when $\gamma$ is maximal,
 i.e., $\gamma=\frac1\pi$ or when $\gamma=\gamma_1$, the maximal
 $\gamma$-value for which pinned fluxons with
 $\ophi=4\pi-\arccos(2\pi\gamma-1)$ exist.  This lemma is a special
 case of a more general theorem presented in~\cite{knight09}. The
 proof simplifies considerably in this case.
\begin{lemma}\label{th.pin_eigenval0}
  For any $\gamma\geq 0$, the linear operator
  $\Lpin(x;L,\gamma,0)$ has an eigenvalue zero if and only if
  \begin{itemize}
  \item $\frac{dL}{dh} = 0$;
  \item or $\gamma=\frac{1}{\pi}$ (this eigenvalue zero is
    the largest eigenvalue);
  \item or $\gamma=\gamma_1\approx0.18$, the solution of
    $\cos\phimax(\gamma_1)+1=2\pi\gamma_1$ (see
    section~\ref{sec.d=0,gamma>0}), and $\phipin$ is such that
    $\iphi=\pi$, $\ophi = 4\pi-\arccos(2\pi\gamma_1-1)=2\pi+\phi_{\rm
      max}(\gamma_1)$ (this eigenvalue zero is not the largest
    eigenvalue).
  \end{itemize}
\end{lemma}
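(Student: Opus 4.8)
The plan is to characterise the eigenvalue zero of $\Lpin(x;L,\gamma,0)$ by solving $\Lpin v = 0$ explicitly on each of the three intervals and matching the solution in $C^1$. Since the natural domain is $H^2(\mathbb{R})$ and eigenfunctions are $C^1$, I would first write $\Lpin v = 0$ as $v_{xx} = \cos\phipin \, v$ for $|x|>L$ and $v_{xx}=0$ for $|x|<L$. On the inhomogeneity the general solution is affine, $v(x)=ax+b$; on the outer intervals the equation is the linearisation of the sine-Gordon fluxon, whose solutions I know explicitly. In particular, for $\gamma=0$ the derivative $\phipin'=\phi_0'(x\mp x^*)=2\,\sech(x\mp x^*)$ is a decaying solution of $v_{xx}=\cos\phi_0\,v$ on each outer interval, and a second (growing) solution is obtained by reduction of order. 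The requirement $v\in H^2(\mathbb{R})$ forces $v$ to be a multiple of the decaying solution on $(L,\infty)$ and on $(-\infty,-L)$.

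\textbf{The matching condition as a derivative of the length function.}
The core of the argument is to show that the $C^1$-matching of the affine middle piece to the two decaying outer solutions at $x=\pm L$ is solvable \emph{if and only if} $\frac{dL}{dh}=0$. Here I would exploit the variational/Hamiltonian structure already set up: differentiating the existence relations for the pinned fluxon with respect to the parameter $h$ produces a solution $w=\partial_h\phipin$ of the inhomogeneous (matched) problem, and the length constraint~\eqref{eq.length_d=0a}--\eqref{eq.length_d=0b} ties the matching data to $\frac{dL}{dh}$. Concretely, the entering and leaving points $(\iphi,\ip)$ and $(\ophi,\op)$ depend smoothly on $h$ away from the degenerate cases, and the derivative of the pinned fluxon with respect to $h$ gives a candidate eigenfunction whose $H^2$-membership (i.e.\ decay and correct sign of $\op$, $\ip$) is controlled precisely by the vanishing of $\frac{dL}{dh}$. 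Matching the jump in the derivative at $x=\pm L$ across the discontinuity of $\phipin''$ reduces the whole system to a single scalar equation that, after using~\eqref{eq.length_d=0a}--\eqref{eq.length_d=0b}, reads $\frac{dL}{dh}=0$.

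\textbf{The two boundary/isolated cases.}
The cases $\gamma=\frac1\pi$ and $\gamma=\gamma_1$ require separate treatment because there the branch of pinned fluxons is not an interior point of a smooth $L$-$h$ curve, so $\frac{dL}{dh}$ is not the right object; instead the pinned fluxon is isolated. For $\gamma=\frac1\pi=\gamma_{\max}$ the orbit in the inhomogeneity degenerates ($h_{\max}=0$), so $\iphi$ and $\ophi$ collapse; here I would verify directly that the matched zero-mode exists and, using Sturm's theorem (the eigenfunction has no interior zeros), identify this eigenvalue zero as the \emph{largest} eigenvalue. For $\gamma=\gamma_1$ the degeneracy is $h_1=h_{\max}$, i.e.\ the blue orbit touches the red unstable manifold tangentially at $\iphi=\pi$ while meeting the magenta manifold at the rightmost point $\ophi=2\pi+\phimax(\gamma_1)$; here the zero-mode again exists by direct matching, but counting its interior zeros via Sturm's theorem shows it is \emph{not} the largest eigenvalue.

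\textbf{Main obstacle.}
The hard part will be turning the $C^1$-matching at $x=\pm L$ into exactly the condition $\frac{dL}{dh}=0$ rather than some related but inequivalent determinant. This means carefully computing how the outer decaying solutions depend on the matching point, keeping track of the signs of $\ip$ and $\op$ (which change along the branches, cf.\ the five cases for $(\ophi,\op)$), and verifying that no spurious zero-mode is introduced by the discontinuity of $\phipin''$ at $\pm L$. Identifying the degenerate endpoints and confirming, via Sturm oscillation counting, whether the resulting eigenvalue zero is or is not the largest eigenvalue is the second delicate point, since it determines the stability conclusion drawn in the next section.
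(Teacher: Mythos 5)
Your overall strategy is the same as the paper's: solve $\Lpin v=0$ piecewise (an affine function inside the defect, the exponentially decaying solution --- a multiple of $\phipin'$ --- outside), impose $C^1$ matching at $x=\pm L$, and identify the resulting scalar compatibility condition with the vanishing of $\frac{dL}{dh}$. However, the equivalence you assert, ``matching solvable iff $\frac{dL}{dh}=0$'', is not literally what comes out of the computation. The matching determinant equals $\ip\sin\iphi(\sin\ophi-\gamma)-\op\sin\ophi(\sin\iphi-\gamma)$, and the identity linking it to the length function reads $\ip\sin\iphi\,\op\sin\ophi\,\frac{dL}{dh}=-\frac{1}{2\gamma}\bigl[\ip\sin\iphi(\sin\ophi-\gamma)-\op\sin\ophi(\sin\iphi-\gamma)\bigr]$. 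The second and third bullets of the lemma are exactly the zeros of the prefactor $\ip\sin\iphi\,\op\sin\ophi$ that survive substitution back into the compatibility condition: one must check one by one when $\sin\iphi=0$, $\sin\ophi=0$, $\ip=0$ or $\op=0$ is consistent with the matching, and this case analysis is what produces the specific values $\gamma=\frac1\pi$ and $\gamma=\gamma_1$ and, crucially, certifies that there are no \emph{other} exceptional cases. You flag this as the ``main obstacle'' and attribute the exceptional cases to the branch being isolated, but you do not carry out the analysis, so the completeness of the list in the lemma is not established by your argument.

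The more concrete hole is at the branch point $h=h_2=1-\sqrt{1-\gamma^2}$, where $\ophi=2\pi+\arcsin\gamma$ and $\op=0$: there the pinned fluxon is identically the equilibrium $2\pi+\arcsin\gamma$ for $x>L$, so $\phipin'\equiv 0$ on $(L,\infty)$ and your Ansatz ``outer solution $=$ multiple of $\phipin'$'' (or of $\partial_h\phipin$, which is again a multiple of $\phipin'$ outside the defect) collapses; the decaying solution is instead the pure exponential $e^{-\sqrt[4]{1-\gamma^2}\,(x-L)}$. The paper needs a separate matching computation at this point, combined with sign information from the phase portrait ($\ip>0$, and $\sin\iphi>\gamma$ on one sub-branch while $\sin\iphi<0$ on the other), to show that no eigenvalue zero occurs there; the same degeneracy also forces one to rule out eigenfunctions vanishing identically on one side of the defect. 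Your proposal never tests this case, so the ``only if'' direction of the lemma is open at that point. The remainder --- the explicit construction of the zero mode in the two exceptional cases and the Sturm count of its interior zeros to decide whether the eigenvalue zero is the largest --- is in line with what the paper does.
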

\begin{proof}
  First we observe that differentiating~\eqref{eq.ham_ode_system} with
  respect to~$x$ shows that $\phipin'$ satisfies $\Lpin\phipin'=0$.
  However, it follows immediately from~\eqref{eq.ham_ode_system} that
  $\phipin'$ is not continuously differentiable, except when there
  exist $k_\pm\in\mathbb{N}$ such that $\iphi=k_-\pi$ and
  $\ophi=k_+\pi$. From the existence results, it follows that this
  happens only if $\gamma=\frac1\pi$. In this case, there is only one
  pinned fluxon and the blue curve in Figure~\ref{fig.length_d_0} has
  become a single point (there are no red or green curves).
  
  In all other cases, $\phipin'\not\in C^1(\mathbb{R}) \supset
  H^2(\mathbb{R})$ so $\phipin'$ is not an eigenfunction with the
  eigenvalue zero.  However, $\phipin'$ still plays a role in the
  eigenfunction related to any eigenvalue zero. Indeed, on both
  intervals $(\infty,-L)$ and $(L,\infty)$, the second order linear
  ODE $\Lpin \psi =0$ has two linearly independent solutions. As the
  asymptotic system is hyperbolic, one solution is exponentially
  decaying whilst the other is exponentially growing.  Thus if the
  linear operator~$\calL$ has an eigenvalue zero, then the
  eigenfunction in the intervals $(-\infty,L)$ and $(L,\infty)$ must
  be a multiple of the exponentially decaying solution. As $\phipin'$
  is exponentially decaying for $|x|\to\infty$ and satisfies ${\cal L}
  \phipin' = 0$ for $|x|>L$, it follows that for any eigenvalue zero,
  the eigenfunction must be a multiple of $\phipin'$ for $|x|>L$,
  unless $\phipin'\equiv 0$. The case $\phipin'\equiv 0$ happens only
  when $\ophi=2\pi+\arcsin\gamma$ and $x>L$. In this case, the
  appropriate eigenfunction is a multiple of
  $e^{-\sqrt[4]{1-\gamma^2}(x-L)}$.

  Next we look inside the inhomogeneity, i.e., $|x|<L$.  The
  linearised problem inside the defect for an eigenvalue zero can be
  solved explicitly and gives an eigenfunction of the form $A+B(x+L)$,
  with $A$ and $B$ free parameters and $|x|<L$.
  
  To conclude, if the linear operator $\Lpin$ has an eigenvalue zero,
  and $\ophi\neq 2\pi+\arcsin\gamma$ (we will consider this case
  later), then the eigenfunction is of the form
\[
\psi = \left\{
  \begin{arrayl}
    \phipin'(x) , &&  x<-L,\\
    A + B\, (x+L), && |x|<L,\\
    K\,\phipin'(x) , &&  x>L,
  \end{arrayl}\right.
\quad
\]
where $A$, $B$ and $K$ are free parameters. We have to choose the free
parameters such that $\psi$ is continuously differentiable at $\pm L$.
As there are only three free parameters and four matching conditions,
this will give us a selection criterion on the length~$L$ for which an
eigenvalue zero exists. 
The matching conditions are
\[
  A = \phipin'(-L^-), \quad 
  B = \phipin''(-L^-), \quad
  B =  K\phipin''(L^+), \qmbox{and} 
  A+2BL = K\phipin'(L^+), 
\]
where the notation $\phipin'(-L^-)=\lim_{x\uparrow -L}\phipin'(x)$,
$\phipin'(L^+)=\lim_{x\downarrow L}\phipin'(x)$, etc. Using that $p_{\rm
  in/out} = \phipin'(\mp L)$ and $\gamma+\phipin''(\pm L^\pm) = \sin\phi(\pm
L)=\sin\phi_{\rm in/out}$, this can be written as
\[
  A = \ip, \quad 
  B = \sin\iphi-\gamma, \quad
  B =  K(\sin\ophi-\gamma), \qmbox{and} 
  A+2BL = K\op. 
\]
Equations~\eqref{eq.length_d=0a} and~\eqref{eq.length_d=0b} show that
$L = \frac{\ip-\op}{2\gamma}$, hence the parameters are given by
\[
  A = \ip, \quad 
  B = \sin\iphi-\gamma, \qmbox{and} 
  K(\sin\ophi-\gamma) =  \sin\iphi-\gamma
\]
and the compatibility condition on $L$, or equivalently $h$, is
\begin{equation}\label{eq.comp}
0=\ip\sin\iphi(\sin\ophi-\gamma) - \op\sin\ophi(\sin\iphi-\gamma). 
\end{equation}
To derive this expression, we have multiplied the remaining equation
$[A+2BL=K\op]$ with $\gamma(\sin\ophi-\gamma)$.  This term would be
zero if $\sin\ophi=\gamma$, hence $\ophi=2\pi+\arcsin\gamma$ but this
case is not considered now.

For completeness, we also consider the case where we assume that the
eigenfunction vanishes for $x<-L$. If this is the case, then
matching at $x=-L$ gives immediately that $A=0=B$. Thus this leads to
a non-trivial eigenfunction only if $\phipin'(L)=0=\lim_{x\downarrow
  L}\phipin''(x)$. In other words, when $\phipin$ is a fixed point for
$x>L$. This happens only if $\ophi=2\pi+\arcsin\gamma$. This case
we will be considered later. 

Next we link the expression~\eqref{eq.comp} to the derivative of $L$
with respect to~$h$. As $L= \frac{\ip-\op}{2\gamma}$, the derivatives
of $\ip$ and $\op$ are needed. Differentiating~\eqref{eq.qp}
and~\eqref{eq.phi}, we get
\[
{\ip}\,\frac{d\ip}{dh} = {1-\gamma\iphi'(h)}, \quad
{\sin\iphi}\,\frac{d\iphi}{dh} = 1 \qmbox{and} 
{\op}\,\frac{d\op}{dh}
= {1-\gamma\ophi'(h)}, \quad {\sin\ophi}\,\frac{d\ophi}{dh} =  1.
\]
Thus differentiating $L= \frac{\ip-\op}{2\gamma}$ gives that
\begin{equation}\label{eq.dLdh}
\ip\sin\iphi\,\op\sin\ophi\,\frac{dL}{dh} = \frac{1}{2\gamma}\,
\left[\op\sin\ophi({\sin\iphi-\gamma}) -
  \ip\sin\iphi({\sin\ophi-\gamma})\right] 
\end{equation}
So we have shown that if $\ophi\neq2\pi+\arcsin\gamma$ and the
operator $\Lpin$ has an eigenvalue zero, then either
$\frac{dL}{dh}(h,\gamma)=0$ or $\ip\sin\iphi\,\op\sin\ophi=0$.
Considering $\ip\sin\iphi\,\op\sin\ophi=0$ in more detail, we get:
\begin{itemize}\addtolength{\itemsep}{-2mm} 
\item
$\sin\ophi=0$ would mean that $\ophi=2\pi$.
Going back to the compatibility condition~\eqref{eq.comp}, this
implies that $\gamma\ip\sin\iphi=0$, which only happens if also
$\sin\iphi=0$ or $\ip=0$. In the existence section we have seen
$\ip>0$, hence $\gamma\ip\sin\iphi=0$ can only happen if
$\iphi=\pi$, hence if $\gamma=\frac1\pi$;  
\item
  $\sin\iphi=0$ implies that $\iphi=\pi$.  Going back to the
  compatibility condition~\eqref{eq.comp}, this implies that
  $\gamma\op\sin\ophi=0$, which only happens if also $\sin\ophi=0$ or
  $\op=0$. Hence either $\gamma=\frac1\pi$ or $\gamma=\gamma_1$, as
  the case $\ophi=2\pi+\arcsin\gamma$ is excluded at this moment;
\item
$\ip\neq0$ as we have seen before;
\item
  $\op=0$ happens if $\ophi=2\pi+\arcsin\gamma$ or
  $\ophi=2\pi+\phi_{\rm max}(\gamma)$. Going back to the compatibility
  condition~\eqref{eq.comp}, this implies that
  $\ip\sin\iphi(\sin\ophi-\gamma)=0$. Since
  $\pi-\arcsin\gamma<\phi_{\rm max}(\gamma)<2\pi$, this implies this
  only happens if $\sin\iphi=0$, which case is considered before.
\end{itemize}
So altogether we have if $\ophi\neq2\pi+\arcsin\gamma$ and the operator
$\Lpin$ has an eigenvalue zero, then either
\begin{itemize}
\item 
$\frac{dL}{dh}(h,\gamma)=0$ or
\item 
$\iphi=\pi$ and $\ophi=2\pi$, which only happens when $\gamma=\frac
1\pi$. The eigenfunction in this case is $\phipin'$, which does not
have any zeros, hence the eigenvalue zero is the largest
eigenvalue. 
\item $\iphi=\pi$ and $\ophi=2\pi+\phi_{\rm max}(\gamma)$ (i.e.
  $\op=0$), which only happens if $\gamma=\gamma_1$. In this case the
  eigenfunction is $\phipin'$ for $x<L$ and
  $\frac{\gamma_1}{\gamma_1-\sin\phi_{\rm max}(\gamma_1)}\,\phipin'$
  for $x>L$. This eigenfunction has a zero at $x=L$, hence the
  eigenvalue zero is not the largest eigenvalue. Note that when
  $\gamma=\gamma_1$ the green $L(h)$ curve in
  Figure~\ref{fig.length_d_0} has degenerated to an isolated point
  related to the pinned fluxon~$\phipin$ considered in this case.
\end{itemize}

To show that the converse is true, we look at the three cases
$\frac{dL}{dh}(h,\gamma)=0$, $\gamma=\frac1\pi$ and $\gamma =
\gamma_1$ and $(\ophi,\op)=(2\pi+\phi_{\rm max},0)$. It is
straightforward to verify that the eigenfunctions as described earlier
can be constructed in those cases.

\smallskip Finally we look at the case $\ophi=2\pi+\arcsin\gamma$. In
this case $\gamma\leq \frac{4\pi}{1+4\pi^2}$ and
$h=h_2=1-\sqrt{1-\gamma^2}$.  Furthermore, the pinned fluxons
satisfies $\phipin'\equiv0$ for $x>L$.  In this case, the general form
of an eigenfunction for an eigenvalue zero is
\[
\psi = \left\{
  \begin{arrayl}
    \phipin'(x) , &&  x<-L,\\
    A + B\, (x+L), && |x|<L,\\
    K\,e^{-\sqrt[4]{1-\gamma^2}(x-L)} , &&  x>L,
  \end{arrayl}\right.
\quad
\]
where $A$, $B$ and $K$ are free parameters. We have to choose the free
parameters such that $\psi$ is continuously differentiable at $=\pm
L$, i.e.
\[
  A = \phipin'(-L^-), \quad 
  B = \phipin''(-L^-), \quad
  K= A+2BL, \qmbox{and} 
  B =  -K\sqrt[4]{1-\gamma^2}.
\]
As $L=\frac{\ip-\op}{2\gamma}=\frac{\ip}{2\gamma}$, this implies that
$A=\ip$, $B = \sin\iphi-\gamma$ {and} $K=\frac{\ip\sin\iphi}\gamma$,
with the matching condition 
\begin{equation}\label{eq.cond2}
\gamma(\sin\iphi-\gamma)
=-\sqrt[4]{1-\gamma^2} \sin\iphi\, \ip.
\end{equation}
If $\iphi=\pi+\arccos(2\pi\gamma-\sqrt{1-\gamma^2})$, then
$\sin\iphi<0$ and~\eqref{eq.cond2} cannot be satisfied as $\ip>0$ and
$\gamma>0$. If $\iphi=\pi-\arccos(2\pi\gamma-\sqrt{1-\gamma^2})$, then
the phase portrait in the existence section shows that
$\sin\iphi>\gamma$, thus  $(\sin\iphi-\gamma)>0$ and
again~\eqref{eq.cond2} cannot be satisfied. Thus no eigenvalue zero
can occur at $\ophi=2\pi+\arcsin\gamma$.  
\end{proof}

Lemma~\ref{th.pin_eigenval0} allows us to conclude the stability of
pinned fluxons.
\begin{theorem}\label{th.stability_d_0}
  For $d=0$, every $0<\gamma\leq\frac1\pi$, and every $L\in[L_{\rm
    min}(\gamma),L_{\rm max}(\gamma)]$, there
  is exactly one stable pinned fluxon. This pinned fluxon is linearly
  and nonlinearly stable (and asymptotically stable for $\alpha>0$).
  For $L$ sufficiently large $\left( L >
    \sqrt{\frac{\pi+\arcsin\gamma+\arccos(2\pi\gamma-\sqrt{1-\gamma^2})}%
      {2\gamma}}\right)$, the stable pinned fluxons are
  non-monotonic.
\end{theorem}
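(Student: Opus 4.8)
The plan is to prove the theorem by combining the eigenvalue-crossing criterion of Lemma~\ref{th.pin_eigenval0} with a global continuation and counting argument along the $L$-$h$ curves of Figure~\ref{fig.length_d_0}, anchored by the $\gamma\to0$ result of Corollary~\ref{cor.stability_d=0,g=0}. First I would reduce stability to an eigenvalue count: by the discussion in the introduction, a pinned fluxon is linearly stable precisely when $\Lpin$ has no positive eigenvalue, and, since $\Lpin$ is Sturm--Liouville with continuous spectrum in $(-\infty,-\sqrt{1-\gamma^2})$, nonlinear and (for $\alpha>0$) asymptotic stability then follow from the Hamiltonian structure and the dissipation identity~\eqref{eq.Ham_dyn} via the Hessian computation in the introduction. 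Thus it suffices to show that for each $L\in[L_{\rm min},L_{\rm max}]$ exactly one of the (up to six) coexisting pinned fluxons has no positive eigenvalue; write $N(h)$ for the number of positive eigenvalues of $\Lpin$ along a branch parametrised by $h$.

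Next I would set up the continuation. Along each arc of the $L$-$h$ diagram $\Lpin$ depends smoothly on $h$, so its eigenvalues move continuously, and by Lemma~\ref{th.pin_eigenval0} a zero eigenvalue occurs only at a fold $\frac{dL}{dh}=0$ or at the isolated configurations $\gamma=\frac1\pi$, $\gamma=\gamma_1$. Since the eigenvalues are simple, $N(h)$ is constant on each arc between consecutive folds and changes by exactly one as a fold is crossed. To fix the absolute value of $N$ I would anchor on the monotonic branch (branch~1): as $\gamma\downarrow0$ this arc converges to the unique $\gamma=0$ fluxon, which by Corollary~\ref{cor.stability_d=0,g=0} and Lemma~\ref{lem.stability_d=0,g=0} has $\Lambda_{\rm max}\in(-1,0)$, i.e.\ $N=0$; because zero eigenvalues appear only on the codimension-one sets above, $N=0$ persists along the whole monotonic arc for $0<\gamma\le\frac1\pi$.

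The counting is then completed at the folds using the explicit zero-eigenfunction built in the proof of Lemma~\ref{th.pin_eigenval0}: it equals $\phipin'$ on $|x|>L$ and is affine inside the defect, so by Sturm's theorem the number of its zeros equals the number of eigenvalues strictly above zero. Evaluating this nodal count branch by branch (using that $\phipin'$ is sign-definite on the outer intervals for branch~1 but changes sign on the overshooting branches 2--5) pins $N$ on every arc and shows that precisely one arc carries $N=0$ for each admissible $L$; at $L=L_{\rm min}$ and $L=L_{\rm max}$ the two merging branches coincide in a single fluxon whose zero eigenvalue is the largest, so it remains (marginally) stable. I expect the main obstacle to be exactly this global bookkeeping: correctly enumerating the zeros of $\phipin'$ for each of the five exit-branches and reconciling the parity of $N$ across all folds and through the special points $\gamma=\gamma_1,\frac1\pi$, so that the ``exactly one stable'' conclusion holds uniformly in $L$ and $\gamma$.

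Finally, for the non-monotonicity statement I would locate where the stable arc crosses $h=h_2(\gamma)=1-\sqrt{1-\gamma^2}$, at which $\op=0$ and $\ophi=2\pi+\arcsin\gamma$, so the fluxon just touches the far rest state; for larger $L$ the stable fluxon overshoots and is non-monotonic. A direct substitution of $h=h_2$ and $\iphi=\pi-\arccos(2\pi\gamma-\sqrt{1-\gamma^2})$ into $L=\frac{\ip-\op}{2\gamma}$ from~\eqref{eq.length_d=0a}--\eqref{eq.length_d=0b}, together with $H_0+h_2=\gamma(\arcsin\gamma+2\pi)$, yields $\ip=\sqrt{2\gamma\left(\pi+\arcsin\gamma+\arccos(2\pi\gamma-\sqrt{1-\gamma^2})\right)}$ and hence the threshold $L=\sqrt{\frac{\pi+\arcsin\gamma+\arccos(2\pi\gamma-\sqrt{1-\gamma^2})}{2\gamma}}$, matching the stated bound.
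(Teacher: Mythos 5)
Your proposal is correct and follows essentially the same route as the paper's proof: reduce stability to the sign of the largest eigenvalue of $\Lpin$, use Lemma~\ref{th.pin_eigenval0} to confine zero eigenvalues to folds of $L(h)$ (and the isolated cases), anchor the stable arc by continuation of branch~1 to the $\gamma=0$ fluxon of Corollary~\ref{cor.stability_d=0,g=0}, rule out the red and green branches by counting zeros of the zero-eigenfunction (which is $\phipin'$ outside the defect) via Sturm--Liouville theory, and obtain the non-monotonicity threshold by evaluating $L=\ip/(2\gamma)$ at $h=h_2$ with $\op=0$. The only cosmetic difference is your bookkeeping via an eigenvalue count $N(h)$ on every arc, which is a mild repackaging of the paper's branch-by-branch argument rather than a different method.
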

See Figure~\ref{fig.stab1} for an illustration of this theorem.
\begin{figure}[htb] \centering
\parbox[b]{.6\textwidth}{\includegraphics[width=0.45\textwidth]{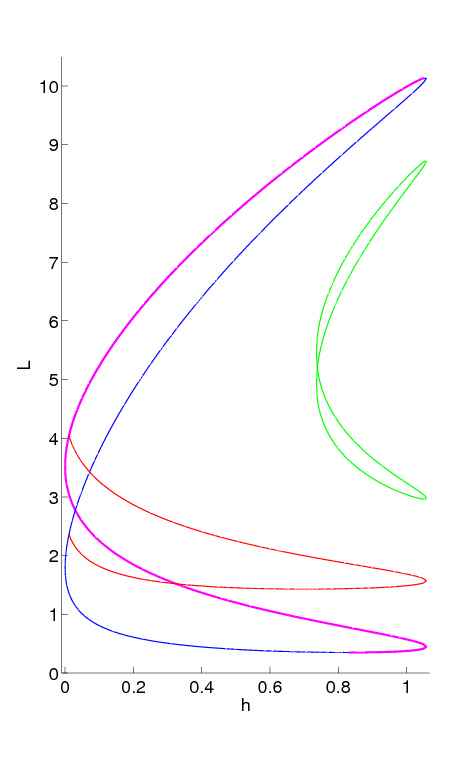}\vspace*{-6mm}}\hfill%
\parbox[b]{.4\textwidth}{\centerline{\includegraphics[width=0.4\textwidth]{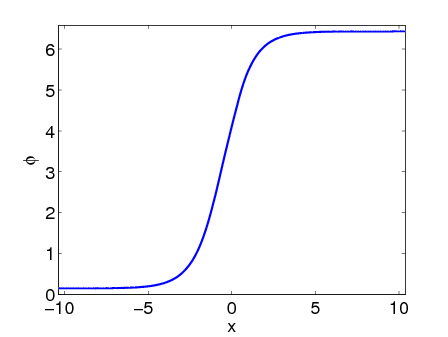}}\centerline{\includegraphics[width=0.45\textwidth]{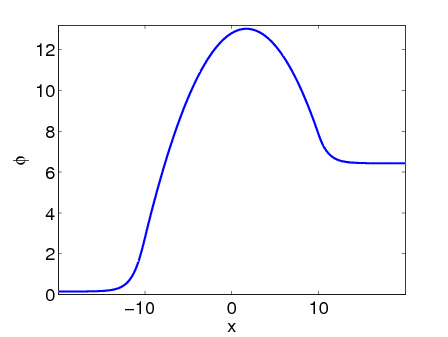}}}
  \caption{Stability for $d=0$ and $\gamma=0.15$. The bold magenta
    curve represents stable solutions, all other solutions are
    unstable. On the right there is an example of a stable monotonic
    pinned fluxon (at $L=0.38$) and a stable non-monotonic one (at
    $L=10$), Both stable pinned fluxons have $h=1$, i.e. they are near
    minimal respectively maximal length, which are at $L_{\rm
      min}=0.35$ and $L_{\rm max}=10.13$.}
  \label{fig.stab1}
\end{figure}

\begin{proof}
If $\gamma =\frac1\pi$, then only the inhomogeneity with half-length
exactly
$L=\sqrt{\frac\pi2\left(\arcsin\frac1\pi+\sqrt{\pi^2-1}\right)} -
\sqrt{\frac\pi2\left(\arcsin\frac1\pi+\sqrt{\pi^2-1}-\pi\right)}
\approx 1.8$ has a pinned fluxon. 
From Lemma~\ref{th.pin_eigenval0}, it follows that the
linearisation for this pinned fluxon has a largest eigenvalue 0, so this
pinned fluxon is linearly stable. 

In Corollary~\ref{cor.stability_d=0,g=0}, we have seen that the unique
pinned fluxons for $\gamma=0$ are stable. 

If $0<\gamma<\frac1\pi$, then there are at least two pinned fluxons if
$L\in(L_{\rm min},L_{\rm max})$, see Theorem~\ref{th.existence_d=0}.
As seen before, the $L$-$h$ curves for the pinned fluxons form three
isolated curves: $\ophi=4\pi-\arccos(1-h)$ (green curve), the (red)
curve of pinned fluxons with $\ophi=2\pi+\arccos(1-h)$ and $\op>0$
(exists for $h>h_2$), and the other pinned fluxons (blue curve). The
colour coding refers to Figures~\ref{fig.length_d_0}
and~\ref{fig.stab1}. The fluxons on the blue curve exist for all
$0\leq\gamma\leq\frac1\pi$; the existence of the other curves depends
on the value of $\gamma$.

The linearisation about the pinned fluxon at the minimum on the red
curve has an eigenvalue zero. At this point, the associated
eigenfunction is a multiple of $\phipin'$ for $x>L$.  On the red
curve, $\op>0$ and $\phipin'(x)<0$ for $x$ large. Thus this
eigenfunction has a zero. Using Sturm-Liouville theory, we can
conclude that the eigenvalue zero is not the largest eigenvalue. As
there is only one fluxon with $\frac{dL}{dh}=0$ on the red curves, all
pinned fluxons on the red curve are linearly unstable.

Similarly, the minimum and maximum on the green curve are associated
with pinned fluxons whose linearisation has an eigenvalue zero. Again,
the associated eigenfunction for $x>L$ is a multiple of $\phipin'$. As
for the red curve, at the minimum we have $\op>0$ and
$\phipin'(x)<0$ for $x$ large. Thus this eigenfunction has a zero and
we can conclude that the eigenvalue zero is not the largest
eigenvalue. The green curve is a closed curve with only two points
with $\frac{dL}{dh}=0$, so the eigenvalue zero at the maximum cannot
be the largest eigenvalue either. So we can conclude that all pinned
fluxons on the green curve are linearly unstable. 

Finally we consider the blue curve.  We use the stability of the
pinned fluxons at $d=0$, $\gamma=0$ to get a conclusion about the
stability of the pinned fluxons on this curve.  The solutions that can
be continued to $\gamma=0$ are the connections between
$\iphi=\pi-\arccos(2\pi\gamma-1+h)$ and $\ophi=2\pi-\arccos(1-h)$. Hence those
solutions are stable. Now using that zero eigenvalues can only occur
if $L(h)$ has a critical point, the blue curve can be divided in
stable and unstable solutions. The stable solutions are the part of
the curve $L(h)$ curve between the minimum and maximum that contains
the pinned fluxons with $\iphi=\pi-\arccos(2\pi\gamma-1+h)$ and
$\ophi=2\pi-\arccos(1-h)$. The pinned fluxons in the other part are
unstable as the zero eigenvalue is simple.  It can be verified that
the eigenfunctions related to the zero eigenvalues on this curve
do not have any zeroes indeed.

So altogether we can conclude that for each length there is exactly
one stable and at least one unstable solution. The stable fluxons are
non-monotonic if $L$ is larger than the length of the fluxon at
$h=h_2(\gamma)=1-\sqrt{1-\gamma^2}$ with
$\iphi=\pi-\arccos(2\pi\gamma-\sqrt{1-\gamma^2})$ and
$\ophi=2\pi+\arcsin\gamma$, hence
$L>\sqrt{\frac{\pi+\arcsin\gamma+\arccos(2\pi\gamma-\sqrt{1-\gamma^2})}%
  {2\gamma}}$.
\end{proof}


\section{General case ($d>0$)}

After analysing the existence and stability of pinned fluxons in
microresisors with the $d=0$ in full detail, in this section we will
sketch the existence and stability of the pinned fluxons for a general
microresistor or microresonator.

\subsection{Microresistors ($0<d<1$)}
The existence of pinned fluxons for $0<d<1$ follows from similar
arguments as for the case $d=0$. Using the matching of appropriate
solutions in the phase planes again, it can be shown that pinned
fluxons exist for $0\leq \gamma\leq\frac{1-d}{\pi}$.  The Hamiltonian
dynamics in the inhomogeneity satisfies the relation
\[
\frac 12 \phi_x^2 - d(1-\cos\phi) +\gamma\phi = H_0(\gamma) + h,
\]
where $h$ is a parameter for the value of the Hamiltonian as before.
The case $\gamma=0$ (no induced current) is more or less identical to
before, with a unique pinned fluxon for any $L>0$. For $\gamma>0$, a
similar calculation as in the case $d=0$ shows that there are two
possible entry angles:
\[\textstyle
\iphi = \pi-\arccos\left(\frac{2\pi\gamma-(1-d-h)}{1-d}\right) \qmbox{or}
\iphi = \pi + \arccos\left(\frac{2\pi\gamma-(1-d-h)}{1-d}\right)
\]
and up to three possible exit angles:
\[\textstyle
\ophi = 2\pi-\arccos\left(\frac{1-d-h}{1-d}\right),\quad
\ophi = 2\pi+\arccos\left(\frac{1-d-h}{1-d}\right),
 \qmbox{or}
\ophi = 4\pi-\arccos\left(\frac{1-d-h}{1-d}\right),
\]
with $0\leq h\leq 2(1-d-\pi\gamma)$.  If $\gamma>d>0$ (i.e., $d$ is
sufficiently close to zero), then there is still a minimal length
$L_{\rm min}(\gamma)>0$ and a maximal length $L_{\rm max}(\gamma)$ for
the inhomogeneity at which pinned fluxons can exist. However, if
$\gamma$ is less than $d$ ($0<\gamma\leq d$), then there is no upper
bound on the possible length of the inhomogeneity anymore, i.e.,
$L_{\rm max} =\infty$. This new phenomenon appears for
$\gamma/d\leq1$, due to the fact that now the dynamics in the
inhomogeneity have fixed points at
$(\phi,p)=(2k\pi+\arcsin(\gamma/d),0)$, $k\in \mathbb{Z}$. If $h$
corresponds to an orbit which contains such a fixed point, then the
length of an orbit with $\op<0$ goes to infinity. To illustrate this,
in Figure~\ref{fig.phase_d=0.2}, we have sketched the phase portraits
for $d=0.2$ and $\gamma=0.15<d$ and $\gamma=0.22>d$.
\begin{figure}[htb]
\centering
\includegraphics[width=0.49\textwidth]{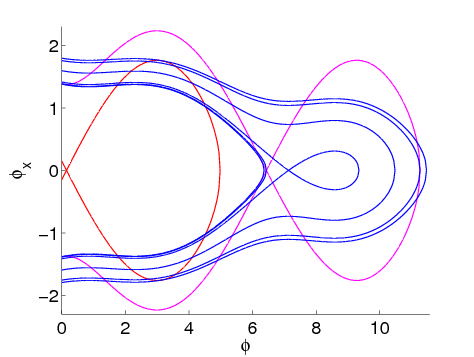}
\hfill
\includegraphics[width=0.49\textwidth]{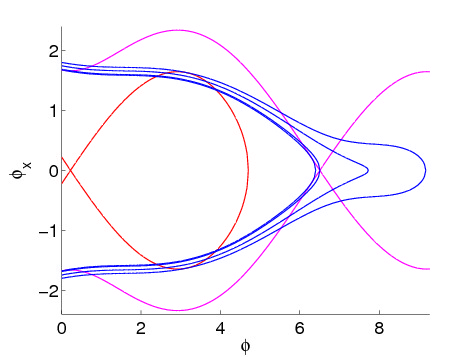}
  \caption{Phase portrait at $d=0.2$ and $\gamma=0.15$ (left) and
    $\gamma=0.22$ (right). Note that in the left graph, the third blue
    orbit has a fixed point. So the pinned fluxon with
    $\ophi=2\pi+\arccos\left(\frac{1-d-h}{1-d}\right)$ and $\op<0$
    does not exist for this $h$-value. Nearby pinned fluxons will be in
    a defect with a length that goes to infinity. In the right graph,
    there are no fixed points anymore as $\frac\gamma d>1$. Thus the
    defect lengths for which pinned fluxons exist are bounded.}
  \label{fig.phase_d=0.2}
\end{figure}

As before, the length of the inhomogeneity for the pinned fluxons
parametrised with~$h$ can be determined by using the relation
$|\phi_x| = \sqrt{2(H_0(\gamma) + h + d(1-\cos\phi) - \gamma\phi)}$
and integrating the ODE, taking care of the sign of $\phi_x$. The
resulting integrals cannot be expressed analytically in elementary
functions anymore, but they can be evaluated numerically. To
illustrate this, we have determined the $L$-$h$ curves as function
of~$h$ for $d=0.2$ and $\gamma=0.15$ ($\gamma<d$) and $\gamma=0.22$
($\gamma>d$).  The $L$-$h$ curves are presented in
Figure~\ref{fig.length_d=0.2}. Note the unbounded length curve for
$\gamma=0.15$.
\begin{figure}[htb]
\centering
\includegraphics[width=0.49\textwidth]{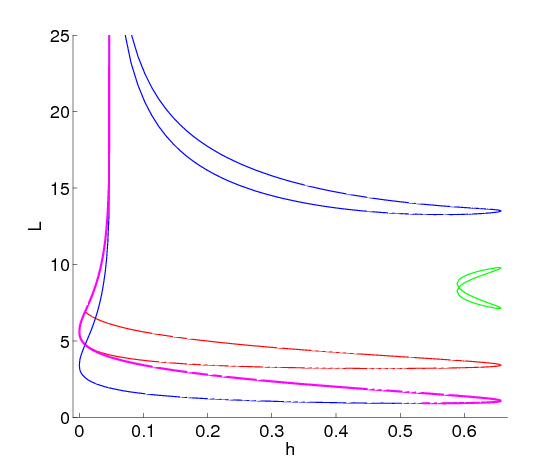}
\hfill
\includegraphics[width=0.49\textwidth]{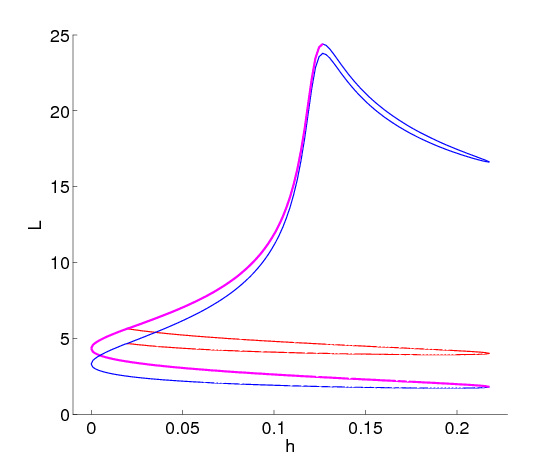}
  \caption{$L$-$h$ curves for $d=0.2$ and $\gamma=0.15$ (left) and
    $\gamma=0.22$ (right). For $\gamma=0.15$, the $L$-$h$ curves are
    unbounded as $\gamma$ is less than $d$. The color coding is as
    before, hence the bold magenta curve correspond to the stable fluxons.}
  \label{fig.length_d=0.2}
\end{figure}

In the following theorem, we summarise the existence of pinned fluxons
for $0<d<1$ and give their stability.
\begin{theorem}\label{th.exists_stab_0<d<1}
For $0<d<1$ and
\begin{itemize}
\item $\gamma=0$, there is a unique stable pinned fluxon for each $L\geq0$;
\item $0< \gamma\leq \min\left(d,\frac{1-d}\pi\right)$, there is a minimal length
  $L_{\rm min}(\gamma)>0$ such that for all $L> L_{\rm min}$ there
  exists at least two pinned fluxons (one for $L=L_{\rm min})$. For
  each $L\geq L_{\rm min}$, there is exactly one stable pinned fluxon;
\item $d<\gamma\leq \frac{1-d}{\pi}$, there are minimal and maximal
  lengths, $L_{\rm min}(\gamma)>0$ respectively $L_{\rm max}(\gamma)$ such
  that for all $L_{\rm min}< L< L_{\rm max}$ there exists at
  least two pinned fluxons, one pinned fluxon if $L$ is maximal or
  minimal, and no pinned fluxons exist for other
  lengths. For each $L_{\rm min}\leq L\leq L_{\rm max}$, there is
  exactly one stable pinned fluxon;
\item for $\gamma>\frac{1-d}{\pi}$, there exist no pinned fluxons.
\end{itemize}
\end{theorem}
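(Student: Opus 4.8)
The strategy is to mirror the structure of the $d=0$ analysis (Theorem~\ref{th.existence_d=0} and Theorem~\ref{th.stability_d_0}), treating the existence and stability parts separately, and to invoke the same phase-plane construction and eigenvalue-zero criterion adapted to $0<d<1$. For existence, I would first record from the phase-plane matching (already sketched in the text) the two admissible entry angles $\iphi$ and the up to three admissible exit angles $\ophi$, together with the constraint $0\le h\le 2(1-d-\pi\gamma)$; this immediately yields $\gamma\le\frac{1-d}{\pi}$ as the condition for any blue orbit to connect the red unstable manifold to the magenta stable manifold, proving the nonexistence claim for $\gamma>\frac{1-d}{\pi}$. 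The $\gamma=0$ case reduces, as remarked, to the same monotone length function as in Lemma~\ref{lem.exist_d=0,g=0}, giving a unique pinned fluxon for each $L\ge 0$. The essential new structural feature is the dichotomy at $\gamma=d$: when $0<\gamma\le d$ the inner dynamics possess fixed points at $(2k\pi+\arcsin(\gamma/d),0)$, so an orbit passing through such a point has a branch of infinite length, forcing $L_{\rm max}=\infty$; when $d<\gamma\le\frac{1-d}{\pi}$ no such interior fixed points occur and the length integrals stay bounded, producing a finite $L_{\rm max}(\gamma)$. I would argue the existence of $L_{\rm min}(\gamma)>0$ by the same reasoning as before: the length function $L(h)$ is continuous and strictly positive on the compact/half-open $h$-interval and attains a positive minimum, while the counting of simultaneous solutions follows from counting admissible $(\iphi,\ophi)$ pairs exactly as in the $d=0$ case.

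For the stability half, the key is that Lemma~\ref{th.pin_eigenval0} (the eigenvalue-zero criterion) is, as the text stresses, a special case of the general result in~\cite{knight09}, and the same argument goes through for $0<d<1$: an eigenvalue zero of $\Lpin$ occurs exactly when $\frac{dL}{dh}=0$ along an $L$-$h$ branch or at an isolated/degenerate pinned fluxon. I would re-derive the compatibility condition by the identical matching computation, now using the inner solution of $\Lpin\psi=0$ for $|x|<L$ (which is no longer simply $A+B(x+L)$ because $\cos\phipin$ does not vanish, but is still the two-dimensional solution space of a linear second-order ODE, with $\phipin'$ one explicit solution inside the defect). The relation linking the compatibility condition to $\frac{dL}{dh}$ is then obtained by differentiating the implicit length integral with respect to $h$, generalising~\eqref{eq.dLdh}. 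Having located all zero-eigenvalue points at the critical points of $L(h)$, I would use continuation from the known-stable $\gamma=0$ fluxon (Corollary~\ref{cor.stability_d=0,g=0}) along the branch containing $\iphi=\pi-\arccos(\cdots)$, $\ophi=2\pi-\arccos(\cdots)$, combined with Sturm--Liouville simplicity of eigenvalues, to conclude that exactly one branch carries the stable fluxon and that stability switches precisely at the $\frac{dL}{dh}=0$ turning points; on the red and green branches the zero-eigenvalue eigenfunction (a multiple of $\phipin'$ for $x>L$) has a sign change, so by Sturm's theorem the zero eigenvalue is not the largest, forcing instability there.

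The main obstacle I anticipate is twofold. First, the inner linearised equation $\psi''-d\cos\phipin\,\psi=0$ no longer has the elementary solution $A+B(x+L)$ available in the $d=0$ case, so the explicit matching conditions at $\pm L$ and the derivation of a clean compatibility identity of the form~\eqref{eq.comp} become genuinely harder; one must either find the correct inner solutions (in terms of the variational equation along the inner orbit, where $\phipin'$ furnishes one solution and reduction of order the second) or, preferably, lean on the abstract machinery of~\cite{knight09} to bypass the explicit computation entirely. Second, the topology of the $L$-$h$ curves changes across $\gamma=d$ (bounded versus unbounded branches, and the possible merging or disappearance of the red/green branches as $\gamma$ varies), so the global counting of stable-versus-unstable solutions must be handled case by case; the claim that there is \emph{exactly one} stable pinned fluxon for every admissible $L$ requires verifying that the stable segment of the blue branch covers the full length interval $[L_{\rm min},\infty)$ or $[L_{\rm min},L_{\rm max}]$ without gaps, which rests on the monotonicity/critical-point structure of $L(h)$ on that branch. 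Since the length integrals are no longer elementary for $d>0$, I would establish these monotonicity facts either numerically (as the figures suggest) or by a qualitative phase-plane argument, rather than by closed-form evaluation.
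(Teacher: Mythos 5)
Your proposal follows essentially the same route as the paper: the existence part is the phase-plane matching already sketched in the section (including the dichotomy at $\gamma=d$ caused by the interior fixed points $(2k\pi+\arcsin(\gamma/d),0)$ of the inner dynamics), and the stability part rests on the eigenvalue-zero criterion at critical points of $L(h)$ imported from \cite{knight09} (Lemma~\ref{th.pin_eigenval0_d<1}), combined with continuation from the stable $d=0$, $\gamma=0$ fluxons and Sturm--Liouville theory. The paper makes the same choice you flag as preferable --- it does not re-derive the matching conditions for $0<d<1$ but cites Theorem \theoremknight of \cite{knight09} --- and organises the continuation exactly as you anticipate: first in $d$ at $\gamma=0$, using the monotonicity of $L(h)$ to exclude zero eigenvalues, and then in $\gamma$ along the smoothly deformed $L$-$h$ curves.
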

Note that the third case will be relevant only if $0<d<\frac1{\pi+1}$.

\medskip 
To prove the stability result for the pinned fluxons, we will use
Theorem~\theoremknight from~\cite{knight09}.  In~\cite{knight09}, the
stability of fronts or solitary waves in a wave equation with an
inhomogeneous nonlinearity is considered. It links the existence of an
eigenvalue zero of the linearisation with critical points of the
$L$-$h$ curve. The proof has similarities with the proof of the case
$d=0$ in Lemma~\ref{th.pin_eigenval0}, but several extra issues have
to be overcome. Theorem~\theoremknight of~\cite{knight09}, applied to
our pinned fluxons for $0<d<1$, leads to the following lemma, which is
very similar to Lemma~\ref{th.pin_eigenval0} which holds for the
microresistor with $d=0$.
\begin{lemma}\label{th.pin_eigenval0_d<1}
  If $0<d<1$
, then the linear operator
  $\Lpin(x;L,\gamma,d)$ has an eigenvalue zero if and only if
  \begin{itemize}
  \item $\frac{dL}{dh} = 0$;
  \item or $\gamma=\frac{1-d}{\pi}$ (this eigenvalue zero is
    the largest eigenvalue);
  \item or $\gamma$ is such that it solves
    $(1-d)(\cos\phimax(\gamma)+1) = 2\pi\gamma$ and the pinned fluxon
    is such that $\iphi=2\pi+\phi_{\rm max}(\gamma)$ (this eigenvalue
    zero is not the largest eigenvalue).
  \end{itemize}
\end{lemma}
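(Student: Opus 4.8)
The plan is to follow the template of the proof of Lemma~\ref{th.pin_eigenval0}, but to replace the explicit inner computation (which for $d=0$ relied on the linearised inner equation reducing to $\psi''=0$) by a structural argument based on the Hamiltonian nature of the dynamics inside the defect, and then to invoke Theorem~\theoremknight of~\cite{knight09} to identify the resulting solvability condition with $\frac{dL}{dh}$. The statement is an ``if and only if'', so I will establish both the necessity of one of the three conditions and, in each case, the existence of the corresponding eigenfunction.

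First I would record the structure that any eigenfunction for eigenvalue zero must have. Differentiating the ODE in~\eqref{eq.ham_ode_system} with respect to $x$ shows, exactly as for $d=0$, that $\phipin'$ solves $\Lpin\phipin'=0$ on each of the three intervals, while failing to be $C^1$ at $x=\pm L$ because $\phipin''$ jumps there. On the two outer intervals the asymptotic operator is hyperbolic at $\Lambda=0$ (the continuous spectrum lies in $(-\infty,-\sqrt{1-\gamma^2})$), so the only $H^2$ solution of $\Lpin\psi=0$ is, up to a constant, the exponentially decaying one; since $\phipin'$ decays and solves $\Lpin\phipin'=0$ for $|x|>L$, any eigenfunction is a multiple of $\phipin'$ on each outer interval (the exceptional case $\phipin'\equiv0$, i.e.\ $\ophi=2\pi+\arcsin\gamma$, is handled separately, using the pure exponential $e^{-\sqrt[4]{1-\gamma^2}(x-L)}$, as in the $d=0$ proof). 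The new ingredient is the inner interval: there $D\equiv d$, so $\phipin$ lies on an orbit of an \emph{autonomous} Hamiltonian system parametrised by $h$. Differentiating the inner ODE with respect to $x$ and with respect to the energy parameter $h$ then yields two independent solutions of the inner linearised equation $\psi''-d\cos\phipin\,\psi=0$, namely $\psi_1=\phipin'$ and $\psi_2=\partial_h\phipin$ (well defined modulo $\psi_1$, which is harmless); these span the inner solution space.

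Second I would carry out the matching. Writing the candidate eigenfunction as $c_-\phipin'$ on $x<-L$, as $a\,\psi_1+b\,\psi_2$ on $|x|<L$, and as $c_+\phipin'$ on $x>L$, the requirement that $\psi\in C^1$ at $\pm L$ gives four homogeneous linear equations in $(c_-,a,b,c_+)$, and a nontrivial eigenfunction exists precisely when the $4\times4$ matching determinant vanishes. To evaluate this determinant I would express the endpoint values and $x$-derivatives of $\psi_2=\partial_h\phipin$ in terms of $\partial_h\iphi$, $\partial_h\ophi$, $\partial_h\ip$, $\partial_h\op$, obtained by differentiating the energy relations inside the inhomogeneity. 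Differentiating the implicit length relation (the integral of $d\phi/|\phi_x|$ along the inner orbit, whose $d=0$ specialisations are~\eqref{eq.length_d=0a}--\eqref{eq.length_d=0b}) then shows that the matching determinant equals $\frac{dL}{dh}$ times a factor (the $0<d<1$ analogue of $\ip\sin\iphi\,\op\sin\ophi$ from~\eqref{eq.dLdh}) that does not vanish on the interior of each $L$-$h$ branch. This is exactly the content that Theorem~\theoremknight of~\cite{knight09} supplies in general; verifying its hypotheses for the present entry/exit geometry and translating its conclusion into these coordinates is the bulk of the work, and yields the equivalence with $\frac{dL}{dh}=0$ in the generic case.

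Finally I would dispose of the two degenerate configurations, which are exactly the loci where the above factor vanishes. When the exit point reaches the top of the homoclinic, $\op=0$ and $\ophi=2\pi+\phimax(\gamma)$, the matching can be solved directly to give an eigenfunction equal to a multiple of $\phipin'$ on each side; it has a zero at $x=L$, so by Sturm--Liouville theory the eigenvalue zero is not the largest, and this occurs precisely when $(1-d)(\cos\phimax(\gamma)+1)=2\pi\gamma$. When instead both $\iphi$ and $\ophi$ are forced to multiples of $\pi$ at the edge of the existence region, $\phipin''$ is continuous across $\pm L$ (since $D\sin\phi-\gamma=-\gamma$ there independently of $D$), so $\phipin'\in H^2$ is itself the eigenfunction; it has no zeros, hence the eigenvalue zero is the largest, and a short computation with the existence formulas shows this happens exactly at $\gamma=\frac{1-d}{\pi}$. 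I expect the main obstacle to be the abstract identification of the matching determinant with $\frac{dL}{dh}$: unlike the $d=0$ case there are no closed-form inner solutions, so this identity must be extracted purely from the Hamiltonian structure (from $\psi_2=\partial_h\phipin$ and the differentiated length integral) rather than by explicit substitution, which is precisely where the machinery of~\cite{knight09} is needed.
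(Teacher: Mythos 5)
Your proposal is correct and takes essentially the route the paper intends: the paper itself gives no self-contained proof of Lemma~\ref{th.pin_eigenval0_d<1} but defers entirely to Theorem~\theoremknight{} and \S\exampleknight{} of~\cite{knight09}, and your sketch --- the $d=0$ template of Lemma~\ref{th.pin_eigenval0} with the explicit inner solution $A+B(x+L)$ replaced by the pair $\phipin'$ and $\partial_h\phipin$ (whose Wronskian is identically $1$ by differentiating the inner Hamiltonian relation in $h$), the resulting $4\times4$ matching determinant identified with $\frac{dL}{dh}$ up to a nonvanishing factor, and the separate treatment of the $\op=0$ and $\iphi=\pi,\ \ophi=2\pi$ degeneracies --- is a faithful reconstruction of that deferred argument. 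Your computations for the special cases check out, including correctly reading the paper's condition ``$\iphi=2\pi+\phimax(\gamma)$'' as referring to the exit angle $\ophi$ (by analogy with the $d=0$ case, where the degenerate fluxon has $\ophi=2\pi+\phimax(\gamma_1)$ and $\op=0$), and verifying that $h_1=h_{\rm max}$ is equivalent to $(1-d)(\cos\phimax(\gamma)+1)=2\pi\gamma$.
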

The verification of Lemma~\ref{th.pin_eigenval0_d<1} can be found
in~\cite[\textsection\exampleknight]{knight09}.
As far as the special cases in this lemma is concerned,
if $\gamma=\frac{1-d}\pi$ or $\gamma$ is such that it solves
$(1-d)(\cos(\phimax(\gamma)+1) = 2\pi\gamma$ and the pinned fluxon is
such that $\iphi=2\pi+\phi_{\rm max}(\gamma)$, then the pinned fluxon
under consideration corresponds an isolated ``green'' point and
$\frac{dL}{dh}$ does not exist. In the case of
$\gamma=\frac{1-d}{\pi}$, there is exactly one value of the length~$L$
for which there exists a pinned fluxon.  In the other case, there are
more pinned fluxons, but on other branches.
  In the case of an isolated pinned fluxon, either the derivative of the
  pinned fluxon is an eigenfunction with the eigenvalue zero or a
  combination of multiples of the derivative of the pinned fluxon is
  an eigenfunction.
%

The stability result of Theorem~\ref{th.exists_stab_0<d<1} follows
by combining Lemmas~\ref{lem.stability_d=0,g=0}
and~\ref{th.pin_eigenval0_d<1}. 
\begin{proofof}{Theorem~\ref{th.exists_stab_0<d<1}}
  The existence is described at the first part of this section, in
  this proof we focus on the stability. For $0\leq d<1$ and
  $\gamma=0$, there is a unique pinned fluxon for each length~$L$. It
  is straightforward to show that for each $0\leq d<1$, the length
  function~$L(h)$ is monotonic decreasing in~$h$.  Thus
  $\frac{dL}{dh}\neq 0$ and none of the pinned fluxons has an
  eigenvalue~zero. As all pinned fluxons are nonlinearly stable for
  $d=0$ (Lemma~\ref{lem.stability_d=0,g=0}) and no change of stability
  can happen, all pinned fluxons with $\gamma=0$ are nonlinearly
  stable for all $0\leq d<1$.
  
  If $0<d<1$ and $0<\gamma<\frac{1-d}{\pi}$, then the $L$-$h$ curve
  follows as a smooth deformation from the curve for $d=0$. And the
  unique stable pinned fluxon for each length follows.  
  
  If $0<d<1$ and $\gamma=\frac{1-d}{\pi}$, then the pinned fluxon is
  an isolated point and Lemma~\ref{th.pin_eigenval0_d<1} gives that it
  is stable.
\end{proofof}

\subsection{Microresonator ($d>1$)}

The existence results of pinned fluxons for $d>1$ are slightly
different from the ones for $d<1$. The main difference is the type of
solutions used in the inhomogeneous system. For $d<1$, we used
solutions that were part of unbounded orbits or homo/heteroclinic
orbits in the phase plane. For $d>1$, we have to use periodic orbits. 
The most simple way to understand this crucial difference between the
microresistor and the microresonator case is to consider the phase
portraits without applied bias current ($\gamma = 0$) -- see
Figure~\ref{fig.phase_gamma_0}.
\begin{figure}[htb]
\centering
\includegraphics[width=0.55\textwidth]{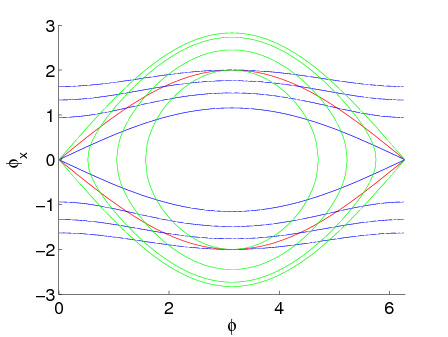} 
\caption{Phase portraits at $\gamma=0$ for various values of~$d$. The
 red curve is the heteroclinic connection at $d=1$. The blue curves
 are orbits for $d=\frac12$ and the green ones are orbits for $d=2$.}
  \label{fig.phase_gamma_0}
\end{figure}
When $d < 1$, respectively $d > 1$, the (red) heteroclinic orbit of
the system outside the inhomogeneity is outside, resp. inside, the
(blue resp.\ green) heteroclinic orbit of the system inside the
inhomogeneity -- see Figure~\ref{fig.phase_gamma_0}.  As a
consequence, a pinned defect can only be constructed with (unbounded)
orbits that are outside the (blue)inhomogeneous heteroclinic orbit in
the microresistor case, while one has to use bounded, periodic orbits
in microresonator case -- see the green lines in
Figure~\ref{fig.phase_gamma_0}.

One consequence is that if one solution for a inhomogeneity of a
certain length exists, then there are also solutions for
inhomogeneities with lengths that are this length plus a multiple of
the length of the periodic orbit. This implies that the number of
pinned fluxons for a defect of length $L$ may grow without bound as
$L$ increases -- which is very different from the microresistor
($d<1$). We will focus on the existence of solutions which use less
than a full periodic orbit as the other ones follow immediately from
this.

Using similar techniques as in the previous sections, it can be shown
that if $\widehat d$ is the solution of $-\frac{5\pi}{2}
+\arcsin\frac1{d} + \sqrt{d^2-1} + d - 1 =0$, ($\widehat d\approx
4.37$), then for $d> \widehat d$, pinned fluxons exist for any
$0\leq\gamma\leq1$. If $d\leq \widehat d$, then pinned fluxons exist
for $0\leq\gamma<\gamma_{\rm max}$, where $\gamma_{\rm max}(d)$ is the
(implicit) solution of $ -2\pi\gamma -\gamma\left(\arcsin\gamma
  -\arcsin\frac\gamma d\right) +\sqrt{d^2-\gamma^2}-\sqrt{1-\gamma^2}
+ (d-1)=0$.

 For illustration, phase portraits for $d=4$ and various
values of $\gamma$ are sketched in Figure~\ref{fig.phase_d=10}. This
illustrates that the solutions used in the inhomogeneous system (blue
lines) are all part of a periodic orbit. Note that for $\gamma>0$ both
unstable manifolds of $\arcsin\gamma$ and only the unbounded stable
manifold of $2\pi+\arcsin\gamma$ are used as opposed to the
microresistor case where only the bounded unstable manifold of
$\arcsin\gamma$ and both stable manifolds of $2\pi+\arcsin\gamma$ are
used.
\begin{figure}[htb]
\centering
\includegraphics[width=0.34\textwidth]{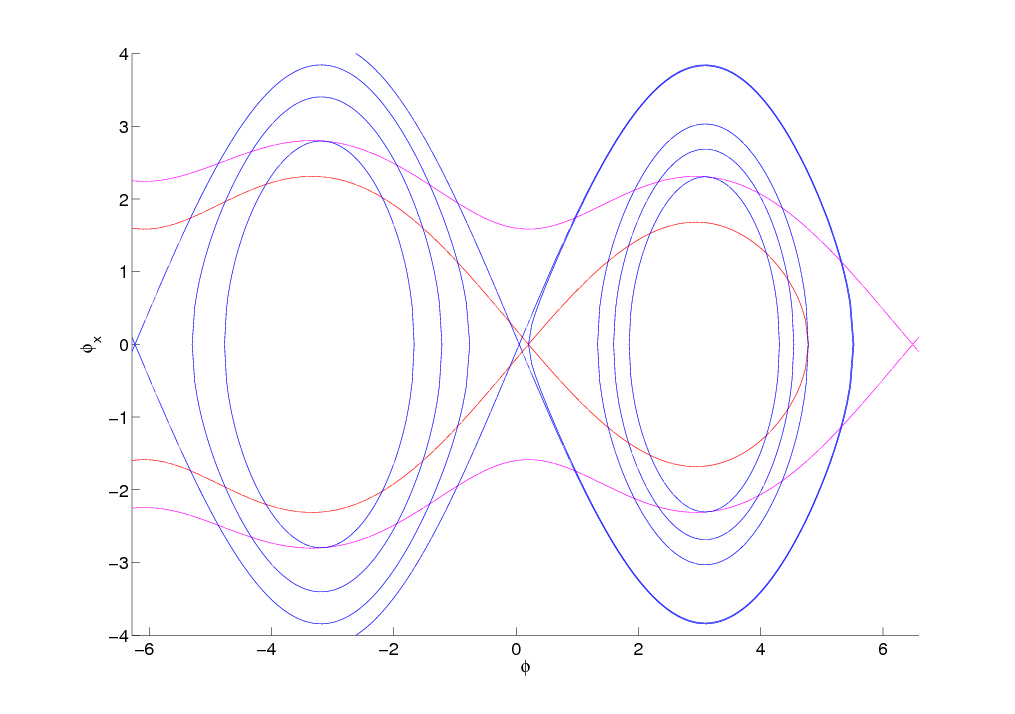}%
\hfill%
\includegraphics[width=0.34\textwidth]{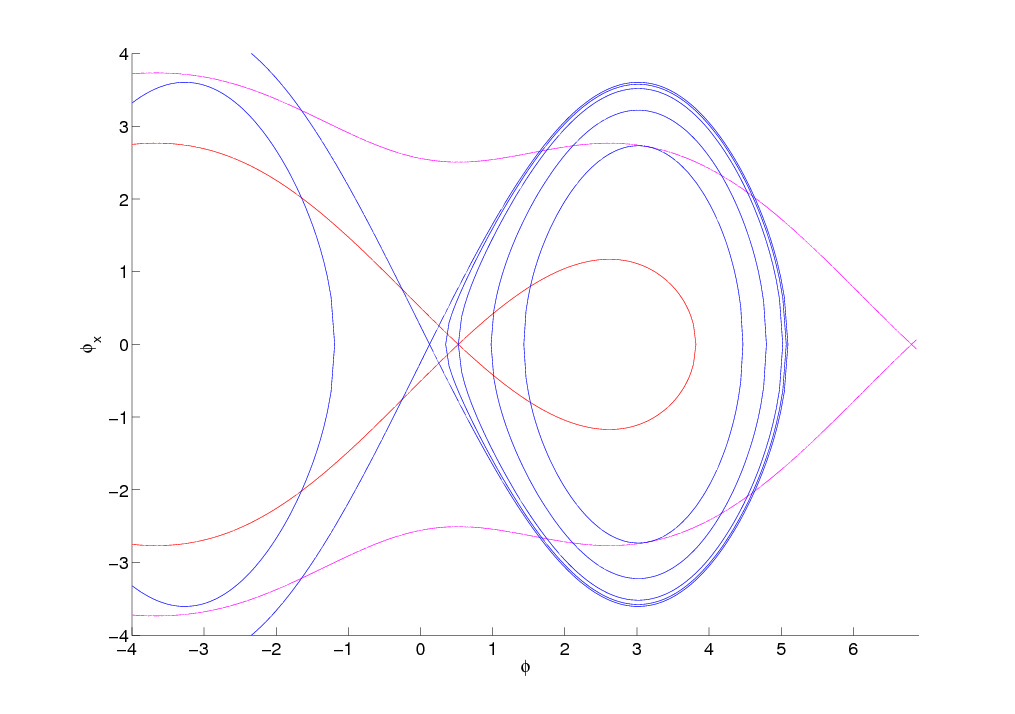}%
\hfill%
\includegraphics[width=0.3\textwidth]{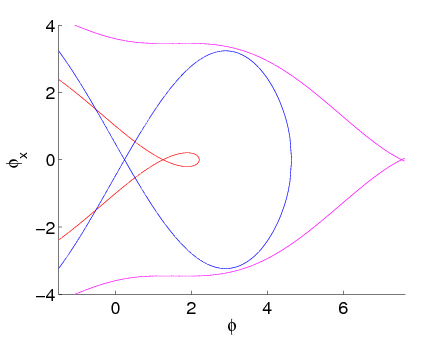}%
  \caption{Phase portrait at $d=4$ and $\gamma=0.2$ (left),
    $\gamma=0.5$ (middle) and $\gamma=0.95$ (right). As before, the red
    curves are the unstable manifolds to $\arcsin\gamma$ and the
    magenta ones are the stable manifolds to $2\pi+\arcsin\gamma$. The
    blue curves are orbits inside the inhomogeneity. The inner blue
    curve with angles between $0$ and $2\pi$ is the curve with the
    minimal $h$-value for which pinned fluxons exist. The blue curves
    can continue to relate to pinned fluxons up to (but not including)
    the blue homoclinic connection to $\arcsin(\frac{\gamma}d)$. Some
    of the periodic orbits with negative angles will also play a role
    in the construction of the pinned fluxons.  If
    $\gamma=0.95>\gamma_{\rm max}(d)\approx0.9$ (right plot), the blue
    homoclinic orbit (that encloses the red limit state) does not
    intersect the magenta stable manifold; illustrating that there
    cannot be pinned fluxons for $\gamma > \gamma_{\rm max}(d)$. }
  \label{fig.phase_d=10}
\end{figure}

As before, the dynamics in the inhomogeneity satisfies the relation
\[
\frac 12 \phi_x^2 - d(1-\cos\phi) +\gamma\phi = H_0(\gamma) + h,
\]
where $h$ is a parameter for the value of the Hamiltonian. Again it can
be shown that the entry and exit angles satisfy
\[
\cos\iphi = \frac{2\pi\gamma+d-1+h}{d-1} \qmbox{and}
\cos\ophi = \frac{d-1+h}{d-1},
\]
where now $-2(d-1)\leq h <h_{\rm max}$. Here $h_{\rm max}$ corresponds
to the $h$-value of the orbit homoclinic to $\arcsin\frac\gamma{d}$ in
the inhomogeneous system; it can be shown that $h_{\rm max}<0$. As we
use periodic orbits inside the inhomogeneity, the entry and exit
angles will differ by less than $2\pi$. For any $h$ value in
$[-2(d-1), h_{\rm max})$, there will be pinned fluxons with entry
angles between $\arcsin\frac\gamma d$ and $2\pi$.  For $\gamma$ small
relative to~$d$, entry angles less than $\arcsin\frac\gamma d$
are also possible and they can be related to smaller (more negative)~$h$
values.  The $p$-value for the exit points is always positive, while
the entry points can have both positive and negative $p$-values if the
entry angle is larger than $\arcsin\frac\gamma d$. The pinned fluxons
with entry angles less than $\arcsin\gamma$ have only negative
$\ip$-values and hence those pinned fluxons are non-monotonic and
``dip down''.

For $\gamma=0$, at least one pinned fluxon exists for each $L\geq0$.
If $L$ is sufficiently large, there will be more pinned fluxons.  This
is different to the case with $d<1$, where for $\gamma=0$, there is a
unique pinned fluxon for each length, it is due to the fact that the
pinned fluxons are buiklt from periodic orbits (that may be travelled
in various waus before leaving the inhomogeneity).  For $\gamma>0$,
there is minimum length $L_{\rm min}$ such that there are at least two
pinned fluxons for each length $L> L_{\rm min}$ (one for $L$ minimal).
The $L$-$h$ curves for $d=4$ and various $\gamma$ values are given in
Figure~\ref{fig.length_d=10}.
\begin{figure}[htb]
\centering
\includegraphics[width=0.32\textwidth]{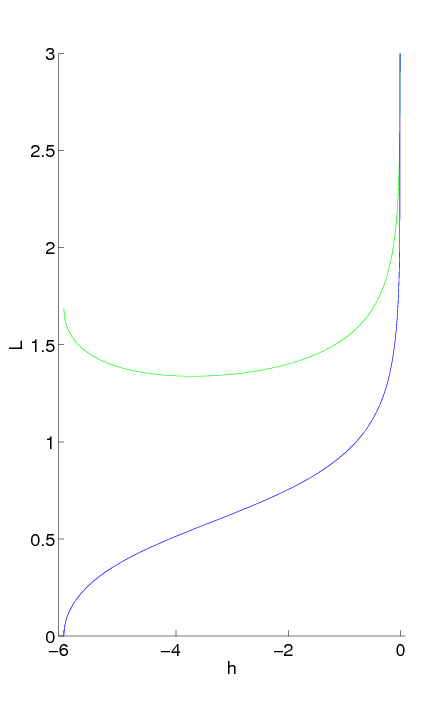}
\hfill
\includegraphics[width=0.32\textwidth]{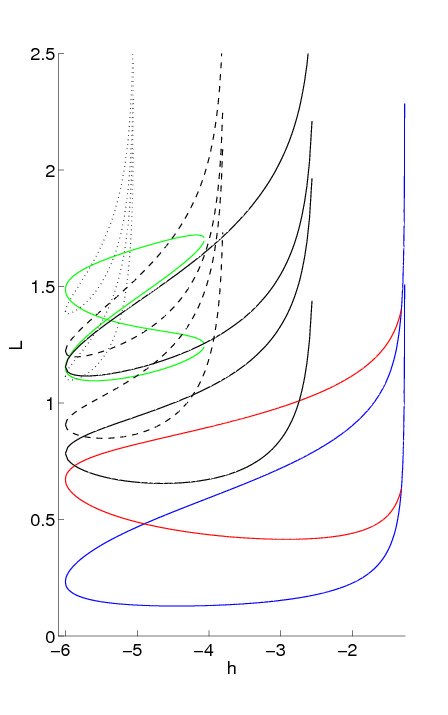}
\hfill
\includegraphics[width=0.32\textwidth]{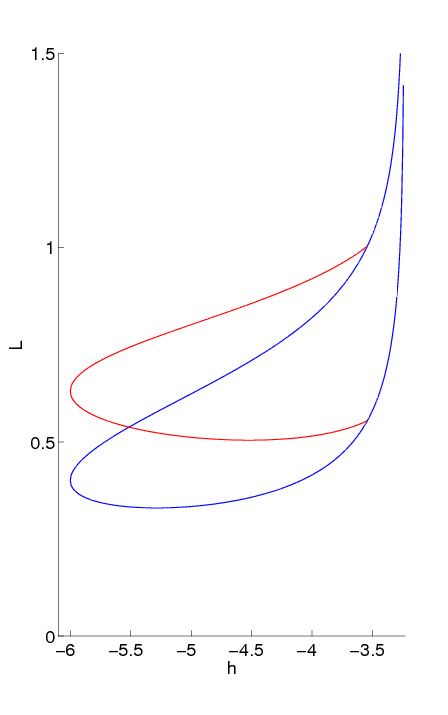}
\caption{$L$-$h$ curves at $d=4$ and $\gamma=0$ (left), $\gamma=0.2$
  (middle) and $\gamma=0.5$ (right). The blue and red curves are
  associated with pinned fluxons with $\arcsin\gamma/d<\iphi<\pi$. The
  red curve are pinned fluxons with $\ip<0$ and
  $\arcsin\gamma<\iphi<\pi$.  The pinned fluxons in the blue curve have
  $\ip>0$ for $\iphi>\arcsin\gamma$ and $\ip<0$ for $\iphi<\arcsin\gamma$.
  The green curves are associated with pinned fluxons with
  $\iphi>\pi$. In the middle panel ($\gamma=0.2$) there are also black
  curves, which are associated with pinned fluxons with
  $\iphi<\arcsin\frac\gamma d$. The solid black curves are lengths for
  pinned fluxons with $-2\pi+\arcsin\frac\gamma d<\iphi<\arcsin\frac\gamma
  d$, the dashed ones for pinned fluxons with
  $-4\pi+\arcsin\frac\gamma d<\iphi<-2\pi+\arcsin\frac\gamma d$, the
  dotted ones for pinned fluxons
  $-6\pi+\arcsin\frac\gamma d<\iphi<-4\pi+\arcsin\frac\gamma d$.}
  \label{fig.length_d=10}
\end{figure}
Only lengths of the pinned fluxons that use less than a full periodic
orbit are plotted.

In the following theorem, we summarise the existence of pinned fluxons
for $d>1$ and give their stability.
\begin{theorem}\label{th.exists_stab_d>1}
  Let $\widehat d$ be the solution of $-\frac{5\pi}{2}
  +\arcsin\frac1{d} + \sqrt{d^2-1} + d - 1 =0$ ($\widehat d\approx
  4.37$) and for $d>1$, let $\gamma_{\rm max}(d)$ be the (implicit)
  solution of $ -2\pi\gamma -\gamma\left(\arcsin\gamma
    -\arcsin\frac\gamma d\right)
  +\sqrt{d^2-\gamma^2}-\sqrt{1-\gamma^2} + (d-1)=0$.
\begin{itemize}
\item For $d>1$ and $\gamma=0$, there is at least one pinned fluxon for
  each $L\geq0$ and all pinned fluxons are unstable; 
\item For $1<d\leq\widehat d$ and $0<\gamma<\gamma_{\rm max}(d)$,
  there is a minimal length $L_{\rm min}(\gamma)>0$ such that for all
  $L> L_{\rm min}$ there exist at least two pinned fluxons (one for
  $L=L_{\rm min})$.  For each $L\geq L_{\rm min}$, there is at least
  one stable pinned fluxon. 
\item For $d>\widehat d$ and $0<\gamma\leq 1$,
  there is a minimal length $L_{\rm min}(\gamma)>0$ such that for all
  $L> L_{\rm min}$ there exist at least two pinned fluxons (one for
  $L=L_{\rm min})$.  For each $L\geq L_{\rm min}$, there is at least
  one stable pinned fluxon. 
\end{itemize}
\end{theorem}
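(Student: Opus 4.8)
The plan is to mirror the structure of the proof of Theorem~\ref{th.exists_stab_0<d<1}, combining the existence construction sketched in this subsection with the eigenvalue-zero characterisation that the general theory of~\cite{knight09} (Theorem~\theoremknight) supplies for the microresonator case. The existence half of each bullet has already been established via the phase-plane analysis using periodic orbits, so I would cite that and concentrate entirely on the stability claims. The core tool is the analogue of Lemma~\ref{th.pin_eigenval0_d<1}: for $d>1$ the operator $\Lpin(x;L,\gamma,d)$ has an eigenvalue zero if and only if either $\frac{dL}{dh}=0$ or the pinned fluxon is an isolated point on its branch. Granting this, changes of stability along each $L$-$h$ curve can only occur at critical points of $L(h)$, so the strategy is to anchor the stability type on one well-understood representative of each branch and then propagate it by continuation, flipping stability each time a simple zero eigenvalue is crossed at a turning point of $L(h)$.

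\medskip
First I would dispose of the case $\gamma=0$, $d>1$. Here I expect to argue that \emph{all} pinned fluxons are unstable. The cleanest route is to exhibit a positive eigenvalue (equivalently, a test function lowering the quadratic form $-\langle v,\Lpin v\rangle$), or to use the Sturm-Liouville node-counting directly. Because the $\gamma=0$ microresonator fluxons are built from genuinely periodic orbits inside the inhomogeneity (as emphasised in the text and in Figure~\ref{fig.phase_gamma_0}), the solution $\phipin$ is non-monotonic whenever it uses an appreciable arc of the periodic orbit, and even the shortest one sits on the ``inside'' of the heteroclinic orbit; I would show that the candidate eigenfunction built from $\phipin'$ (matched to the exponentially decaying tails for $|x|>L$) necessarily has an interior zero, so by the Sturm comparison statement quoted in the introduction the eigenvalue zero cannot be the top eigenvalue, forcing a positive eigenvalue and hence instability. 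This is the structural reason the microresonator with no bias current sustains no stable pinned fluxon, matching the simulations of Figure~\ref{fig.sim_d_2_g_0}.

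\medskip
For the two bullets with $\gamma>0$ (the ranges $1<d\le\widehat d$, $0<\gamma<\gamma_{\rm max}(d)$ and $d>\widehat d$, $0<\gamma\le 1$) the argument is uniform. I would identify, on each $L$-$h$ diagram of Figure~\ref{fig.length_d=10}, the branch of monotonic pinned fluxons with $\iphi$ just below $\pi$ and $\op>0$ (the blue branch), and establish its stability on the segment between the minimal length $L_{\rm min}$ and the first turning point of $L(h)$. The anchoring step is to check the sign of the top eigenvalue at one interior point of this segment — e.g.\ by a continuation argument connecting these fluxons to a regime already known to be stable, or by directly verifying that the relevant eigenfunction (a multiple of $\phipin'$ on each tail, $A+B(x+L)$ inside) has no interior zero, so that zero is the largest eigenvalue and the fluxon is (neutrally) stable, with strict stability for the nearby non-critical $h$. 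Invoking Lemma~\ref{th.pin_eigenval0_d<1}'s microresonator analogue, stability is then constant on each maximal sub-arc of $L(h)$ between consecutive critical points and reverses at each simple zero-eigenvalue crossing; combined with the Sturm-Liouville fact that at a turning point the associated eigenfunction is exactly $\phipin'$ on the tails, whose interior zeros can be read off from the monotonicity/non-monotonicity of $\phipin$, this pins down which arcs carry the unique stable fluxon for each admissible $L$.

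\medskip
The main obstacle I anticipate is \emph{counting}: unlike the microresistor case, the number of coexisting fluxons grows without bound with $L$ because extra full periodic loops can be inserted, so the $L$-$h$ picture has infinitely many branches (the black dashed/dotted curves in Figure~\ref{fig.length_d=10}). The delicate point is therefore to show that \emph{at least} one stable fluxon persists for every $L\ge L_{\rm min}$ while all the loop-decorated branches are unstable. I would handle the loop-carrying fluxons by the same node-counting argument as in the $\gamma=0$ case: each additional traversed period forces at least one extra interior zero into the candidate eigenfunction, pushing the top eigenvalue strictly positive, so none of these can be stable. That reduces the stability bookkeeping to the finitely many ``less than one period'' branches actually plotted, where the turning-point analysis above applies verbatim and delivers exactly one stable fluxon per length. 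I would close by noting, as in Theorem~\ref{th.stability_d_0}, that linear stability upgrades to nonlinear (and, for $\alpha>0$, asymptotic) stability through the Hamiltonian-energy argument of the introduction, since a strictly negative spectrum of $\Lpin$ makes $(\phipin,0)$ a constrained minimiser of $\mathcal H$.
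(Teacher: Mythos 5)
Your overall strategy (existence from the periodic-orbit phase-plane construction, stability changes only at critical points of $L(h)$ via Theorem \theoremknight of~\cite{knight09}, then continuation along branches) is the paper's strategy. But you are missing the one ingredient the paper singles out as the essential new difficulty for $d>1$: an \emph{anchor} that fixes the stability type on at least one branch. The paper supplies this with Lemma~\ref{lem.approx_eigenval}, an explicit perturbative computation at $\gamma=0$, $d=1+\eps$, showing that the largest eigenvalue of $\Lpin$ is $\eps\Lambda_1+\mathcal{O}(\eps^2)$ with $\Lambda_1>0$, so the monotone fluxon is unstable for $d$ just above $1$; everything else follows by continuation. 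Your proposed substitutes do not work. The node-counting argument for $\gamma=0$ presupposes that zero \emph{is} an eigenvalue: the function obtained by gluing $\phipin'$ on the tails to an interior solution is an eigenfunction only when the $C^1$ matching conditions hold, which by the zero-eigenvalue lemma happens only at critical points of $L(h)$ or at isolated fluxons; for generic $L$ it is not an eigenfunction and Sturm's theorem tells you nothing about the sign of the top eigenvalue. Likewise, "continuation to a regime already known to be stable" is unavailable: at $\gamma=0$ all $d>1$ fluxons are unstable, and continuation through $d=1$ is exactly what the paper says cannot be done (Lemma~\ref{lem.stability_d=0,g=0} is not usable here). A smaller slip: the interior piece of a zero-eigenfunction has the form $A+B(x+L)$ only for $d=0$; for $d>1$ it solves $v_{xx}-d\cos(\phipin)v=0$ and is not linear.

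Your treatment of the loop-carrying (black) branches also proves too much, and something false. You claim each extra traversed period forces an extra interior zero and hence instability of every such fluxon, concluding "exactly one stable fluxon per length." The paper explicitly declines to analyse the black curves (their stability "can not easily be related to fluxons at $\gamma=0$") and then demonstrates numerically (Figures~\ref{d2} and~\ref{sum_d4}) that one of the black-curve fluxons \emph{is} stable, i.e.\ there is bi-stability for some $L$ and $\gamma$ when $d>1$. This is precisely why Theorem~\ref{th.exists_stab_d>1} claims only "at least one" stable pinned fluxon rather than uniqueness. Your argument, if it worked, would contradict this; since the theorem does not require anything about the black branches, the correct move is simply not to make that claim.
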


In Figure~\ref{fig.length_d=10}, the stable pinned fluxons are the
pinned fluxons on the increasing part of the lower right blue curve.
Note that these pinned fluxons are non-monotonic past the meeting
point with the red curve, hence for most lengths.  The fluxons on the
other blue curve and red and green curves are unstable. As before, the
proof of the stability properties of Theorem~\ref{th.exists_stab_d>1}
is based on Theorem~\theoremknight from~\cite{knight09}.  The proof of
Theorem~\ref{th.exists_stab_d>1} is very similar to the proof of
Theorems~\ref{th.stability_d_0} and~\ref{th.exists_stab_0<d<1}. The
main difference is that we can not track our stability arguments back
to the case $d=0$ (i.e. Lemma~\ref{lem.stability_d=0,g=0}) as we did
before. The role of Lemma~\ref{lem.stability_d=0,g=0} will now be
taken over by Lemma~\ref{lem.approx_eigenval} in
Appendix~\ref{app.Lambda1}, in which it is explicitly established tht
the pinned fluxon on the blue curve has exactly one positive
eigenvalue for $\gamma=0$ and $d$ near one.

The stability of the fluxons on the black curves can not easily be
related to fluxons at $\gamma=0$ (they ``split'' in a homoclinic
``dip'' and a fluxon for $\gamma=0$). So a stability analysis for
this case goes outside the scope of this paper. In section~\ref{conc}, we
will show numerically that there are some stable fluxons on the black
curve.

\subsection{A microresonator approximating a localised
  inhomogeneity}\label{McLaughlin and Scott}

\newcommand{\dd}{\mathfrak{d}} 
\newcommand{\hh}{\mathfrak{h}} 
There have been quite a number of investigations on the influence of a
localised inhomogeneity, i.e., $D(x)=(1+\mu\delta(x))$ or
$D(x)=(1+\sum_{i=1}^N\mu_i\delta(x-x_i))$ in~\eqref{eq.junction}. In
this section we will confirm that our existence and stability results,
applied to short microresonators with large~$d$, reproduce in the limit for
$L\to 0$ and $d\to \infty$ the existence and stability results for
pinning by microshorts in~\cite{mcla78}.  In~\cite{mcla78} it is shown
that for $D(x)=(1+\mu\delta(x))$ and $\gamma$, $\mu$, and $\alpha$
of order $\eps$, with $\eps$ small and $\frac{\pi\gamma}{\mu}\leq
\frac{4}{3\sqrt{3}}+\mathcal{O}(\eps)$, there are one stable and
one unstable pinned fluxon, both approximated by
$\phi_0(x-X_0)+\mathcal{O}(\eps)$, where $X_0$ are the two solutions
of $-\frac{\pi\gamma}{2\mu}+\sech^2 X\tanh X$.

To approximate the localised inhomogeneities of $\delta$-function type with finite length ones,
we look at microresonators with length $L=1/(2\dd)$ and $d=1+\mu\dd$
for $\dd$ large. Thus the microresonators have short lengths and we can
restrict to pinned fluxons with
\[
\iphi = \arccos\left(\frac{2\pi\gamma+d-1+h}{d-1}\right) =
\arccos\left(1+\frac{2\pi\gamma+h}{\mu\dd}\right),\quad 
\ip>0,
\] 
{and}
\[
\ophi=\arccos\left(\frac{d-1+h}{d-1}\right) =
\arccos\left(1+\frac{h}{\mu\dd}\right). 
\]
Hence the pinned fluxons of~\cite{mcla78} correspond to solutions on the
lower blue curve in Figure~\ref{fig.length_d=10}.  Introducing $h=\mu
\dd \mathfrak{h}$, we get that $-2<\hh<0$ and we are interested in
$\hh$ away from $0$ as $\hh$-values close to zero correspond to long
lengths.  Using the expressions for $\iphi$, $\ip$ $\ophi$, and $\op$
and the ODE for the pinned fluxon, we can derive an asymptotic
expression for the length~$L(\hh)$ if $\dd$ is large and $\gamma$,
$\mu$ are order $\eps$, where $\eps$ is small:
\[
L (\hh)  = \frac{\pi\gamma}{-\hh\mu \dd \sqrt{2(2+\hh)}}
+\mathcal{O}(\dd^{-2} + \eps\dd^{-1}), \quad \eps,\,\dd^{-1} \to0.
\]
Thus $L(\hh)$ has a minimum at $\hh=-\frac{4}{3} +
\mathcal{O}(\dd^{-1} + \eps)$ and the condition
$L(\hh)=1/2\dd$ can be satisfied if the cubic
$\hh^2(2+\hh)=2\frac{\pi^2\gamma^2}{\mu^2} + \mathcal{O}(\dd^{-1} +
\eps)$ can be solved for some $\hh<0$. For $\hh<0$, this cubic has a
maximum at $\hh=-\frac{4}{3} + \mathcal{O}(\dd^{-1} + \eps)$, thus
$L(\hh)=\frac{1}{2\dd}$ has two solutions with $\hh$ between $-2$ and
$0$ iff $\frac{\pi\gamma}{\mu}\leq \frac{4}{3\sqrt{3}}
+\mathcal{O}(\dd^{-1} + \eps)$ (i.e., there are no solutions for
$\gamma/\mu$ too large). From the analysis in the previous section, we
can conclude that this corresponds to one stable pinned fluxon (least
negative value of $\hh$) and one unstable pinned fluxon.

Finally, for $\gamma=\mathcal{O}(\eps)$, with $\eps$ small, both the
unstable manifold to $\arcsin\gamma$ and the stable manifold to
$2\pi+\arcsin\gamma$ are close to the heteroclinic connection for the
unperturbed sine-Gordon equation. Thus for $x>L$, we have
$\phipin(x)=\phi_0(x-X_0)+\mathcal{O}(\eps)$, where $\phi_0$ is the
shape of the stationary fluxon in the sine-Gordon equation (and a
similar relation for $x<-L$). Substituting this into the equation for
$\ophi$, with $L=1/2\dd$ (hence $\hh$ is a solution of the cubic
introduced earlier), we get that $X_0$ is one of
the two solutions of $-\frac{\pi\gamma}{2\mu}+\sech^2 X\tanh X=0$.

\section{Conclusions and further work}
\label{conc}

This paper exhibits a full analysis for the existence and stability of
pinned fluxons in microresistors and microresonators for which the
Josephson tunneling critical current is modelled by a step-function.
It is shown that for fixed $d$ (Josephson tunneling critical current
inside the inhomogeneity) and fixed bias current $\gamma>0$, there is
an interval of lengths for which a rich family of pinned fluxons
exists. In the case when an induced current is present, there is a
lower bound on the length of inhomogeneities for which pinned fluxons
can exist.  If the inhomogeneity is too short, no pinned fluxons can
be sustained.  The lower bound on the length increases if the induced
current increases.  For microresistors with a sufficiently large
induced current, there is also an upper bound on the length for pinned
fluxons and the upper and lower bounds collide when the maximal value
of the induced current for which pinned fluxons can exist, is attained

Compared to the case of homogeneous wave equations, a new phenomenon
is observed: longer microresistors and microresonators have
non-monotonic stable pinned fluxons. In the case of microresistors
($d<1$), the non-monotonic stable pinned fluxons have a ``bump''
inside and behind the inhomogeneity and the values in the bump exceed
the asymptotic state $2\pi+\arcsin\gamma$. In the case of the
microresonators ($d>1$), the stable pinned fluxons have a ``dip''
before and near the inhomogeneity and the values in the dip are
between $\arcsin\gamma/d$ and $\arcsin\gamma$, i.e., below the left
asymptotic state.

To complement and illustrate the analytical results in the previous
sections, we have numerically solved the stationary
equation~\eqref{eq.ham_ode_system} for the pinned fluxons and the
corresponding linear eigenvalue problem~\eqref{eq.eigenval} using a
simple finite difference method and presented the results in
Figures~\ref{d0}--\ref{sum_d4}.
Without loss of generality as far as stability is concerned, we depict
the eigenvalues for $\alpha=0$, i.e., $\Lambda =\lambda^2$. Thus an
instability is indicated by the presence of a pair of eigenvalues
with non-zero real parts.

First, we consider the case of inhomogeneous Josephson junctions for a
microresistor with $d=0$. As is shown in Figure~\ref{fig.stab1}, when
$\gamma=0.15$ and the defect length parameter $L=4.2$, there are four
possible pinned fluxons. In Figure~\ref{d0}, the numerically obtained
profiles of pinned fluxons are a shown; all of them are clearly
non-monotonous. The insets show the eigenvalues of the fluxons in the
complex plane. Only one of them has no eigenvalues with non-zero real
parts, confirming that there is exactly one stable pinned fluxon,
which is non-monotonous for these parameter values. The
four pinned fluxons belong to two different families, the ones with
the smallest bump, i.e.\ panel (a), are on the blue curve and the
others, i.e.\ panel (b), on the green curve in Figure~\ref{fig.stab1}.
\begin{figure}[tbhp]
  \centering
\subfigure[]{\includegraphics[width=.49\textwidth]{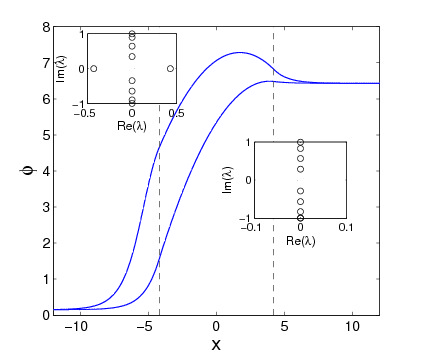}}%
\subfigure[]{\includegraphics[width=.49\textwidth]{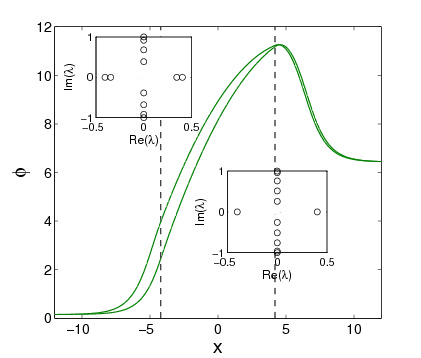}}
\caption{The four pinned fluxons admitted by the Josephson system with
  $d=0$ $L=4.2$, and $\gamma=0.15$. The insets show the eigenvalues of
  each fluxon; the top inset is related to the upper fluxon and the
  bottom inset to the lower fluxon. The vertical dashed lines show the
  edges of the defect.}
  \label{d0}
\end{figure}

In Figure~\ref{fig.stab1}, the existence and the stability of the
pinned fluxons for fixed $d$ and $\gamma$ are presented in the
$(h,L)$-plane and it is shown that each pair of the fluxons collide in
a saddle-node bifurcation at a critical $L$ for a fixed $\gamma$. To
complement these results, we take $L=4.2$ and numerically follow the
largest eigenvalue $\Lambda=\lambda^2$ of the various fluxons when the
induced current~$\gamma$ changes. The results are shown in
Figure~\ref{sum_d0}. As before, the colouring corresponds to the
colouring in Figure~\ref{fig.stab1}. Figure~\ref{sum_d0} shows that
there is a critical current for the existence of a pinned fluxon for a
given length and depth of the inhomogeneity. The blue and green
fluxons disappear in a saddle-node bifurcation. This happens at a
smaller value of $\gamma$ for the green fluxons (solutions in panel
(b) in Figure~\ref{d0}) than for the blue fluxons~(panel (a)). A
physical interpretation of the saddle-node bifurcation is that the
inhomogeneity is too short or long to pin a fluxon when the applied
current exceeds the critical value. For $\gamma=0.15$, there are no
red fluxons at this length, but they will exist for smaller values of
$\gamma$. The red fluxons disappear when the fluxon ``splits'' in a
homoclinic connection to $2\pi+\arcsin\gamma$ and a blue pinned
fluxon. Only one curve of red fluxons is visible. In theory, there is
a second curve, but this exist in a tiny $\gamma$-interval only and
hence is not visible.
\begin{figure}[bthp]
\centering
\includegraphics[width=.5\textwidth]{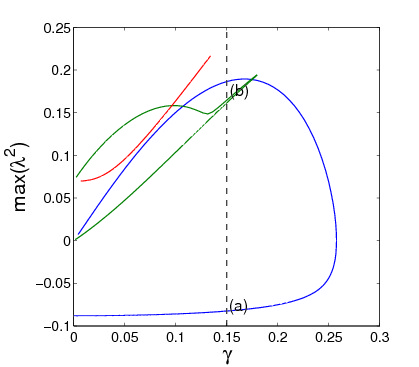}
\caption{The largest eigenvalue $\Lambda=\lambda^2$ of the various
  fluxons as function of the induced current~$\gamma$. The maximal
  eigenvalue at $\gamma=0.15$ of the fluxons in Figure~\ref{d0} is at
  the intersection between the curves and the vertical dashed
  line. Note that the blue and green fluxons disappear in a
  saddle-node bifurcation.}
\label{sum_d0}
\end{figure}

In Figures~\ref{d2}--\ref{sum_d4}, we consider the case of a
microresonator with $d=4$. 
From the middle panel in Figure~\ref{fig.length_d=10}, it follows that
there exist five pinned fluxons when $\gamma=0.2$, and $L=0.75$. In
Figure~\ref{d2} we show the numerically computed profiles of those
pinned fluxons and their eigenvalues, where the colouring is as in
Figure~\ref{fig.length_d=10}.  The blue non-monotonic fluxon is stable
while the blue monotonic one and red one are unstable. This confirms
our analytical findings (see Theorem~\ref{th.exists_stab_d>1}: there
is \emph{at least} one stable pinned fluxon). Moreover, it shows that
there can be more than one stable fluxon: one of the fluxons on the
black curve is stable too. So for $d>1$, there is bi-stability for
some values of $L$ and~$\gamma$.
\begin{figure}[bth]
  \centering
\subfigure[]{\includegraphics[width=0.32\textwidth]{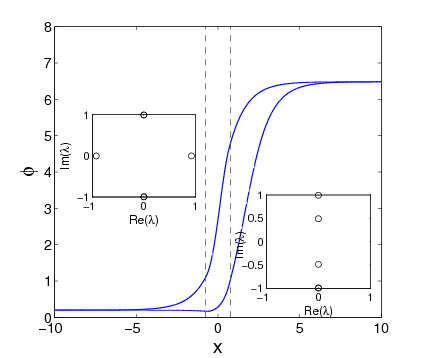}}
\subfigure[]{\includegraphics[width=0.32\textwidth]{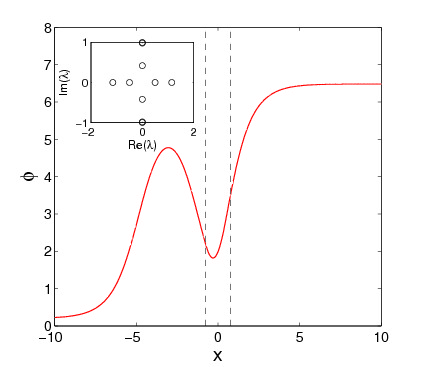}}
\subfigure[]{\includegraphics[width=0.32\textwidth]{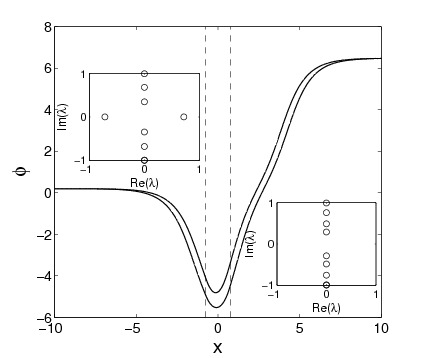}}
\caption{The same as in Figure~\ref{d0}, but for a microresonator with
  $d=4$, $L=0.75$, and $\gamma=0.2$, where there are five pinned
  fluxons.}
  \label{d2}
\end{figure}

In Figure~\ref{sum_d4} we also present the critical eigenvalues of the
five fluxons as a function of $\gamma$ when $L=0.75$ is
fixed. Similarly as in Figure \ref{sum_d0}, the pairs of blue and
black fluxons collide in a saddle-node bifurcation, while the red
fluxon breaks up at the maximal value of~$\gamma$.

\begin{figure}[bthp]
\centering
{\includegraphics[width=0.5\textwidth]{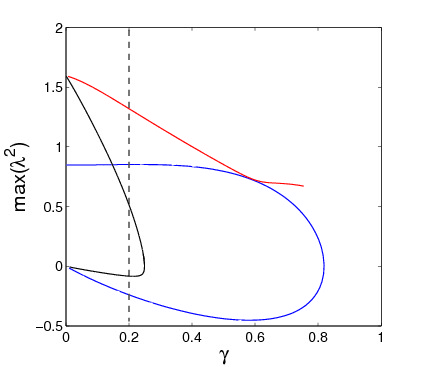}}
\caption{The same as in Figure \ref{sum_d0}, but for the five pinned fluxons in Figure \ref{d2}.}
\label{sum_d4}
\end{figure}


\medskip
For future research, it is of interest to expand our study to the case
of two-dimensional Josephson junction with inhomogeneities. A
particular example is the so-called window Josephson junction, which
is a rectangular junction surrounded by an inhomogeneous 'idle' region
with $d=0$. The interested reader is referred to
\cite{bena02,bena02b,capu00} and references therein for reviews of
theoretical and experimental results on window Josephson
junctions. Recently, fluxon scatterings in a 2D setup in the presence
of a non-zero defect has been considered as well in~\cite{piet05}. 

One can also apply our method to study the existence of trapped solitons by inhomogeneities in Schr\"odinger
equations, such as pinned optical solitons in a nonlinear Bragg media
with a finite-size inhomogeneity (see, e.g., \cite{good07b} and
references therein) and trapped Bose-Einstein condensates by a finite
square-well potential (see, e.g., \cite{carr01,park04}). In general,
the ideas presented in this paper are applicable to any system with
locally (piecewise constant) varying parameters in the equations as
can be seen in preprints by some of us~\cite{MJS} and~\cite{HDKNU}.

Finally, the simulations in section~\ref{sec.sim} show how
inhomogeneities can capture travelling fluxons. This suggests that the
pinned fluxons analysed in this paper can be attractive or repelling,
just as observed in~\cite{mcla78} in case of the localised
inhomogeneities.  We are currently investigating the attractive and
repelling interaction of the travelling fluxons with the pinned
fluxons and will report on this in a future paper.

\bigskip\noindent
\textbf{Acknowledgement} {We would like to thank Daniele Avitabile for
  the use of his suite of simulation codes.}

\appendix
\section{Largest eigenvalue of linearisation with no induced
  current}\label{app.Lambda1} 
\begin{proofof}{Lemma~\ref{lem.stability_d=0,g=0}}
  Let $\gamma=0$, $d=0$, and fix the length~$L$ of the inhomogeneity.
  We denote the unique pinned fluxon with length~$L$ by $\phipin(x)$
  (suppressing all other parameters). From~\eqref{pf}, we see that
  $\phipin$ equals the sine-Gordon fluxon outside the inhomogeneity
  ($|x|>L$) and the linearisation about the sine-Gordon fluxon is
  well-studied. The shifted pinned fluxon~$\phipin(x)-\pi$ is an odd
  function, hence a quick inspection shows that the operator
  $\Lpin(x)$ is even in~$x$ (we suppress all other parameters in
  $\Lpin$). All eigenvalues of $\Lpin$ are simple, thus $\Lpin(x)$
  being even implies that all eigenfunctions are odd or even. The
  eigenfunction for the largest eigenvalue does not have any zeroes,
  thus this eigenfunction is even.

For fixed $\Lambda$, the linear ODE associated with
  $(\Lpin-\Lambda)\Psi =0$ has two linearly independent solutions.
  The asymptotic limits of $\phipin$ for $x\to\pm\infty$
  correspond to saddle points in the ODE~\eqref{eq.ham_ode_system} and
  the decay rate to these fixed points is like $e^{-x}$. This implies
  that for $\Lambda>-1$, there is one
  solution of the ODE $(\Lpin-\Lambda)\Psi =0$ that is exponentially
  decaying at $+\infty$ and there is one solution that is
  exponentially decaying at $-\infty$. We denote
  the exponentially decaying function at $-\infty$ by
  $v_-(x;L,\Lambda)$.
  
  In~\cite{mann97}, the linearisation of the sine-Gordon equation
  about the the fluxon~$\phi_0$ is studied in great detail. Using the
  results in this paper, we can derive an explicit expression for the
  solutions~$v_-(x;L,\Lambda)$ (see also~\cite{ddvgs07}), they are
\[
\begin{arrayl}
v_-(x;L,0) &=& \sech(x+x^*(L)),\quad x<-L\\
v_-(x;L,\Lambda) &=& e^{\mu (x+x^*(L))} \, [\tanh(x+x^*(L))-\mu] , \quad x<-L 
\qmbox{where} \mu = \sqrt{\Lambda+1},
\end{arrayl}
\]
where $x^*(L)$ is given in Lemma~\ref{lem.exist_d=0,g=0}.
In the inhomogeneity, the linearised operator is simply
$\Lpin=D_{xx}$, hence the even solutions of $\Lpin-\Lambda$ are
\[
\begin{arrayl}
  v_{\rm inhom}(x;\Lambda) &=& A\cos(\sqrt{-\Lambda} x), 
  \quad |x|<L, \qmbox{if $\Lambda<0$;}\\
  v_{\rm inhom}(x;0) &=& A, \quad |x|<L; \\
  v_{\rm inhom}(x;\Lambda) &=& A\cosh(\sqrt{\Lambda} x), 
  \quad |x|<L, \qmbox{if $\Lambda>0$.}
\end{arrayl}
\]
To have a continuously differentiable solution of $(\Lpin-\Lambda)
\psi=0$ in $H^2(\mathbb{R})$, we have to match $v_-$ and $v_{\rm
  inhom}$ and its derivatives at $x=-L$ (the conditions for $x=L$
following immediately from this as the eigenfunction is even). This
gives:
\begin{itemize}
\item If $\Lambda=0$ (thus $\mu=1$):
\[
A =  \sech\,\xi^* \qmbox{and}
0 = - \sech\,\xi^*\tanh\,\xi^*
\]
with $\xi^*=-L+x^*(L)$. This implies that $\xi^*=0$ and $A=0$. From
the relation for $x^*(L)$ in Lemma~\ref{lem.exist_d=0,g=0}, it follows
$\xi^*\neq 0$ only if $L=0$, hence when there is no inhomogeneity.
This confirms that the stationary sine-Gordon fluxon (the pinned
fluxon for $L=0$) has an eigenvalue zero, but none of the pinned
fluxons with $L>0$ will have an eigenvalue zero for its linearisation
$\Lpin$.
\item If $\Lambda>0$ (thus $\mu>1$), with $y^*=L\sqrt{\mu^2-1}$ and
  again $\xi^*=-L+x^*(L)$:
\[
\begin{arrayl}
  A\cosh y^*    &=&e^{\mu
    \xi^*} \, [\tanh\xi^*-\mu] \\
  - \sqrt{1-\mu^2}\,A\sinh y^* &=&
  e^{\mu \xi^*} \, [\mu(\tanh\xi^*-\mu) +\sech^2\xi^*]
\end{arrayl}
\]
Hence  $\mu$ (thus $\Lambda$) is determined by
\[
\mu\,[\tanh\xi^*-\mu] +\sech^2\xi^* = 
-\sqrt{\mu^2-1} \,[\tanh\xi^*-\mu]\,\tanh y^*.
\]
Using Lemma~\ref{lem.exist_d=0,g=0}, this can be written as a
relation between $\mu$ and $\iphi$ (and hence $\mu$ and $L$ as there
is a bijection between $\iphi\in(0,\pi)$ and $L> 0$):
\begin{equation}\label{eq.d=0_stab>}
\begin{arrayl}
  \lefteqn{\textstyle-\mu\,\left[\mu+\frac12\,\sqrt{2(1+\cos\iphi)}\right]
  +\frac12\,(1-\cos\iphi) = } \\
&&\qquad\qquad 
\sqrt{\mu^2-1} \,\left[\mu+\frac12\,\sqrt{2(1+\cos\iphi)}\right]\,
\tanh \left(\sqrt{\mu^2-1}\frac{\pi-\iphi}{\sqrt{2(1-\cos\iphi)}}\right).
\end{arrayl}
\end{equation}
It can be seen immediately that the right-hand side
of~\eqref{eq.d=0_stab>} is positive. The left-hand side
of~\eqref{eq.d=0_stab>} is always negative for $\mu>1$ as
\[
-\mu\,(\mu+T) + 1-T^2 = 1-\mu^2 -\mu T-T^2 \leq -T-T^2<0 ,
\]
where we wrote $T=\frac12\,\sqrt{2(1+\cos\iphi)}$, hence
$\frac12\,(1-\cos\iphi)=1-T^2$.  Thus~\eqref{eq.d=0_stab>} has no
solutions and there do no exist any eigenvalues $\Lambda>0$.
\item If $-1<\Lambda<0$ (thus $0<\mu<1$), again with $\xi^*=-L+x^*(L)$
  and  now  $y^*=L\sqrt{1-\mu^2}$:
\[
  A\cos y^*   = e^{\mu
    \xi^*} \, [\tanh\xi^*-\mu] \qmbox{and}
  \sqrt{1-\mu^2}\,A\sin y^* = 
  e^{\mu \xi^*} \, [\mu(\tanh\xi^*-\mu) + \sech^2\xi^*].
\]
Hence $\mu$ (thus also $\Lambda$) is determined by
\[
\mu\,[\tanh\xi^*-\mu] +\sech^2\xi^* = 
\sqrt{1-\mu^2} \,[\tanh\xi^*-\mu]\,\tan y^*.
\]
Using the same relations as before, this can be  written as a relation
between $\mu$ and $\iphi$:
\[\textstyle
\begin{arrayl}
  \lefteqn{\textstyle-\mu\,\left[\mu+\frac12\,\sqrt{2(1+\cos\iphi)}\right]
  +\frac12\,(1-\cos\iphi) = } \\
&&\qquad\qquad 
-\sqrt{1-\mu^2} \,\left[\mu+\frac12\,\sqrt{2(1+\cos\iphi)}\right]\,
\tan \left(\sqrt{1-\mu^2}\frac{\pi-\iphi}{\sqrt{2(1-\cos\iphi)}}\right).
\end{arrayl}
\]
Bringing all terms to the left and writing
$T(L)=\frac12\,\sqrt{2(1+\cos\iphi(L))}\in(0,1)$ gives on the left
\[
F(L,\mu) :=  -\mu\,\left[\mu+T\right] +1-T^2 + 
\sqrt{1-\mu^2} \,\left[\mu+T\right]\,
\tan \left(\sqrt{1-\mu^2}L\right).
\]
Taking $\mu=1$ in this expression gives $F(L,1) = -T-T^2\leq 0$.  If
$L<\frac\pi2$, then $F(L,0) = 1-T^2 + T \tan L>0$ as $T\in (0,1)$.  If
$L\geq \frac\pi2$, then $\iphi<\frac\pi2$ and $T>\frac12\sqrt2$, thus
$F(L,\frac{\sqrt{L^2-(\pi/2-\eps)^2}}{L}) \geq -2 + 
\frac{(\pi/2-\eps)\sqrt 2}{2L}\,\tan(\frac\pi2-\eps) =
\mathcal{O}(\frac{1}{L\eps})$, for $\eps\to0$. As $L$ is fixed, we can
choose $\eps$ such that this expression is positive. Thus we can
conclude that for all $L>0$, there is at least one $\mu\in(0,1)$
that solves $F(L,\mu)=0$. If $L$ gets very large, then there will be many
solutions, but we are interested in the largest one.
\end{itemize}
\end{proofof}

\begin{lemma}\label{lem.approx_eigenval}
  For $\gamma=0$ and $d=1+\eps$ with $\eps$ small, the linearisation
  $\Lpin(x;L,0,1+\eps)$ about the monotone pinned fluxon
  $\phipin(x;L,0,1+\eps)$ has a largest eigenvalue of the form
  $\eps\Lambda_1+\mathcal{O}(\eps^2)$ with
\[
\Lambda_1 = 
\frac{\sech^2L\,\left[-L^2\sech^4L(1+\tanh^2L) +
2 L\tanh L(\sech^4L +2(1+\sech^2L))+ \tanh^2L(6+\sech^2L)\right]}
{16(L\sech^2L+\tanh L)}.
\]
See Figure~\ref{fig.eigenval_d=1} for a sketch of~$\Lambda_1$.
Furthermore, if there are any other eigenvalues, then they must be
near $-1$.  Thus for $\gamma=0$ and $d$ close to~1, the monotone
pinned fluxons with $d>1$ are linearly unstable. The nonlinear
stability of Theorem~\ref{th.exists_stab_0<d<1} is confirmed by the
sign of $\Lambda_1$ for $d<1$.
\end{lemma}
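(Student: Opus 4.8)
The plan is to treat the inhomogeneity strength $\eps=d-1$ as a small parameter and develop regular (Rayleigh--Schr\"odinger) perturbation theory for the largest eigenvalue of $\Lpin(x;L,0,1+\eps)$. At $\eps=0$ the junction is homogeneous, the monotone pinned fluxon is exactly the sine-Gordon fluxon $\phi_0$ of~\eqref{fluxon}, and the operator reduces to the classical fluctuation operator $\Lpin^{(0)}=D_{xx}-\cos\phi_0=D_{xx}-(1-2\sech^2 x)$. Its discrete spectrum is the single simple eigenvalue $0$ with nodeless eigenfunction $v_0=\phi_0'=2\sech x$, and its continuous spectrum is $(-\infty,-1]$ (the $\gamma=0$ case of the calculation in the introduction). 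Since $0$ is simple and isolated, the perturbed eigenvalue is analytic in $\eps$, so $\Lambda_{\max}(\eps)=\eps\,\Lambda_1+\O(\eps^2)$ with $\Lambda_1=\langle v_0,L_1 v_0\rangle/\langle v_0,v_0\rangle$, where $L_1=\frac{d}{d\eps}\Lpin|_{\eps=0}$ and $\langle v_0,v_0\rangle=\int_{\mathbb R}4\sech^2 x\,dx=8$.

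Differentiating $\Lpin=D_{xx}-D(x)\cos\phipin$ in $\eps$ at $\eps=0$ produces two contributions: the explicit change of the coefficient inside the defect, $-\chi_{\{|x|<L\}}\cos\phi_0$, and the change of the fluxon profile, $+\sin\phi_0\,\dot\phi$, where $\dot\phi:=\partial_\eps\phipin|_{\eps=0}$. Hence $\Lambda_1=\frac18\left(-\int_{-L}^{L}\cos\phi_0\,v_0^2\,dx+\int_{\mathbb R}\sin\phi_0\,\dot\phi\,v_0^2\,dx\right)$. The first integral is elementary, using $\cos\phi_0=1-\frac12 v_0^2$; the crux of the proof is the second integral, which requires the profile variation $\dot\phi$.

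To find $\dot\phi$, I would differentiate the fluxon equation $\phipin_{xx}-D(x)\sin\phipin=0$ (with $\gamma=0$) in $\eps$, obtaining $\Lpin^{(0)}\dot\phi=\chi_{\{|x|<L\}}\sin\phi_0$ with $\dot\phi\to0$ at $\pm\infty$; since $\phipin(x)=2\pi-\phipin(-x)$, the variation $\dot\phi$ is odd. I would solve this inhomogeneous ODE by variation of parameters using the two solutions of $\Lpin^{(0)}u=0$, namely the decaying $v_0=2\sech x$ and the growing $w_0=\frac14(x\sech x+\sinh x)$ (Wronskian $1$). Matching the inner solution to the decaying outer multiple of $\sech x$ at $x=L$, together with $\dot\phi(0)=0$, fixes the free constants, and the matching determinant is proportional to $v_0(L)w_0(L)\propto L\sech^2 L+\tanh L$; combined with $\langle v_0,v_0\rangle=8$ this reproduces precisely the denominator $16(L\sech^2 L+\tanh L)$ of the stated formula. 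Substituting $\dot\phi$ and $\sin\phi_0=v_0'=-2\sech x\tanh x$ and evaluating the remaining elementary integrals (over $(0,L)$ and $(L,\infty)$, the full integrand being even) yields the closed form for $\Lambda_1$. The main obstacle is exactly this step: because $\dot\phi$ does not vanish outside the defect, both the $L$-dependent matching constants and the tail contribution from $|x|>L$ must be tracked carefully, and it is their combination with the explicit term that produces the intricate numerator.

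Finally I would settle the sign and the spectral interpretation. A direct inspection of the formula (for instance the asymptotics $\Lambda_1\sim L/2$ as $L\downarrow0$) shows $\Lambda_1>0$ for all $L>0$; hence for $d=1+\eps$ with $\eps>0$ the largest eigenvalue is positive and the monotone pinned fluxon is linearly unstable, whereas for $\eps<0$ it is negative, consistent with the stability asserted in Theorem~\ref{th.exists_stab_0<d<1}. Since $\Lpin^{(0)}$ has no discrete eigenvalue other than $0$, any additional eigenvalue of $\Lpin$ for small $\eps$ can only emerge from the edge $-1$ of the continuous spectrum; therefore all other eigenvalues remain near $-1$, and $\eps\Lambda_1+\O(\eps^2)$ is indeed the largest eigenvalue, completing the argument.
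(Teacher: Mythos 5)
Your strategy is essentially the paper's: regular first-order perturbation theory about the simple, isolated eigenvalue $0$ of $\L0 = D_{xx}-\cos\phi_0$, with $\Lambda_1=\langle \phi_0',L_1\phi_0'\rangle/\|\phi_0'\|^2$, $\|\phi_0'\|^2=8$, and the profile variation obtained by variation of constants from the homogeneous solutions $\sech x$ and $x\sech x+\sinh x$; the closing remarks (sign of $\Lambda_1$, and that any further eigenvalues must emerge from the edge $-1$ of the essential spectrum) also match the paper. The one genuine methodological difference is how the $\mathcal{O}(\eps)$ profile correction is pinned down. You solve the variational equation $\L0\dot\phi=\chi_{\{|x|<L\}}\sin\phi_0$ globally, imposing decay, oddness and $C^1$ matching at $x=\pm L$, which determines $\dot\phi$ uniquely and is arguably cleaner. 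The paper instead writes the perturbed fluxon as a shifted unperturbed fluxon $\phi_0(x\mp\eps x_1^*)$ outside the defect plus an inner correction $\phi_1$, extracts the boundary value $\phi_1(\mp L)$ from the $\eps$-expansion of the length relation $L=\int_{\iphi}^{\pi}d\phi/\phi_x$ (i.e.\ from the Hamiltonian bookkeeping of the existence construction), and then evaluates $\Lambda_1$ not by direct integration but by recognising the first-order forcing as $\L0$ applied to explicit functions ($\pm x_1^*\phi_0''$ outside, $\tfrac12\phi_{11}'$ inside) and integrating by parts down to boundary terms at $\pm L$. Both routes must produce the same $\dot\phi$ and the same $\Lambda_1$.

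One detail of your sketch does not survive scrutiny: you assert that the matching determinant is proportional to $v_0(L)w_0(L)\propto L\sech^2L+\tanh L$ and that this is what manufactures the denominator $16(L\sech^2L+\tanh L)$. In your own formulation the unknowns are the coefficient of $x\sech x+\sinh x$ inside (the $\sech x$ part being killed by oddness) and the coefficient of $\sech x$ outside, and the determinant of the two $C^1$ matching conditions at $x=L$ is the Wronskian of the two homogeneous solutions, which is a nonzero \emph{constant}; solving that system yields coefficients with no $(L\sech^2L+\tanh L)^{-1}$ in them. In the paper that factor enters quite differently, through division by $\psi_u(L)=L\sech L+\sinh L=\cosh L\,(L\sech^2L+\tanh L)$ when the inner coefficient $B$ is solved from the single boundary condition $\phi_1(-L)=\phi_1^*$. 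So the mechanism you name for recovering the stated denominator is not the right one, and you would need to carry the matching and the subsequent integrals through explicitly to land on (or check) the printed closed form. This is a local slip in an anticipated computation rather than a structural flaw in the method.
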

\begin{figure}[htb]
  \centering
  \includegraphics[height=0.5\textwidth]{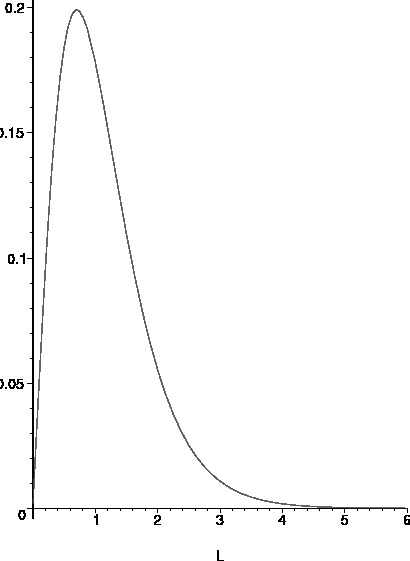}
  \caption{The eigenvalue factor $\Lambda_1$ as function of
  $L$.}
  \label{fig.eigenval_d=1}
\end{figure}
\begin{proof}
  The monotone pinned fluxon for $\gamma=0$ and $d=1+\eps$ with
  $\eps\ll1$ can be written as
\[
\phipin (x;L,0,1+\eps)= \left\{ 
  \begin{arrayl}
  \phi_0(x+\eps x_1^*(L,\eps)),&& x<-L,\\
  \phi_0(x)+\eps\phi_1(x;L,\eps), &&|x|<L,\\
  \phi_0(x-\eps x_1^*(L,\eps)),&& x>L.
  \end{arrayl}
\right.
\]
Here $\phi_1(x;L,\eps)$ is an odd function satisfying
\begin{equation}\label{eq.d=0_phi1}
\eps D_{xx} \phi_1 - (1+\eps)\,\sin(\phi_0+\eps\phi_1)+ \sin\phi_0 =0
,\, |x|<L, 
\end{equation}
and $x_1^*(L,\eps)$ is such that $\phi_0(-L+\eps
x_1^*)=\iphi=\phi_0(-L)+\eps\phi_1(-L)$. To find an approximation for
$\phi_1$ and $x_1^*$, we introduce the notation $\phi^*_0=
\phi_0(-L)$, and $\phi_1^*=\phi_1(-L)$, thus $\eps\phi_1^*=
\iphi-\phi^*_0$. The half length $L$ is
\[
L = \int_{\iphi}^{\pi} \frac{d\phi}{\sqrt{2(h+(1+\eps)(1-\cos\phi)}} =
\int_{\iphi}^{\pi}\frac{d\phi}{\sqrt{2(1-\cos\iphi
    +(1+\eps)(\cos\iphi-\cos\phi))}},
\]
where we used that $h=\eps(\cos\iphi-1)$.  With
$\iphi=\phi_0^*+\eps\phi_1^*$, this becomes
\[
\begin{arrayl}
L&=& \displaystyle\int_{\phi_0^*}^\pi \frac{d\phi}{\sqrt{2(1-\cos\phi)}} 
-\frac\eps2\int _{\phi_0^*}^\pi \
\frac{\cos\phi_0^*-\cos\phi}{(2(1-\cos\phi))^{3/2}} \,d\phi
- \int_{\phi_0^*}^{\phi_0^*+\eps\phi_1^*}
\frac{d\phi}{\sqrt{2(1-\cos\phi)}} +{\mathcal O}(\eps^2)\\[4mm]
&=& \displaystyle L -\frac\eps2\int _{\phi_0^*}^\pi \
\frac{\cos\phi_0^*-\cos\phi}{(2(1-\cos\phi))^{3/2}} \,d\phi
- \eps\int_{0}^{\phi_1^*}
\frac{d\phi}{\sqrt{2(1-\cos\phi^*_0)}} +{\mathcal O}(\eps^2).
\end{arrayl}
\]
Rearranging this expression and using that
$\cos\phi^*_0=1-2\sech^2(L)$, we get an approximation for $\phi_1^*$
\[
\phi_1^* = -\frac{\sech L}8 \,\left[ 2L(1+\tanh^2L)-2\tanh L\right]
+{\mathcal O}(\eps).
\]
Furthermore, $x_1^*$ is given by $\phi_0(-L+\eps
x_1^*)=\phi_0^*+\eps\phi_1^*$. An expansion of $\phi_0(-L+\eps x_1^*)$
shows that
\[
\phi_0^* +\eps x_1^* \phi_0'(-L) = \phi_0^*-\frac{\eps\sech L}8
\,\left[ 2L(1+\tanh^2L)-2\tanh L\right] +{\mathcal O}(\eps^2).
\]
With $\phi_0'(-L) = 2\sech(L)$, this shows that 
\[
x_1^*= -\frac{1}{16}
\,\left[ 2L(1+\tanh^2L)-2\tanh L\right] +{\mathcal O}(\eps).\]

Next, we derive an approximation for the function~$\phi_1$, using the
differential equation~(\ref{eq.d=0_phi1}).
Expanding~(\ref{eq.d=0_phi1}) in~$\eps$ gives
\begin{equation}\label{eq.inhom1}
D_{xx}\phi_1 -\phi_1\cos\phi_0 - \sin\phi_0 = {\mathcal O}(\eps)
\qmbox{or} \L0 \phi_1 = \sin\phi_0 + {\mathcal O}(\eps),
\end{equation}
with $\L0 = D_{xx} - \cos\phi_0$.  The homogeneous problem $\L0\psi=0$
has two independent solutions: $\psi_b (x) = \sech x$ and $\psi_u (x)
= x \, \sech x + \sinh x$. In this, $\psi_b(x) = \frac12 \frac{d}{dx}
\phi_0(x)$ is bounded and $\psi_u(x)$ unbounded as $x \to \pm \infty$.
By the variation-of-constants method, we find the general solution to
(\ref{eq.inhom1}),
\[
\phi_1(x) = 
x\,\sech x + A\,\sech x + B \left[x\,\sech x + \sinh x\right]
+{\mathcal O}(\eps), 
\]
with $A, B \in \mathbb{R}$.  As $\phi_1$ must be odd, it follows that
$A=0$. Furthermore, the boundary condition at $x=-L$ gives
$\phi_1^* = -B\,(L\,\sech L+\sinh L) - L\,\sech L + {\mathcal
  O}(\eps)$, hence
\[
B = \frac{\sech L(L\,\tanh^2 L-\tanh L-3L)}{4(L\,\sech  L+\sinh L)}.  
\]
Altogether we can conclude that $\phi_1(x) = \phi_{11}(x) + {\mathcal
  O}(\eps)$ {with}  
\[
\textstyle
\phi_{11}(x)
= x\,\sech x + \frac{\sech L(L\,\tanh^2 L-\tanh L-3L)}{4(L\,\sech
  L+\sinh L)}\, \left[x\,\sech x + \sinh x\right].
\]
 
To find the largest eigenvalue of $\Lpin(x;h,0,1+\eps)$, we will use
perturbation theory. First we observe that for any $L\geq 0$, the
linearisation $\L0:=\Lpin(x;L,0,1)$ about the fluxon $\phi_0$ has
largest eigenvalue $\Lambda=0$ with eigenfunction is $\phi_0'$.  We
have for $|x|<L$
\[
\Lpin (x;h,0,1+\eps) = D_{xx} - (1+\eps)\cos(\phi_0+\eps\phi_1) = 
\L0(x)   -\eps\, (\cos\phi_0 - \phi_1\,\sin\phi_0) +{\mathcal O}(\eps^2)
\]
and for $x<-L$ 
\[
\Lpin (x;h,0,1+\eps) = \L0(x+\eps x_1^*) = \L0(x) + \eps
x_1^*\phi_0'(x)\sin\phi_0 +{\mathcal O}(\eps^2).
\]
Thus the largest eigenvalue for $\Lpin (x;h,0,1+\eps)$ is
$\Lambda=0+\eps\Lambda_1+{\mathcal O}(\eps^2)$ and the eigenfunction
is $\psi=\phi'_0+\eps \psi_1+{\mathcal O}(\eps^2)$.  The equation for
$\Lambda_1$ and $\psi_1$ is
\begin{equation}\label{eq.perturb_L0}
\L0\psi_1 =  \Lambda_1 \phi'_0 + f_0(x), 
\qmbox{where} f_0(x)= \left\{
\begin{arrayl}
- x_1^*\sin\phi_0(\phi_0')^2 , &&x<-L\\   
\left(\cos\phi_0 - \phi_{11}\sin\phi_0\right)\phi_0', && |x|<L\\
x_1^*\sin\phi_0(\psi_0')^2 , &&x>L
\end{arrayl}\right.
\end{equation}
From~\eqref{eq.inhom1} and the fact that $\L0\phi_0'=0$, it follows that
\[
\begin{array}{ll}
\L0\phi_{11} = \sin\phi_0, &\qmbox{hence} \L0\phi_{11}' = 2\left(\cos\phi_0 -
  \phi_{11}\sin\phi_0\right) \phi_0'\\[2mm] 
\L0\phi_0'=  0 , &\qmbox{hence} \L0\phi_0'' = -\sin\phi_0(\phi'_0)^2.
\end{array}
\]
Thus, 
\[
f_0(x)= \L0\,\left\{
\begin{arrayl}
x_1^*\phi_0''(x), &&x<-L,\\   
\frac12\,\phi_{11}'(x), && |x|<L,\\
-x_1^*\phi_0''(x), &&x>L.
\end{arrayl}\right.
\]
To find~$\Lambda_1$, we multiply the eigenvalue
equation~\eqref{eq.perturb_L0} with $\phi_0'$, integrate it, use
integration by parts and $\L0\phi_0'=0$ and get
\[
\Lambda_1 \int_{-\infty}^\infty (\phi_0')^2\,dx = 
2x_1^*\left[(\phi_0''(L))^2-\phi_0'''(L)\phi_0'(L)\right]
-\phi_{11}''(L)\phi_0'(L)+ \phi_{11}'(L)\phi_0''(L). 
\]
with the explicit expressions for $\phi_0$ and $\phi_1$, we get the
expression in the Lemma.

As the linearisation~$\L0$ about the sine-Gordon fluxon has exactly
one eigenvalue (the one at zero), it follows immediately that if the
perturbed linear operator has more eigenvalues, they have come out of
the continuous spectrum, hence they are near~$-1$.
\end{proof}


\begin{thebibliography}{99}

  \bibitem{ablo73} M.J.\ Ablowitz, D.J. Kaup, A.C. Newell, and H. Segur, \emph{Method for Solving the Sine-Gordon Equation},
  Phys. Rev. Lett. 30 (1973), p.\ 1262.

  \bibitem{akoh85} H. Akoh, S. Sakai, A. Yagi and H. Hayakawa, \emph{Real time fluxon dynamics in josephson transmission line}, IEEE Trans. Magn. 21 (1985), pp.\ 737-740.

  \bibitem{andr06} O.Yu. Andreeva, T.L. Boyadjiev, and Yu.M. Shukrinov, \emph{Vortex structure in long Josephson junction with two inhomogeneities}, Physica C 460-462 (2007), pp.\ 1315--1316.

  \bibitem{bena02} A.\ Benabdallah, J.G.\ Caputo, and N. Flytzanis, \emph{The window Josephson junction: a coupled linear nonlinear system}, Physica D 161 (2002), pp.\ 79--101.

 \bibitem{bena02b} A. Benabdallah and J.G.\ Caputo, \emph{Influence of the passive region on Zero Field Steps for window Josephson junctions}, J. Appl. Phys. 92 (2002), pp.\ 3853--3862.

 \bibitem{boya06} T.L. Boyadjiev, E.G. Semerdjieva, and Yu.M. Shukrinov, \emph{Common features of vortex structure in long exponentially shaped Josephson junctions and Josephson junctions with inhomogeneities}, Physica C 460-462 (2007), pp.\ 1317--1318.

 \bibitem{boya07} T.L. Boyadjiev, O.Yu. Andreeva, E.G. Semerdjieva, and
   Yu.M. Shukrinov, \emph{Created by current states in long Josephson
     junctions}, Europhys.\ Lett.\ 83,  (2008), p.\ 47008.

 \bibitem{capu00} J.G.\ Caputo, N.\ Efraimidis, N.\ Flytzanis, N.\ Lazaridis, Y.\ Gaididei, I.\ Moulitsa, and E.\ Vavalis, \emph{Static properties and waveguide modes of a wide lateral window Josephson Junction}, Int.\ J.\ Mod.\ Phys.\ C 11 (2000), pp.\ 493--518.

 \bibitem{carr01} L.D.\ Carr, K.W.\ Mahmud, and W.P.\ Reinhardt, \emph{Tunable tunneling: An application of stationary states of Bose-Einstein condensates in traps of finite depth}, Phys.\ Rev.\ A 64 (2001), p.\ 033603.

 \bibitem{ddvgs07}
 G. Derks, A. Doelman, S.A. van Gils, and H. Susanto,
 \emph{Stability analysis of $\pi$-kinks in a $0$-$\pi$ Josephson junction}, SIAM J.\ Appl.\ Dyn.\ Systems 6 (2007), pp. 99--141.

 \bibitem{ddvgv03}
 G. Derks, A. Doelman, S.A. van Gils, and T. Visser,
 \emph{Travelling waves in a singularly perturbed sine-Gordon equation}, Physica D 180 (2003), pp.\ 40--70.

 \bibitem{good07} Roy H. Goodman and Richard Haberman, \emph{Chaotic Scattering and the n-Bounce Resonance in Solitary-Wave Interactions}, Phys. Rev. Lett. 98 (2007), p.\ 104103.

 \bibitem{good07b} Roy H. Goodman and Michael I. Weinstein, \emph{Stability and instability of nonlinear defect states in the coupled mode equations -- analytical and numerical study}, Phys. D. 237 (2008), pp. 2731-2760.
   
 \bibitem{HDKNU} P.J.A. van Heijster, A. Doelman, T.J. Kaper, Y.
   Nishiura, K.-I. Ueda, \emph{Pinned fronts in heterogeneous media of jump
   type}, Preprint, see
   \texttt{http://www.dam.brown.edu/people/heijster/} (2010).

 \bibitem{kivshar88}
 Yu.S. Kivshar, A.M. Kosevich, and O.A. Chubykalo, \emph{Finite-size effects in fluxon scattering by an inhomogeneity},
 Phys.\ Lett.\ A 129 (1988), pp.\ 449--452.

 \bibitem{kivs89} Yuri S. Kivshar and Boris A. Malomed, \emph{Dynamics of solitons in nearly integrable systems}, Rev. Mod. Phys. 61 (1989), pp.\ 763--915; \emph{ibid.}, 63 (1991), p.\ 211 (Addendum).

 \bibitem{kivs91} Yuri S. Kivshar, Zhang Fei, and Luis V\'azquez, \emph{Resonant soliton-impurity interactions}, Phys. Rev. Lett. 67 (1991), p.\ 1177.
  
 \bibitem{mmath_chris} C.J.K.~Knight, \emph{Microresistor pinning of
     $2k\pi$-fluxons in long Josephson junctions.}, MMath thesis,
   University of Surrey (2008).

 \bibitem{knight09}
 C.J.K.~Knight, G. Derks, A. Doelman,  and H. Susanto, \emph{Stability
 of stationary fronts in inhomogeneous wave equations}, in preparation.



 \bibitem{mann97}
 E.~Mann, \emph{Systematic perturbation theory for sine-Gordon solitons without use of inverse scattering methods},
 J.\ Phys.\ A: Math.\ Gen. 30 (1997), pp.\ 1227--1241.

\bibitem{MJS}
R. Marangell, C.K.R.T. Jones, and H. Susanto, \emph{Localized standing
  waves in inhomogeneous Schrodinger equations}, Nonlinearity 23
(2010), pp.\ 2059. 

 \bibitem{mcla78} D.W.~McLaughlin and A.C.~Scott, \emph{Perturbation analysis of fluxon dynamics}, Phys.\ Rev.\ A 18 (1978), pp.\ 1652--1679.

 \bibitem{park04} N.G. Parker, \emph{Numerical Studies of Vortices and Dark Solitons in Atomic Bose-Einstein Condensates}, PhD thesis, Durham University (2004)

 \bibitem{piet05} B. Piette, W.J. Zakrzewski, and J. Brand,
   \emph{Scattering of topological solitons on holes and barriers}, J.\
   Phys.\ A 38 (2005), pp.\ 10403-10412.

 \bibitem{piet07} B. Piette and W.J. Zakrzewski, \emph{Dynamical
     properties of a Soliton in a Potential Well}, J.\ Phys.\ A 40
   (2007), pp.\ 329-346.

 \bibitem{piet07_2} B. Piette and W.J. Zakrzewski, \emph{Scattering of
     Sine-Gordon kinks on potential wells}, J. Phys. A 40 (2007), pp.\  5995-6010.

 \bibitem{saka04} H. Sakaguchi and M. Tamura, \emph{Scattering and Trapping of Nonlinear Schr\"odinger Solitons in External Potentials}, J.\ Phys.\ Soc.\ Japan 73 (2004), pp. 503-506.

 \bibitem{saka85} S. Sakai, H. Akoh and H. Hayakawa, \emph{Fluxon transfer devices}, Japan. J. Appl. Phys. 24 (1985), p.\  L771.

 \bibitem{serp87} I.L. Serpuchenko and A.V. Ustinov, \emph{Experimental observation of the fine structure on the current-voltage characteristics of long Josephson junctions with a lattice of inhomogeneities}, Sov. Phys. JETP Lett. 46 (1987), p.\ 549 [Pisma Zh. Eksp. Teor. Fiz. 46 (1987) 435].

 \bibitem{susa03} H. Susanto, S. A. van Gils, T. P. P. Visser, Ariando, H. J. H. Smilde, and H. Hilgenkamp,
 \emph{Static semifluxons in a long Josephson junction with $\pi$-discontinuity points},
 Phys. Rev. B. {\bf68} (2003), pp. 104501--104508.

 \bibitem{Titch} E.C.~Titchmarsh, \emph{Eigenfunction expansions associated with second-order differential equations}, 2nd edition, Oxford University Press, 1962.

 \bibitem{vyst88} A.N. Vystavkin, Yu.F. Drachevskii, V.P. Koshelets, and I.L. Serpuchenko, \emph{First observation of static bound states of fluxons in long Josephson junctions with inhomogeneities}, Sov. J. Low Temp. Phys. 14 (1988), pp.\ 357-358 [
 Fiz. Nizk. Temp. 14 (1988), pp.\ 646-649.]

 \bibitem{weid06} M. Weides, M. Kemmler, E. Goldobin, H. Kohlstedt, R. Waser, D. Koelle, and R. Kleiner, \emph{$0$-$\pi$ Josephson tunnel junctions with ferromagnetic barrier}, Phys. Rev. Lett. 97 (2006), p.\ 247001.
  
 \bibitem{weid07} M. Weides, H. Kohlstedt, R. Waser, M. Kemmler, J.
   Pfeiffer, D. Koelle, R. Kleiner, and E. Goldobin,
   \emph{Ferromagnetic $0-\pi$ Josephson junctions}, App.\ Phys.\ A 89,
   (2007) pp.\ 613-617.

\end{thebibliography}
\end{document}